\newtheorem{defi}{Definition}
\newtheorem{prop}{Proposition}
\newtheorem{teo}{Theorem}
\providecommand{\keywords}[1]{\textit{Keywords:  } #1}
\newcommand{\grmor}{\;|\;}
\newcommand{\skipk}{\textsf{skip}}
\newcommand{\assigk}[2]{\textsf{#1:=#2}}
\newcommand{\loadk}[2]{\textsf{#1:=[#2]}}
\newcommand{\muttk}[2]{\textsf{[#1]:=#2}}
\newcommand{\consk}[2]{\textsf{#1:=cons(#2)}}
\newcommand{\dispk}[1]{\textsf{disp(#1)}}
\newcommand{\seqk}[2]{#1\textsf{\:;\:}#2}
\newcommand{\park}[2]{#1\textsf{\:}\|\textsf{\:}#2}
\newcommand{\ifk}[3]{\textsf{if\;}#1\textsf{\;then\;}#2\textsf{\;else\;}#3}
\newcommand{\whik}[2]{\textsf{while\;}#1\textsf{\;do\;}#2}
\newcommand{\resk}[2]{\textsf{resource\;}#1\textsf{\;in\;}#2}
\newcommand{\withk}[3]{\textsf{with\;}#1\textsf{\;when\;}#2\textsf{\;do\;}#3}
\newcommand{\withkk}[2]{\textsf{with\;}#1\textsf{\;do\;}#2}
\newcommand{\withink}[2]{\textsf{within\;}#1\textsf{\;do\;}#2}
\newcommand{\abok}{\textsf{abort}}
\newcommand{\empk}{\texttt{emp}}
\title{Revisiting Concurrent Separation Logic}
\author{Pedro Soares\thanks{Universidade do Porto, PT}, António Ravara\thanks{CITI \& DI-FCT, Universidade Nova de Lisboa, PT} and Simão Melo de Sousa\thanks{LISP \& LIACC\& DI-FE, Universidade da Beira Interior, PT
\newline
This work was partially funded by Funda\c{c}\~{a}o para a Ci\^{e}ncia e Tecnologia through AVIACC project, grant PTDC/EIA-CCO/117590, and CITI/FCT/UNL, grant Pest-OE/EEI/UI0527/2014.
}}
\date{\vspace{-13mm}}
\begin{document}

\maketitle

\begin{abstract}
  We present a new soundness proof of Concurrent Separation Logic
  (CSL) based on a structural operational semantics (SOS). We build on
  two previous proofs and develop new auxiliary notions to achieve the
  goal. One uses a denotational semantics (based on traces). The other
  is based on SOS, but was obtained only for a fragment of the logic
  --- the Disjoint CSL --- which disallows modifying shared variables
  between concurrent threads.  In this work, we lift such a
  restriction, proving the soundness of full CSL with respect to a
  SOS. Thus contributing to the development of tools able of ensuring
  the correctness of realistic concurrent programs. Moreover, given
  that we used SOS, such tools can be well-integrated in programming
  environments and even incorporated in compilers.
\end{abstract}

\vspace{1em}
\hspace{-1.8em}
\keywords{Concurrent Separation Logic;  Structural Operational Semantics; Soundness Proof.}

\section{Introduction}

The aim of this work is to present a new soundness proof for Concurrent Separation Logic \cite{hearn}, with respect to a structural operational semantics~\cite{plot}. This work adapts and extends the results presented by Brookes~\cite{bro2} and by Vafeiadis~\cite{vaf}.


The axiomatic verification of programs goes back to Hoare Logic \cite{hoar}. This seminal work introduces two key ideas, i) the specification of programs by means of what is known by a Hoare triple: $\{P\}C\{Q\}$, where $P$ and $Q$ are first order formulae, called the precondition and postcondition respectively, and $C$ is an imperative program; ii) a deductive proof system to ensure the partial correctness of programs. A program is partially correct, if every execution of $C$ from a state respecting the precondition does not abort and when it terminates the postcondition holds for its final state.  The state for this logic is formed only by the store, i.e. a partial function that records the value of each variable. Hoare's work gave rise to numerous deductive systems, for instance the Owicki-Gries method (\cite{owi, owigrie}) and Separation Logic (\cite{rey2, rey}).


The Owicki-Gries method is one of the first attempts to give a resource sensitive proof system for concurrent programs. To do this, Owicki and Gries augmented the programming language with i) parallel composition, $\park{C}{C}$; ii) local resources, $\resk{r}{C}$; and iii) a critical region, $\withk{r}{B}{C}$, where $r$ denotes a resource. Each resource has a mutual exclusion lock, an assertion, called invariant, and a set of variables, called protected variables.

The execution of parallel composition non-deterministically chooses one of the commands to execute first. As usual, the parallel execution is assumed to be weakly fair, i.e. if a command is continually available to be executed, then this command will be eventually selected.
The resource command declares a local variable $r$ to be used in $C$. 
The critical region command waits for the availability of the resource $r$,  and when $B$  holds,  it acquires $r$ and starts the execution of $C$; the resource $r$ is released upon the execution of $C$ terminates.


The programs derivable by the Owicki-Gries method have to preserve the resource invariants when the resource is free, and respect the protection of variables by resources, i.e. a program needs to acquire all resources protecting a variable, before the program can change that variable.
The parallel rule proposed by Owicki \cite{owi} requires that every variable occurring in the derivation proof of one command cannot be changed by another command, except for variables protected by a resource such that the variables only appear inside the critical region's proof. Thus, the Owicki-Gries method is not compositional.




Separation Logic (SL) supports reasoning about imperative programs with shared mutable data and consequently about dynamical data structures, such as lists and trees.
In order to do this, the assertion and program languages used by Hoare had to be augmented.
The assertions are extended with the constructs $\empk$, the empty memory; $e\mapsto e'$, a single memory cell $e$ with the value $e'$; and $P\ast Q$, two disjoint memory's parts such that one satisfies $P$ and the other satisfies $Q$. In this settings, the memory is usually represented by the heap --- a partial function from the set of locations to the set of values. The store and the heap together define the state of a program.

The programing language is augmented with commands for memory manipulation. Naturally, the proof system is also extended with a rule for each new command and with a frame rule, used to enlarge the portion of memory considered in the condition of a specification. This rule is crucial to achieve local reasoning: program specifications only need to consider the relevant memory for their execution.
Therefore, this local reasoning mechanism can be used to establish the partial correctness of disjoint concurrent programs, i.e. concurrent program which does not change shared variables.

In order to prove the soundness of the frame rule, and thus of local reasoning, it is sufficient to ensure the validity of two key properties: safety monotonicity and the frame property. Safety monotonicity states that if an execution does not abort for a given memory portion, then the execution does not abort for any memory portion that contains the initial one. The frame property says that if a command does not abort for a given memory portion, then every execution on a larger memory corresponds to an execution on the initial memory.

Recently provers based on separation logic were adopted in real industrial projects, Facebook's infer being the most prominent of such tools \cite{CDDGHLOP15}.

Since the introduction of SL, different authors adapted it to the verification of concurrent programs. Vafeiadis and Parkinson introduced RGSep, combining SL with Rely/Guarantee reasoning \cite{vaf2}. Reddy and Reynolds introduced a syntactic control of interference in SL \cite{red}, borrowing ideas from works on fractional permissions \cite{boy}. O'Hearn  proposed Concurrent Separation Logic (CSL), combining SL with the Owicki-Gries method \cite{hearn}.
Brookes formalized CSL, extending the traditional Hoare triples with a resource context $\Gamma$ and a rely-set $A$, what leads to specifications of the form $\Gamma\models_{A}\{P\}C\{Q\}$.
A resource context records the invariant and the protected variables of each resource.
A rely-set consists of all variables relevant for its derivation tree. This set ensures that CSL is a compositional proof method, proved sound with respect to a denotational semantics based on traces, where a program state is represented by a store, a heap and sets of resources, expressing resource ownership \cite{bro2}. Actually, the rely-set was introduced after Wehrman and Berdine discovered a counter-example to the initial version of CSL \cite{bro1}, and it is analogous to the set of variables used by Owicki and Gries to check non-interference in their parallel rule.

Alternatively, Vafeiadis proposed a structural operational semantics (SOS) for concurrent programs synchronizing via resources, and proved the soundness of a part of CSL, the Disjoint CSL (DCSL) \cite{vaf}. DCSL and CSL have different side conditions for the parallel rule. Concurrent threads, in DCSL, must not modify all variables that appear in other Hoare triples, however concurrent threads, in CSL, can not modify variables that belongs to other rely-sets.

Our aim is to remove the disjointness condition and obtain a soundness proof using a SOS for the full CSL (Section \ref{soundness}). The goal is relevant because CSL has been adopted as the basis for most modern program logics, and it is a step in the development of more expressive provers well integrated in software development environments and compilers.  Not only does it allows proving correct concurrent programs manipulating shared resources, but also provides techniques to equip compilers with mechanisms of detecting data-races.
Concretely, the contributions of this work are the following:
\begin{enumerate}
	\item A novel notion of environment transition that simulates actions made by other threads. We define it taking into account the rely-set, available resources and their invariants (Section \ref{envtrasec}). This relation is crucial to study the 
	soundness of the parallel rule;%
        \footnote{Vafeadis used a completely different notion of environment transition (in RGSep \cite{vaf2}).}
	\item The resource configuration that expresses ownership. It is defined by three sets:  owned resources, locked resources, and available resources (Section \ref{progtrans}). A program state is formed by a store, a heap and a resource configuration. Brookes also used sets of resources to represent resources ownership~\cite{bro2};
\item Illustrative examples that we prove correct in CSL, showing the proof system's expressiveness (Section \ref{motiexa}).
\end{enumerate}

This paper is an extended version of \cite{pdp-4pad15}. 
We present herein more examples and sketches of the proofs of the results reported in the short paper (which does not present proofs).
Further examples and proofs in full detail can be found in a technical report~\cite{RR-DCC-11-2014}.

The paper is organized as follows: first, we review the syntax of concurrent resource-oriented programs with shared mutable data (Section \ref{prosubsec}) and Concurrent Separation Logic proof system (Section \ref{rules}), following the work of Brookes \cite{bro2}. Next, we present a structural operational semantics for the previous programs (Section \ref{progtrans}), along the lines of the work of Vafeiadis \cite{vaf}. We state important results over this operational semantics for the soundness proof, including safety monotonicity and frame property (Section \ref{smfp}). Afterwards, we introduce the environment transition (Section \ref{envtrasec}). Finally, we prove the soundness of Concurrent Separation Logic with respect to the operational semantics we defined (Section \ref{soundness}).
\section{Concurrent Separation Logic}\label{ch2}

We revisit Concurrent Separation Logic (CSL), as presented by Brookes \cite{bro2}. First, we define the assertion language, then the syntax of commands for concurrent programs, and finally the inference rules for CSL.

\subsection{Assertion Language}

Consider a set \textbf{Var} of \emph{variables}, ranged over by $\textsf{x},\textsf{y},\ldots$, and a set \textbf{Val} of \emph{values}, that includes the integers and the value $null$. These meta-variables may be indexed or primed.

\begin{figure}[h]
\begin{eqnarray*}
  e &\equiv& x\grmor n\grmor e_1+e_2\grmor e_1-e_2\grmor e_1\times e_2\\
  B &\equiv& \texttt{true} \grmor \texttt{false} \grmor e_1=e_2\grmor e_1<e_2\grmor  B_1\wedge B_2 \grmor \neg B\\
  P &\equiv& B \grmor \neg P \grmor P_1 \wedge P_2 \grmor \forall x P \grmor \empk \grmor e \mapsto e'_1,e'_2,\dots,e'_n \grmor P_1 \ast P_2 
\end{eqnarray*}
\caption{Syntax of the Assertion Language}
\label{fig:csl}
\end{figure}
The grammar in Figure~\ref{fig:csl} defines the syntax of the assertion language, where $e \mapsto e'_1,e'_2,\dots,e'_n$ denotes $e \mapsto e'_1\ast (e+1) \mapsto e'_2\ast\dots \ast (e+n-1)\mapsto e'_n$.
We assume the usual definitions of \emph{free variables} of an assertion (FV).

We use the definition of SL for the \emph{validity} of an assertion with respect to the pair $(s,h)$, where $s$ and $h$ are denoted by \emph{storage} and \emph{heap}, respectively, and given by the functions:
$$s:\textbf{Var}\rightarrow \textbf{Val}, \quad h:\textbf{Loc} \rightharpoonup \textbf{Val},$$
where $\textbf{Loc}\subset\mathbb{N}$ is the set of current locations, for details see e.g. \cite[Section 2]{rey2}. The set of those pairs is denoted by $\mathcal{S}$.

For an assertion $P$, we write $s,h \models P$ if the assertion is valid for $(s,h)\in \mathcal{S}$, and we write $\models P$ if $s,h\models P$ for every $(s,h)\in\mathcal{S}$. We state a popular result about the validity of assertion, see e.g. \cite[Proposition 4.2]{vaf}.

\begin{prop}\label{astneval}
Let $P$ be an assertion, $(s,h),(s',h)\in\mathcal{S}$.
If $s(x)=s'(x)$, for every $x\in FV(P)$, then $$s, h \models P \ \  \textrm{iff}\ \ s', h \models P.$$
\end{prop}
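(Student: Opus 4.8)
The plan is to prove the statement by structural induction on the assertion $P$, after establishing the analogous fact for expressions as an auxiliary step. First I would extend the store $s$ to a valuation on expressions, writing $s(e)$ for the value obtained by replacing each variable $x$ occurring in $e$ by $s(x)$ and evaluating the resulting arithmetic expression; a routine induction on the structure of $e$ (and subsequently on the boolean expressions $B$) shows that if $s(x)=s'(x)$ for every $x\in FV(e)$, then $s(e)=s'(e)$, and likewise that every $B$ has the same truth value under $s$ and $s'$ whenever they agree on $FV(B)$. This auxiliary result takes care of the atomic assertions.

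With this in hand, I would carry out the induction on the structure of $P$. The base cases are $\empk$, whose validity depends only on $h$ (indeed $FV(\empk)=\emptyset$), the pure boolean assertions $B$, and the points-to assertions $e\mapsto e'_1,\dots,e'_n$; for the latter two I would invoke the expression step, observing that $s$ and $s'$ assign the same value to every expression appearing in them, so exactly the same locations and stored values are described. The propositional cases $\neg P$ and $P_1\wedge P_2$ follow immediately from the induction hypothesis, since $FV(P_i)\subseteq FV(P)$ and the heap is untouched. For the separating conjunction $P_1\ast P_2$, I would use that any splitting $h=h_1\cdot h_2$ of the heap is common to both sides (the store plays no role in the split) and that $FV(P_i)\subseteq FV(P_1\ast P_2)$, so $s$ and $s'$ still agree on $FV(P_i)$ and the induction hypothesis applies to each conjunct over the corresponding subheap.

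The one case demanding care --- and the main obstacle --- is the universal quantifier $\forall x\,P$, where $s,h\models\forall x\,P$ means $s[x\mapsto v],h\models P$ for every $v\in\textbf{Val}$. Here I cannot apply the induction hypothesis to $s$ and $s'$ directly, because $x$ may be free in $P$. The key observation is that $FV(\forall x\,P)=FV(P)\setminus\{x\}$, and that for each fixed $v$ the updated stores $s[x\mapsto v]$ and $s'[x\mapsto v]$ agree on all of $FV(P)$: on $x$ both return $v$, while for any $y\in FV(P)$ with $y\neq x$ we have $y\in FV(\forall x\,P)$, whence $s(y)=s'(y)$, and the update at $x$ leaves these entries unchanged. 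The induction hypothesis then gives $s[x\mapsto v],h\models P$ iff $s'[x\mapsto v],h\models P$ for every $v$, and quantifying over $v$ yields the desired equivalence for $\forall x\,P$. This closes the induction and establishes the proposition.
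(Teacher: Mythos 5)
Your proof is correct: the structural induction, the auxiliary lemma on expression evaluation, and in particular the careful treatment of the quantifier case (applying the induction hypothesis to the updated stores $s[x\mapsto v]$ and $s'[x\mapsto v]$, which agree on all of $FV(P)$) are all sound. The paper itself does not prove this proposition --- it simply cites it as a known result (Proposition 4.2 of Vafeiadis) --- and your argument is precisely the standard proof underlying that citation, so there is nothing to contrast.
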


For a given heap and storage, the precise assertions uniquely determine the subheap that verifies it. The heap $h'$ is a \emph{subheap} of $h$, if the domain of $h'$ is contained in the domain of $h$ and the evaluation of $h'$ coincides with the evaluation of $h$ .
\begin{defi}\label{defastpre}
 We say that an assertion $P$ is \emph{precise} if for every $(s,h)\in \mathcal{S}$ there is at most one subheap $h'$ of $h$ such that $s,h'\models P$.
\end{defi}

Let $\textbf{Res}$ be the set of resources names, which is disjoint from \textbf{Var}. The \emph{resource context} $\Gamma$ is used to represent a shared state. The resource context $\Gamma$ has the form 
\begin{equation}\label{resform}
r_1(X_1):R_1,r_2(X_2):R_2,\ldots, r_n(X_n):R_n,
\end{equation}
where $r_i\in \textbf{Res}$ are distinct, $R_i$ are precise assertions and $X_i\subseteq \textbf{Var}$ such that $FV(R_i)\subseteq X_i$, for each $i=1,2,\ldots, n$. The assertion $R_i$ represents a resource invariant. Since CSL has the conjunction rule, the assertions $R_i$ must be precise, as exemplified by Reynolds \cite[Section 11]{hearn}.

Let $Res(\Gamma)$ denote the set of \emph{resources names} appearing in $\Gamma$, ranged over by $r_i$.
Furthermore, let $PV(\Gamma)$ denote the set of all \emph{protected variables} by resources in $\Gamma$, and $PV(r_i):=X_i$ denote the set of protected variables by $r_i$.

\subsection{Programming Language}\label{prosubsec}

The language includes the \emph{basic commands} to manipulate storage and heap:
$$c\equiv\assigk{x}{e} \grmor \loadk{x}{e} \grmor \muttk{e}{e'} \grmor \consk{x}{$\overline{\textsf{e}}$}\grmor \dispk{e}.$$
The basic commands use the notation of SL. The bracket parenthesis denotes an access to a heap location. For a vector $\overline{\textsf{e}}=(\textsf{e}_1,\dots,\textsf{e}_n)$, $\consk{x}{$\overline{\textsf{e}}$}$ allocates $n$ sequential locations with values $\textsf{e}_1,\dots,\textsf{e}_n$. And  $\dispk{e}$ frees a location.

The following grammar defines the syntax of the \emph{programming language}, $\mathcal{C}$.
\begin{eqnarray*}
  C &\equiv& \skipk \grmor c \grmor  \seqk{C_{1}}{C_{2}} \grmor \ifk{B}{C_1}{C_2} \grmor \whik{B}{C} \grmor \\
 & &  \resk{r}{C} \grmor \withk{r}{B}{C} \grmor \park{C_1}{C_2}
\end{eqnarray*}

The set of \emph{modified variables} by a program $C$, $mod(C)$, consists of all variables $\textsf{x}$ such that the program $C$ has one of the following commands: $\assigk{x}{e}$, $\loadk{x}{e}$ or $\consk{x}{e}$.

 The set of \emph{resources occurring} in a command $C$ is denoted by $Res(C)$ and it consists of all resources names $r$ such that $C$ has one of the following commands: $\withk{r}{B}{C}$, $\resk{r}{C}$.

The substitution on $C$ of a resource name $r$ for a resource name $r'$ not occurring in $C$ is denoted by $C[r'/r]$. 

The set of auxiliary variables have been useful to deduce more specific post conditions for a program's specification, see e.g. \cite{owigrie}. Those variables do not impact the flow of the program. Next, we give the definition of auxiliary variables for a command.

\begin{defi}
Let $C\in \mathcal{C}$. We say that $X$ is a set of \emph{auxiliary variables} for $C$ if every occurrence of $x\in X$ in $C$ is inside an assignment to a variable in $X$. 
\end{defi}

After we have used the auxiliary variables to deduce a specification, we want to remove them from the program. We replace every assignment to an auxiliary variable by the command $\skipk$. This replacement is denoted by $C\setminus X$.

\subsection{Inference rules}\label{rules}

In this section, we present the most relevant inference rules for CSL as stated by Brookes \cite{bro2}.
First, we define what is a well-formed specification in CSL.
\begin{defi}
Let $\Gamma$ be a resource context, $A\subset \textbf{Var}$, $P,Q$ assertions and $C\in \mathcal{C}$. The \emph{specification of a program} has the form 
$$\Gamma \vdash_{A} \{P\} C\{Q\}.$$

Moreover we say that the specification of the program is \emph{well-formed}, if $FV(P,Q)\subseteq A$ and $FV(C)\subseteq A \cup PV(\Gamma)$.
\end{defi}

\begin{figure*}[h]
$$
\infer[\mathrm{(SKP)}]{\Gamma \vdash_{A} \{P\} \skipk \{P\}}{}
\quad
\infer[\mathrm{(SEQ)}]{\Gamma \vdash_{A_1\cup A_2} \{P_1\} \seqk{C_1}{C_2} \{P_3\}}{\Gamma \vdash_{A_1} \{P_1\} C_1 \{P_2\} & \Gamma \vdash_{A_2} \{P_2\} C_2 \{P_3\}}
$$
\vspace{1mm}
\resizebox{\textwidth}{\height}{$$
\infer[\mathrm{(BC)}]{\Gamma \vdash_{A} \{P\} c \{Q\}}{mod(c)\notin PV(\Gamma) & \vdash^{SL} \{P\} c \{Q\}}
\quad
\infer[\mathrm{(FRA)}]{\Gamma \vdash_{A\cup FV(R)} \{P\ast R\} C \{Q\ast R\}}{\Gamma \vdash_{A} \{P\} C \{Q\} & mod(C)\cap FV(R)=\emptyset}
$$}

\resizebox{\textwidth}{\height}{$$\infer[\mathrm{(LP)}]{\Gamma \vdash_{A} \{P\} \whik{B}{C} \{P\wedge \neg B\}}{\Gamma \vdash_{A} \{P\wedge B\} C \{P\}}
\quad
\infer[\mathrm{(CONJ)}]{\Gamma \vdash_{A_1\cup A_2} \{P_1\wedge P_2\} C \{Q_1\wedge Q_2\}}{\Gamma \vdash_{A_1} \{P_1\} C \{Q_1\} & \Gamma \vdash_{A_2} \{P_2\} C \{Q_2\}}
$$}
$$
\infer[\mathrm{(IF)}]{\Gamma \vdash_{A_1\cup A_2} \{P\} \ifk{B}{C_1}{C_2} \{Q\}}{\Gamma \vdash_{A_1} \{P\wedge B\} C_1 \{Q\} & \Gamma \vdash_{A_2} \{P\wedge \neg B\} C_2 \{Q\}}$$
$$\infer[\mathrm{(CONS)}]{\Gamma \vdash_{A'} \{P'\} C \{Q'\}}{\Gamma \vdash_{A} \{P\} C \{Q\} & \models P'\Rightarrow P & \models Q\Rightarrow Q' & A\subseteq A'}$$
\resizebox{\textwidth}{\height}{$$\infer[\mathrm{(AUX)}]{\Gamma \vdash_{A} \{P\} C\setminus X \{Q\}}{\Gamma \vdash_{A\cup X} \{P\} C \{Q\} & X\cap FV(P,Q)=\emptyset & X\cap PV(\Gamma)=\emptyset & X\textrm{ is auxiliary for } C}$$}
$$
\infer[\mathrm{(REN)}]{\Gamma \vdash_{A} \{P\} C \{Q\}}{\Gamma[r'/r] \vdash_{A} \{P\} C[r'/r] \{Q\} & r' \notin Res(C) & r'\notin Res(\Gamma) }
$$
$$
\infer[\mathrm{(PAR)}]{\Gamma \vdash_{A_1\cup A_2} \{P_1\ast P_2\} \park{C_1}{C_2} \{Q_1\ast Q_2\}}{\Gamma \vdash_{A_1} \{P_1\} C_1 \{Q_1\} &  \Gamma \vdash_{A_2} \{P_2\} C_2 \{Q_2\} &  (\ref{parsc})}
$$
\resizebox{\textwidth}{\height}{$$\infer[\mathrm{(CR)}]{\Gamma, r(X):R \vdash_{A} \{P\} \withk{r}{B}{C} \{Q\}}{\Gamma \vdash_{A\cup X} \{(P \wedge B)\ast R\} C \{Q\ast R\}}
\quad
\infer[\mathrm{(RES)}]{\Gamma \vdash_{A\cup X} \{P\ast R\} \resk{r}{C} \{Q\ast R\}}{\Gamma, r(X):R \vdash_{A} \{P\} C \{Q\}}
$$}
\caption{Rules of the Inference System}
\label{fig:infrules}
\end{figure*}
In Figure~\ref{fig:infrules}, we present some inference rules of CSL. The inference rules are only applied for well-formed specifications. The specifications derivable by SL are denoted by $\vdash^{SL}$.


The rule for basic commands are inherited from SL by adding the rely-set (containing all relevant variables for the derivation) and imposing that protected variables are not modified.
The sequential and frame rules are very similar to the respective rules of SL, but the rely-set needs to take into account the rely-sets of both programs or the variables of the framed assertion.

In the critical region rule, if the command inside the critical region preserves the invariant, when $B$ is initially respected, then the resource context can be expanded by $r$. Note that the rely-set does not need to include all protected variables, however the well-formedness of the specification must be preserved.
In the local resource rule, we are able to take out a resource from the assumption's resource context to the conclusion's local condition.
The parallel rule $\mathrm{(PAR)}$ has the side condition below that restricts the interference between programs. 
\begin{equation}\label{parsc}
mod(C_1)\cap A_2=mod(C_2)\cap A_1=\emptyset.\tag{*}
\end{equation}

In order to obtain the inference rules of DCSL we erase the rely-set from the CSL inference rules and change the side condition in the parallel rule $\mathrm{(PAR)}$ to the following condition:
$$mod(C_1)\cap FV(P_2,C_2,Q_2)=mod(C_2)\cap FV(P_1,C_1,Q_1)=\emptyset.$$

Note that every valid specification in DCSL is also valid in CSL, considering that $A_i=FV(P_i,C_i,Q_i)$, for $i=1,2$. 

\section{Motivating examples}\label{motiexa}

\subsection{Semaphore}\label{semexa}

In this example, we present a simple binary semaphore for two threads. Similar examples were studied in \cite[Section\; 4]{hearn}. 
We use the resource invariant to infer the properties of mutual exclusion, absence of deadlocks and starvation. 

This example is a solution to the critical region problem in \cite[Section 3]{ben}.
In contrast to the usual solutions, we
obtain a simpler solution for the critical region problem, due to the
command $\withk{r}{B}{C}$.

We have the following specifications for the thread $p$:

$$se(\textsf{p},\textsf{q}): S \vdash_{\emptyset} \{\empk \}P(p)\{\empk\},$$
$$se(\textsf{p},\textsf{q}): S \vdash_{\emptyset} \{\empk \}V(p)\{\empk\},$$
where 
\begin{itemize}
	\item $S\equiv [(\textsf{p}=0\wedge  \textsf{q}=0)\vee (\textsf{p}=1\wedge  \textsf{q}=0)\vee(\textsf{p}=0\wedge  \textsf{q}=1)] \wedge \empk$,
	\item $P(p) \equiv\withk{se}{\textsf{q}=0}{\assigk{p}{1}}$,
	\item $V(p)\equiv \withkk{se}{\assigk{p}{0}}$.
\end{itemize}

Consider the next well-formed specification of programs.

\begin{eqnarray*}
&\vdash_{\{\textsf{p},\textsf{q}\}}&\{(\empk \wedge \textsf{q}=0)\ast S\}\\&&
\{ \textsf{q}=0 \wedge \empk\}\\&&
\assigk{p}{1}\\&&
\{ \textsf{p}=1 \wedge \textsf{q}=0 \wedge \empk\}\\&&
\{\empk \ast S\}
\end{eqnarray*}

and

\begin{eqnarray*}
&\vdash_{\{\textsf{p},\textsf{q}\}}&\{\empk \ast S\}\\&&
\{  [(0=0\wedge  \textsf{q}=0)\vee (0=0\wedge  \textsf{q}=1)] \wedge \empk\}\\&&
\assigk{p}{0}\\&&
\{ [(\textsf{p}=0\wedge  \textsf{q}=0)\vee(\textsf{p}=0\wedge  \textsf{q}=1)] \wedge \empk\}\\&&
\{\empk \ast S\}
\end{eqnarray*}

Applying the $(CR)$ rule we obtain the desired
specifications. Considering the analogous programs and derivations for
the thread $q$ we obtain:
$$se(\textsf{p},\textsf{q}): S \vdash_{\emptyset} \{\empk \}P(q)\{\empk\},$$
	$$se(\textsf{p},\textsf{q}): S \vdash_{\emptyset} \{\empk \}V(q)\{\empk\},$$
where 
\begin{itemize}
\item $P(q) \equiv\withk{se}{\textsf{p}=0}{\assigk{q}{1}}$ and
\item $V(q)\equiv  \withkk{se}{\assigk{q}{0}}$.
\end{itemize}

Using the $(PAR)$ rule, we obtain the next specification.

$$se(\textsf{p},\textsf{q}): S \vdash_{\emptyset} \{\empk \} \park{(\seqk{P(p)}{V(p)})}{(\seqk{P(q)}{V(q)})} \{\empk \}$$

This program is a solution to the critical region problem. Next, we add a Critical Region between the operation $P$ and $Q$ in the previous specification. And, we discuss
the properties of mutual exclusion, absence of deadlocks and starvation.

Consider a Critical Region (C.R.) and the following program
\begin{center}
\begin{tabular}{c}
\seqk{\assigk{p}{0}}{\assigk{q}{0}};\\
\resk{se}{}\\
\begin{tabular}{c||c}
\whik{true}{}& \whik{true}{}\\
P(p); & P(q);  \\
C.R.; & C.R.;\\
V(p) & V(q)
\end{tabular}
\end{tabular}
\end{center}

The execution of the program is inside the critical region for the thread $p$ ($q$), if $p=1$ ($q=1$, respectively). The mutual exclusion follows from the resource invariant $S$.

The execution of this program is free from deadlock, because the resource invariant implies that one of the control variables $\textsf{p}=0 \vee\textsf{q}=0$.

After the execution of the critical region, the execution of $V(p)$ or $V(q)$  allows the execution of $P(p)$ and $P(q)$. Assuming the fairness of the parallel execution, the program is free from starvation.

\subsection{Concurrent stack}
First, to show that DCSL is not as expressive as CSL, we present an
example of parallel operations over a stack that cannot be proved
correct in the former but can be in the latter.

Let us specify a stack with operations \textit{pop} and
\textit{push}. The stack is represented by the resource $st$ in the
following way
$$st(\{ \textsf{z}, \textsf{y}\}): stack(\textsf{z}),$$
where $\{ \textsf{z}, \textsf{y}\}$ is the set of variables protected
by $st$, and $stack(\textsf{z})$ is defined by
$$(\textsf{z}=null\wedge \empk) \vee (\exists_{a,b} \textsf{z}\mapsto a,b \ast stack(b)).$$

The operations \textit{pop} and \textit{push} over a stack are defined below.
$$pop(\textsf{x1})\equiv\withk{st}{\neg(\textsf{z} = null)}{(\seqk{\seqk{\seqk{\assigk{y}{z}}{\assigk{x1}{y}}}{\loadk{z}{y+1}}}{\dispk{y+1}})},$$
$$push(\textsf{x2})\equiv\withkk{st}{(\consk{y}{x2,z};\assigk{z}{y}\hfill)}.$$

The operation \textit{pop} picks the first node of a non-empty stack
and passes it to the variable $\textsf{x1}$.  In the following specification,
the program performs a pop over a shared stack and it disposes the
memory space retrieved by the stack.
\begin{equation}\label{spepop}
st(\textsf{z},\textsf{y}): stack(z)\vdash \{\empk\} \seqk{pop(\textsf{x1})}{\dispk{x1}} \{ \empk\}.
\end{equation}

To prove this result in DCSL, we use the rules of SL and the critical
region rule, by omitting
the rely-set. Consider the following derivation, that proves the
validity of the program inside the critical region.
\begin{eqnarray*}
&\vdash&\{\empk \ast \exists_{a,b}\  \textsf{z}\mapsto a,b \ast stack(b)\}\\&&
\assigk{y}{z};\\&&
\{\empk\ast \exists_{a,b}\  \textsf{y}\mapsto a,b \ast stack(b)\}\\&&
 \assigk{x1}{y};\\&&
\{ \empk \ast \exists_{a,b}\  \textsf{x1}\mapsto a\ast \textsf{y+1}\mapsto b \ast stack(b))\}\\&&
 \loadk{z}{y+1};\\&&
\{ \empk \ast \exists_{a}\  \textsf{x1}\mapsto a\ast \textsf{y+1}\mapsto \textsf{z} \ast stack(\textsf{z}))\}\\&&
 \dispk{\textsf{y+1}}\\&&
\{ (\exists_{a}\  \textsf{x1}\mapsto a) \ast (stack(\textsf{z}))\}
\end{eqnarray*}

Applying the critical region rule, the resource $st$ appears and we obtain, 
$$st(\textsf{z},\textsf{y}): stack(\textsf{z})\vdash \{\empk\}pop(\textsf{x1})\{\ \exists_{a}\  \textsf{x1}\mapsto a\}.$$
Using the sequential and deallocation rules of SL we get the
specification $(\ref{spepop})$.

Now, we turn our attention to the \textit{push} operator over a stack,
showing that the following specification is valid in the context of
DCSL. Let \textit{push} insert an element $x2$ in the top of a stack.
$$st(\textsf{z},\textsf{y}): stack(\textsf{z})\vdash \{\empk \}push(\textsf{x2})\{\empk\}.$$
As before, from SL inference rules, we obtain the specification below.
Then we can apply the critical region rule to obtain the specification
above.
\begin{eqnarray*}
&\vdash&\{\empk \ast stack(\textsf{z})\}\\&&
\consk{y}{x2,z};\\&&
\{\textsf{y}\mapsto \textsf{x2},\textsf{z} \ast stack(\textsf{z})\}\\&&
\{\empk \ast \exists_{a,b}\  \textsf{y}\mapsto a,b \ast stack(b)\}\\&&
\assigk{z}{y}\\&&
\{\empk \ast \exists_{a,b} \ \textsf{z}\mapsto a,b \ast stack(b)\}
\end{eqnarray*}

Until now we have shown that each specification is derivable in DCSL; now
we want to study their parallel composition. To apply the parallel
rule we need that
the variables modified by one program cannot occur free in the other.
$$mod(\seqk{pop(\textsf{x1})}{\dispk{x1}})=\{\textsf{x1},\textsf{y},\textsf{z}\},\quad mod(push(\textsf{x2}))=\{\textsf{y}, \textsf{z}\}.$$

The variables $\textsf{z}$ and $\textsf{y}$ are used in both
specifications. Hence it is not possible to apply the DCSL parallel
rule and obtain a specification for the parallel execution of
\textit{pop} and \textit{push}.

In order to express the specification above in the context of CSL it
is necessary to define the rely-set for the operation of \textit{pop}
and \textit{push}, that are ${\{\textsf{x1}\}}$ and
${\{\textsf{x2}\}}$, respectively.

It is straightforward, using the derivations above, to infer, in CSL,
the following specifications:
$$st(\textsf{z},\textsf{y}): stack(\textsf{z})\vdash_{{\{\textsf{x1}\}}} \{\empk\} \seqk{pop(\textsf{x1})}{\dispk{x1}} \{ \empk\},$$
$$st(\textsf{z},\textsf{y}): stack(\textsf{z})\vdash_{\{\textsf{x2}\}} \{\empk \}push(\textsf{x2})\{\empk\}.$$

To apply the CSL parallel rule, we need to check that there is no
interference between rely-sets and modified variables.  Since
$mod(push(\textsf{x2}))\cap \{\textsf{x1}\}=\emptyset$ and
$mod(\seqk{pop(\textsf{x1})}{\dispk{x1}})\cap \{\textsf{x2}\}=\emptyset$, by
parallel rule we infer:
$$st(\textsf{z},\textsf{y}):
  stack(\textsf{z})\vdash_{\{\textsf{x1},\textsf{x2}\}}\{\empk \}\park{(\seqk{pop(\textsf{x1})}{\dispk{x1}})}{push(\textsf{x2})}\{\empk\}.$$
As this example shows we can obtain, at least, simpler specifications
using CSL than DCSL, and prove correctness of more programs.

\section{Operational Semantics}

In this section, we describe a structural operational semantics (SOS) that we use to prove the soundness of CSL.
We mostly follow the approach of Vafeiadis \cite{vaf}. Let us introduce the concept of resource configuration, which records the state of each declared resource.  

\subsection{Program transition}\label{progtrans}

We start by extending the programming language with a command for executions inside a critical region. We denote this command by $\withink{r}{C}$, where $r$ is an acquired resource and $C$ is a command. In the extended programming language, we can associate to each command a set of \emph{locked resources}, $Locked(C)$ which is inductively defined by:
\begin{align*}
&Locked(\seqk{C_{1}}{C_{2}})=Locked(C_1),\\
 &Locked(\park{C_1}{C_2})=Locked(C_1)\cup Locked(C_2),\\
 &Locked(\resk{r}{C})=Locked(C)\setminus \{r\},\\
&Locked(\withink{r}{C})=Locked(C)\cup \{r\},\\
 &Locked(C)=\emptyset\textit{, otherwise.}
\end{align*}
Moreover the set of resources occurring in a command is extended with all resources names $r$ such that the command also includes $\withink{r}{C}$.

Let $O,L,D$ be disjoint pairwise subsets of resources names. We say that $\rho=(O,L,D)$ is a \emph{resource configuration}, where $O$ are resources owned by the running program, $L$ are resources locked by others programs and $D$ are available resources. The set of resources configurations is denoted by $\mathcal{O}$.

 We write $r\in \rho$ ($r\notin \rho$) if $r\in (O\cup L\cup D)$ ($r\notin (O\cup L\cup D)$, respectively), and $(O, L, D)\setminus \{r\}:=(O\setminus \{r\}, L\setminus \{r\}, D\setminus \{r\})$. 

Usually the state of a machine in SL consists of a storage, $s$, and a heap, $h$. However, we define a program's state by a triple $(s,h,\rho)$.
The program transitions, that define the SOS, are represented by the relation $\rightarrow_p$ defined from the tuple $(C,(s,h,\rho))$ to $(C',(s',h',\rho'))$ or the abort state ($\abok$), where $C, C'\in \mathcal{C}$, $(s,h),(s',h')\in \mathcal{S}$ and $\rho,\rho'\in\mathcal{O}$.

\begin{figure*}
\resizebox{\textwidth}{\height}{
$$
\infer[(S1)]{\skipk;C_2, (s,h,\rho)\rightarrow_{p} C_2, (s,h,\rho) }{}
\quad
\infer[(S2)]{\seqk{C_{1}}{C_{2}}, (s,h,\rho)\rightarrow_{p} \seqk{C'_1}{C_2}, (s',h',\rho') }{C_1, (s,h,\rho)\rightarrow_{p} C'_1, (s',h',\rho')}
$$}
$$
\infer[(LP)]{\whik{B}{C},(s,h,\rho)\rightarrow_{p} \ifk{B}{\seqk{C}{\whik{B}{C}}}{\skipk},(s,h,\rho)}{}
$$
\resizebox{\textwidth}{\height}{
$$\infer[(IF1)]{\ifk{B}{C_1}{C_2}, (s,h,\rho)\rightarrow_{p} C_1, (s,h,\rho)}{s(B)=\texttt{true}}
\quad \infer[(P1)]{\park{C_1}{C_2}, (s,h,\rho)\rightarrow_{p} \park{C'_1}{C_2}, (s',h',\rho') }{C_1, (s,h,\rho)\rightarrow_{p} C'_1, (s',h',\rho')}$$}
\vspace{0.3mm}

\resizebox{\textwidth}{\height}{
$$
\infer[(IF2)]{\ifk{B}{C_1}{C_2}, (s,h,\rho)\rightarrow_{p} C_2, (s,h,\rho)}{s(B)=\texttt{false}}
\quad
\infer[(P2)]{\park{C_1}{C_2}, (s,h,\rho)\rightarrow_{p} \park{C_1}{C'_2}, (s',h',\rho')}{C_2, (s,h,\rho)\rightarrow_{p} C'_2, (s',h',\rho')}
$$}

\resizebox{\textwidth}{\height}{
$$
\infer[(R0)]{\resk{r}{\skipk}, (s,h,\rho)\rightarrow_{p} \skipk, (s ,h,\rho) }{r\notin \rho }
\quad
\infer[(P3)]{\skipk \| \skipk, (s,h,\rho)\rightarrow_{p} \skipk, (s,h,\rho)}{}
$$}
\vspace{0.5mm}

\resizebox{\textwidth}{\height}{
$$
\infer[(R1)]{\resk{r}{C}, (s,h,\rho)\rightarrow_{p} \resk{r}{C}', (s',h',\rho'\setminus\{r\})}{ C, (s,h,(O\cup \{r\},L,D))\rightarrow_{p} C', (s',h',\rho') & r\notin \rho=(O,L,D)  & r\in Locked(C)}
$$}
\vspace{0.5mm}

\resizebox{\textwidth}{\height}{
$$
\infer[(R2)]{\resk{r}{C}, (s,h,\rho)\rightarrow_{p} \resk{r}{C}', (s',h',\rho'\setminus\{r\})}{C, (s,h,(O,L,D\cup \{r\}))\rightarrow_{p} C', (s',h',\rho') & r\notin \rho=(O,L,D) &  r\notin Locked(C)}
$$}
$$
\infer[(W0)]{\withk{r}{B}{C}, (s,h,\rho)\rightarrow_{p} \withink{r}{C}, (s,h,\rho')}{ \rho=(O,L,D\cup\{r\}) & \rho'=(O\cup\{r\},L,D) & s(B)=\texttt{true}}
$$
\resizebox{\textwidth}{\height}{$$
\infer[(W1)]{\withink{r}{C}, (s,h,(O,L,D))\rightarrow_{p} \withink{r}{C}', (s',h',(O'\cup\{r\},L',D')}{r\in O & C, (s,h,(O\setminus \{r\},L,D))\rightarrow_{p} C', (s',h',(O',L',D'))}
$$}
\vspace{0.5mm}

\resizebox{\textwidth}{\height}{
$$\infer[(W2)]{\withink{r}{\skipk}, (s,h,\rho)\rightarrow_{p} \skipk, (s,h,\rho')}{\rho=(O\cup\{r\},L,D) & \rho'=(O,L,D\cup\{r\})}
\quad
\infer[(BCT)]{c, (s,h,\rho)\rightarrow_{p} \skipk, (s',h',\rho)}{[c](s,h)=(s',h')}
$$}

\caption{Program Transitions}
\label{fig:op}
\end{figure*}
For a basic command $c$ we denote by $[c](s,h)$ the result of executing $c$ for the pair $(s,h)$, in the context of SL. The result of the execution of $c$ on a pair $(s,h)$ can be a pair $(s',h')$ or $\abok$.
In Figure~\ref{fig:op}, we display the program transitions.

Since most of the program transitions are standard, we only emphasize how we manage the resource configuration. First note that it is not changed by any transition of basic commands $(BCT)$. The acquisition of a resource by the transition $(W0)$ requires that the resource is available and transfers it to the set of owned resources; the release of a resource made by $(W2)$ returns the resource to the set of available resources. The local resource command does not add the resource to the resource configuration, since that would break locality, i.e., the local resource should only be visible to who created it. For the local resource we use the set of locked resources, $Locked(C)$, to determine if a resource should be in the set of owned or available resources.  In Proposition~\ref{lockres}, we prove that $Locked(C)$ is equal to the set of owned resources along an execution starting in a non-extended command.

\begin{figure*}
\resizebox{\textwidth}{\height}{
$$
\infer[(RA)]{\resk{r}{C}, (s,h,\rho)\rightarrow_{p} \abok }{r\in \rho}
\quad
\infer[(WA)]{\withk{r}{B}{C}, (s,h,\rho)\rightarrow_{p} \abok }{r\notin \rho}
$$}
\vspace{0.5mm}

\resizebox{\textwidth}{\height}{
$$
\infer[(RA1)]{\resk{r}{C}, (s,h,(O,L,D))\rightarrow_{p} \abok}{r\in Locked(C) & C, (s,h,(O\cup \{r\},L,D) )\rightarrow_{p} \abok}
\quad
\infer[(BCA)]{c, (s,h,\rho)\rightarrow_{p} \abok}{[c](s,h)=\abok}
$$}
\vspace{0.5mm}

\resizebox{\textwidth}{\height}{
$$
\infer[(RA2)]{\resk{r}{C}, (s,h,(O,L,D))\rightarrow_{p} \abok}{r\notin Locked(C) & C, (s,h,(O,L,D\cup \{r\}) )\rightarrow_{p} \abok}
\quad
\infer[(SA)]{\seqk{C_{1}}{C_{2}}, (s,h,\rho)\rightarrow_{p} \abok }{C_1,(s,h,\rho)\rightarrow_{p} \abok}
$$}
\vspace{0.5mm}

\resizebox{\textwidth}{\height}{
$$
\infer[(WA1)]{\withink{r}{C}, (s,h,\rho)\rightarrow_{p} \abok}{C, (s,h,\rho\setminus\{r\})\rightarrow_{p} \abok}
\quad
\infer[(PA1)]{\park{C_1}{C_2}, (s,h,\rho)\rightarrow_{p} \abok }{C_1, (s,h,\rho)\rightarrow_{p} \abok}
$$}
\vspace{0.5mm}

\resizebox{\textwidth}{\height}{
$$
\infer[(WA2)]{\withink{r}{C}, (s,h,(O,L,D))\rightarrow_{p} \abok }{r\notin O}
\quad
\infer[(PA2)]{\park{C_1}{C_2}, (s,h,\rho)\rightarrow_{p} \abok }{C_2, (s,h,\rho)\rightarrow_{p} \abok}
$$}
\caption{Abort Transitions}
\label{fig:abort}
\end{figure*}

In Figure~\ref{fig:abort}, we include transitions that abort. As in SL, a memory fault causes the program to abort. The parallel command aborts if one of its commands aborts. The local resource command aborts, if the command tries to create a pre-existing resource. The critical region command aborts if it tries to acquire an undeclared resource, if the execution inside the critical region aborts, or if an acquired resource is not in the set of owned resources.

Next, we check that program transitions are well-defined.

\begin{prop}\label{proprec}
Let $C,C'	\in\mathcal{C}$, $(s,h),(s',h')\in \mathcal{S}$ and $O,L,D, O',L',D'\subset \textbf{Res}$. If $(O,L,D)\in\mathcal{O}$ and $$C,(s,h,(O,L,D))\rightarrow_{p} C',(s',h',(O',L',D')),$$ then  $(O',L',D')\in \mathcal{O}$, $L=L'$ and $O\cup D=O'\cup D'$.
\end{prop}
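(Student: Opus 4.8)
The plan is to prove the statement by \emph{rule induction} on the derivation of the transition $C,(s,h,(O,L,D))\rightarrow_{p} C',(s',h',(O',L',D'))$; equivalently, I proceed by case analysis on the last program-transition rule applied, establishing the three conclusions (namely $(O',L',D')\in\mathcal{O}$, $L=L'$, and $O\cup D=O'\cup D'$) simultaneously, so that all three are available as induction hypotheses in the recursive cases.

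For the base cases, that is, the rules whose premises contain no program transition, there are two kinds. The rules $(S1)$, $(LP)$, $(IF1)$, $(IF2)$, $(P3)$, $(R0)$ and $(BCT)$ leave the resource configuration untouched, so $(O',L',D')=(O,L,D)$ and all three conclusions are immediate. The rules $(W0)$ and $(W2)$ merely move the single resource $r$ between the owned and the available components: $(W0)$ sends $(O,L,D\cup\{r\})$ to $(O\cup\{r\},L,D)$, and $(W2)$ does the reverse. In each case $L$ is literally unchanged, and $O\cup D$ is unchanged because the moved resource lies in the union of the owned and available components both before and after the move. Disjointness of the target configuration follows from disjointness of the (valid) source configuration together with the convention that a decomposition such as $D\cup\{r\}$ exhibits $r$ as an element disjoint from the remaining part, so that $r$ does not reappear in another component after the move.

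For the recursive cases I split according to whether the configuration is threaded through unchanged or modified around the premise. In $(S2)$, $(P1)$ and $(P2)$ the source and target configurations of the conclusion coincide with those of the transition premise, so the three conclusions follow by directly applying the induction hypothesis to that premise. The genuinely delicate cases are $(R1)$, $(R2)$ and $(W1)$, where the premise is taken under a configuration obtained from $(O,L,D)$ by inserting $r$ (into $O$ for $(R1)$, into $D$ for $(R2)$) or, dually, by removing $r$ from $O$ (for $(W1)$), and the target of the conclusion is then obtained by deleting or re-inserting $r$. Here I first check that the modified source configuration is again in $\mathcal{O}$, using $r\notin(O,L,D)$ for $(R1)$ and $(R2)$ and $r\in O$ for $(W1)$, both consistent with the source being valid; then I apply the induction hypothesis to the premise to learn that its target $(O',L',D')$ is valid, with middle component equal to $L$ and with $O'\cup D'$ equal to the owned-plus-available part of the modified source; and finally I transport this information back across the deletion or re-insertion of $r$.

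The main obstacle is precisely this last bookkeeping in $(R1)$, $(R2)$ and $(W1)$: one must verify that removing $r$ from all three components (as in $(R1)$ and $(R2)$), respectively adjoining $r$ to $O'$ (as in $(W1)$), preserves pairwise disjointness and restores the original $L$ and $O\cup D$. The facts that make this go through are that, by the side conditions, $r$ is absent from the original configuration in the $(R)$ cases, so removing it afterwards disturbs neither $L$ nor $O\cup D$, and that in the $(W1)$ case $r\in O$ is disjoint from $L'=L$ and from $D'$, so re-adjoining it keeps the three components disjoint. Beyond routine set algebra, the only subtlety is to maintain the disjoint-union reading of the $\cup\{r\}$ notation, which guarantees that a resource named in one component never silently occurs in another.
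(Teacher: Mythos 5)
Your proposal is correct and takes essentially the same approach as the paper's proof: rule induction on $\rightarrow_{p}$, with direct set-theoretic computation for $(W0)$ and $(W2)$, a direct appeal to the induction hypothesis for $(S2)$, $(P1)$, $(P2)$, and the same bookkeeping in the delicate cases $(R1)$, $(R2)$, $(W1)$ (verify the modified premise configuration lies in $\mathcal{O}$, apply the induction hypothesis, then transport $L=L'$, $O\cup D=O'\cup D'$ and pairwise disjointness across the removal or re-insertion of $r$, using that $r$ is absent from, respectively present in, the original configuration). Your explicit check that the premise's starting configuration is itself in $\mathcal{O}$ before invoking the induction hypothesis is left implicit in the paper, but otherwise the two arguments coincide.
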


\begin{proof}
We prove the result by induction on the rules of $\rightarrow_{p}$.

Let $C,C'	\in\mathcal{C}$, $(s,h),(s',h')\in \mathcal{S}$, $O,O',L,L',D,D'\subseteq \textbf{Res}$ such that $(O,L,D)\in\mathcal{O}$ and $$C,(s,h,(O,L,D))\rightarrow_{p} C',(s',h',(O',L',D')).$$

The proof is immediate for the rules $(S1)$, $(S2)$, $(LP)$, $(IF1)$, $(IF2)$, $(P1)$, $(P2)$, $(P3)$, $(R0)$ and $(BCT)$.

If the transition is $(R1)$. 
We have that $r\notin(O,L,D)$, $C=\resk{r}{\tilde{C}}$, $C'=\resk{r}{\tilde{C'}}$ and $r\in Locked(\tilde{C})$ such that  $$\tilde{C},(s,h,(O\cup \{r\},L,D))\rightarrow_{p} \tilde{C'},(s',h',(O'',L'',D'')),$$ where $(O'',L'',D'')\setminus \{r\} =(O',L',D')$.

By induction $L=L''$, $(O'',L'',D'')\in \mathcal{O}$ and $O\cup \{r\}\cup D=O''\cup D''$. 

From $r\notin L$, it follows that $r\notin L''$. Hence $L=L'$.
Because $r\notin (O,L,D)$, 
$$O'\cup D'= (O''\cup D'')\setminus \{r\} = (O\cup \{r\}\cup D)\setminus \{r\} = O\cup D.$$

From $O'=O''\setminus \{r\}$ and $D'=D''\setminus \{r\}$, we obtain that 
$$O'\cap D'= (O''\setminus \{r\})\cap(D''\setminus \{r\})\subseteq O''\cap D''=\emptyset,$$
$$O'\cap L' = (O''\setminus \{r\})\cap L''\subseteq O''\cap L''= \emptyset,$$
$$D'\cap L' = (D''\setminus \{r\})\cap L''\subseteq D''\cap L''= \emptyset.$$

Hence $(O',L',D')\in \mathcal{O}$.

The case $(R2)$ is analogous to the previous one.

If the transition is given by $(W0)$.
We have $D'=D\setminus \{r\}$, $O'=O\cup \{r\}$, $L=L'$ and $r\in D$. And 
$$O'\cap D'=(O\cup \{r\})\cap (D\setminus \{r\})= (O\cap (D\setminus \{r\}))\subseteq O\cap D = \emptyset,$$
$$O'\cap L' = (O\cup \{r\})\cap L= (O\cap L) \cup( \{r\}\cap L)= \{r\}\cap L=\emptyset,$$
$$L'\cap D' = L \cap  (D\setminus \{r\}) \subseteq L\cap D=\emptyset.$$

Therefore $(O',L',D')\in \mathcal{O}$. Moreover,  $O'\cup D'= (O\cup \{r\}) \cup (D\setminus \{r\})= O\cup D$ and $L=L'$.

If the transition is given by $(W1)$.
We have $r\in O$, $C= \withink{r}{\tilde{C}}$ and $C'=\withink{r}{\tilde{C'}}$  such that $$\tilde{C},(s,h,(O\setminus \{r\},L,D))\rightarrow_{p} \tilde{C'},(s',h',(O'',L'',D'')),$$ where  $O'=O''\cup \{r\}$, $L'=L''$, $D'=D''$.

By induction hypothesis, we know that $(O'',L'',D'')\in \mathcal{O}$, $L=L''$ and $(O\setminus \{r\})\cup D=O''\cup D''$.

Therefore $L=L'$ and $O\cup D= \{r\}\cup O''\cup D''= O'\cup D'$.

It remains to check that $(O',L',D')\in \mathcal{O}$, this follows from
$$O'\cap L'=(O''\cup \{r\}) \cap L'' = (O''\cap L'')\cup (\{r\}\cap L) =\emptyset,$$
$$O'\cap D' =  (O''\cup \{r\}) \cap D'' = (O''\cap D'')\cup (\{r\}\cap D'')=\emptyset,$$
$$L'\cap D' = L''\cap D''= \emptyset.$$

If the transition is given by $(W2)$.
We have $r\in O$, $L=L'$, $D'=D\cup \{r\}$ and $O'=O\setminus \{r\}$. Note that
$$O'\cap L'\subseteq O\cap L = \emptyset,$$
$$O'\cap D' =  (O\setminus \{r\}) \cap (D\cup \{r\}) \subseteq (O\cap D) \cup ((O\setminus \{r\}) \cap \{r\})=\emptyset,$$
$$L'\cap D' = L\cap (D\cup \{r\})= (L\cap D) \cup (L\cap \{r\})=\emptyset.$$

Then $(O',L',D')\in \mathcal{O}$, $L=L'$ and $O'\cup D'= (O\setminus \{r\}) \cup (D\cup \{r\})= O\cup D$.
\end{proof}

We say that a command $C'$ is \emph{reachable} from a CSL's command $C$ if there are $(s,h,\rho), (s',h',\rho')$ and $k$ such that $$C,(s,h,\rho)\rightarrow_p^k C',(s',h',\rho')$$ and $C,(s,h,\rho)\not\rightarrow_p^j \abok$ for every $j\leq k$, where $\rightarrow_p^i$ denotes the composition of $i$ transitions. In the next proposition, we see that owned resources are equal to locked resources, along an execution starting from a non-extended command.

\begin{prop}\label{lockres}
Let $C,C'\in \mathcal{C}$, $(s,h),(s',h')\in \mathcal{S}$, $\rho'\in \mathcal{O}$, $\Gamma$ be a resource context, and $k\geq 0$ such that $C'$ is reachable from $C$. 

If $C,(s,h,(\emptyset,\emptyset,Res(\Gamma))\rightarrow_p^k C',(s',h',\rho')$, then $$\rho'=(Locked(C'), \emptyset, Res(\Gamma)\setminus Locked(C')).$$
\end{prop}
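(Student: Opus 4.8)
The plan is to induct on the number of steps $k$. For the base case $k=0$ we have $C'=C$ and $\rho'=(\emptyset,\emptyset,Res(\Gamma))$; since $C\in\mathcal{C}$ is a non-extended command it contains no $\withink{r}{C}$ subcommand, so $Locked(C)=\emptyset$ by definition of $Locked$, and the claimed equality reads $(\emptyset,\emptyset,Res(\Gamma)\setminus\emptyset)$, which holds. For the inductive step I would split the derivation as
$$C,(s,h,(\emptyset,\emptyset,Res(\Gamma)))\rightarrow_p^k C'',(s'',h'',\rho'')\rightarrow_p C',(s',h',\rho'),$$
apply the induction hypothesis to obtain $\rho''=(Locked(C''),\emptyset,Res(\Gamma)\setminus Locked(C''))$, and apply Proposition \ref{proprec} to the last step to get $L'=L''=\emptyset$, $O'\cup D'=O''\cup D''=Res(\Gamma)$ and $(O',L',D')\in\mathcal{O}$. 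Since $O'$ and $D'$ are then disjoint with union $Res(\Gamma)$, the whole statement reduces to the single equation $O'=Locked(C')$, because $D'=Res(\Gamma)\setminus O'$ is forced.

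The heart of the argument is therefore a single-step lemma: along any transition $C,(s,h,(O,L,D))\rightarrow_p C',(s',h',(O',L',D'))$ the owned set changes exactly as $Locked$ does, i.e. $O'\setminus O=Locked(C')\setminus Locked(C)$ and $O\setminus O'=Locked(C)\setminus Locked(C')$. I would prove this by induction on the transition rules. The only rules touching the owned component are $(W0)$, which moves $r$ from $D$ into $O$ and simultaneously turns $\withk{r}{B}{C}$ into $\withink{r}{C}$, adding $r$ to $Locked$, and $(W2)$, the symmetric release; the rule $(BCT)$ and the control rules leave both $O$ and $Locked$ untouched. The recursive cases $(S2)$, $(W1)$, $(R1)$ and $(R2)$ follow by unfolding the defining equation for $Locked$ and invoking the hypothesis on the premise transition — the resource cases use that the configuration handed to the body is $(O\cup\{r\},L,D)$ or $(O,L,D\cup\{r\})$ according to whether $r\in Locked(C)$, which matches the correction in $Locked(\resk{r}{C})=Locked(C)\setminus\{r\}$. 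Feeding this lemma together with the induction hypothesis $O''=Locked(C'')$ into the identity $O'=(O''\setminus(Locked(C'')\setminus Locked(C')))\cup(Locked(C')\setminus Locked(C''))=Locked(C')$ closes the step.

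The delicate case, and the one I expect to be the main obstacle, is parallel composition $(P1)$/$(P2)$. Here $Locked(\park{C_1}{C_2})=Locked(C_1)\cup Locked(C_2)$, so the set-differences of the whole command coincide with those of the active branch only when the branches never hold a common resource, i.e. $Locked(C_1)\cap Locked(C_2)=\emptyset$. I would carry this disjointness as a companion invariant of the same induction on $k$. It holds initially because the non-extended starting command has empty $Locked$ everywhere, and it is preserved because a branch acquires a resource $r$ only through $(W0)$, which requires $r$ available in the configuration passed to that branch; but the main invariant gives $O''=Locked(C'')=Locked(C_1)\cup Locked(C_2)$, which contains the sibling's locked resources, and $\rho''\in\mathcal{O}$ forces $O''\cap D''=\emptyset$, so a resource held by the sibling is not available to the active branch. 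Thus the two invariants — $O=Locked(C')$ and pairwise disjointness of parallel branches — must be established together, and managing their mutual dependence across the nested recursion of $(S2)$, $(P1)$ and $(R1)$ is where the real care is required.
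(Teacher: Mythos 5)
Your proposal is correct, and at top level it follows the same route as the paper: induction on $k$, the identical base case, and a case analysis of the last transition for the inductive step. Where you genuinely depart is in how that step is discharged. The paper's own proof is little more than a sketch: after invoking the induction hypothesis it asserts that the inner induction on transition rules is immediate for every rule except $(W0)$ and $(W2)$, and that these two preserve the invariant. Your version supplies the content behind that claim: the reduction via Proposition~\ref{proprec} to the single equation $O'=Locked(C')$, the set-difference single-step lemma $O'\setminus O=Locked(C')\setminus Locked(C)$ and $O\setminus O'=Locked(C)\setminus Locked(C')$, and --- the real contribution --- the observation that this lemma is false in general at $(P1)$/$(P2)$ and must be rescued by a companion invariant stating that parallel siblings lock disjoint sets of resources. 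That concern is legitimate, not pedantry: for the (unreachable) command $\park{\withink{r}{\skipk}}{\withink{r}{\skipk}}$ with configuration $(\{r\},\emptyset,\emptyset)$, a $(W2)$ step in the left branch lifted by $(P1)$ removes $r$ from the owned set while $Locked$ of the whole command stays $\{r\}$, so the set-difference identity fails; hence the paper's unqualified ``immediate'' for the parallel rules silently relies on exactly the reachability-derived disjointness you make explicit, and your preservation argument (from $O''=Locked(C'')$, $O''\cap D''=\emptyset$ and the $(W0)$ side condition) is the right one. What your approach buys is a proof that actually closes the parallel case the paper glosses over; what the paper's buys is brevity. The residual work in your outline is the point you flag yourself: threading the single-step lemma through $(W1)$, $(R1)$, $(R2)$, where the configuration handed to the body is shifted by $\{r\}$; your set-difference formulation does absorb these shifts, using the side conditions $r\in O$ for $(W1)$ and $r\notin\rho$ for $(R1)$/$(R2)$, so only the parallel case needs the extra invariant.
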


\begin{proof}
Let $C,C'\in \mathcal{C}$, $(s,h),(s',h')\in \mathcal{S}$, $\rho'\in \mathcal{O}$, $\Gamma$ be a resource context, and $k\geq 0$ such that $C'$ is reachable from $C$. 

The proof is done by  induction on $k$.

Let $k=0$. Then $C'=C$ is a non-extended command and $Locked(C)=\emptyset$.

Let $k=n+1$. Then there exist $C''$, $s''$, $h''$, $\rho''$ such that 
$$C,(s,h,(\emptyset,\emptyset,Res(\Gamma))\rightarrow_p^n C'',(s'',h'',\rho'')\rightarrow_p C',(s',h',\rho').$$

Note that $C''$ is reachable from $C$. By the induction hypothesis on $k$, we have that $$\rho''=(Locked(C''), \emptyset, Res(\Gamma)\setminus Locked(C'')).$$

Now, we prove the result from $k$ to $k+1$ by induction on the program transitions. The proof is immediate or an immediate consequence of the induction on the program transition for all transitions except $(W0)$ and $(W2)$, which change the locked resources.

However in both cases the resulting resource configuration preserves that \[\rho'=(Locked(C'), \emptyset, Res(\Gamma)\setminus Locked(C')).\qedhere \]

\end{proof}

The proposition above reinforces the idea that the transitions $(R1)$ and $(R2)$ are well defined. Furthermore, it completely describes the resource configuration along an execution. 


\subsection{Properties of program transitions}\label{smfp}

We state now the main properties of the program transitions.
We start with the safety monotonicity and the frame property that are essential to show the soundness of the frame rule, as well as of the parallel rule. The property of safety monotonicity and frame property original appears in the context of Separation Logic and are still valid in the Concurrent Separation Logic.

Let $h,g$ be heaps. If they have disjoint domains we write $h\bot g$, and we denote by $h\uplus g$ the union of disjoint heaps. If $g$ is a subheap of $h$,  $h\setminus g$ denotes the heap $h$ restricted to $dom(h)\setminus dom(g)$.

\begin{prop}\label{safetymono}
Let $C\in \mathcal{C}$, $(s,h)\in \mathcal{S}$, and $\rho\in\mathcal{O}$. Suppose $h_F$ is a heap such that $h\bot h_F$.
If $C,(s,h,\rho)\not\rightarrow_{p} \abok$, then $C,(s,h\uplus h_F,\rho)\not\rightarrow_{p} \abok$.
\end{prop}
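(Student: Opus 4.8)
The plan is to prove the logically equivalent contrapositive: if $C,(s,h\uplus h_F,\rho)\rightarrow_{p}\abok$ then $C,(s,h,\rho)\rightarrow_{p}\abok$, arguing by induction on the derivation of the abort transition on the larger heap. The structural observation that makes this tractable is that no abort rule produces a successor state: every premise of an abort rule is itself an abort judgement evaluated at the \emph{same} heap as its conclusion. Consequently $h\uplus h_F$, and with it the frame $h_F$, stays fixed throughout the entire derivation tree, so the induction only has to track how the resource configuration $\rho$ is threaded through the rules, never how the heap evolves.

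For the base cases, the genuinely informative one is $(BCA)$, where $C$ is a basic command $c$ and the abort arises from $[c](s,h\uplus h_F)=\abok$. Here I would appeal to safety monotonicity at the level of Separation Logic basic commands: for each of the primitives $\loadk{x}{e}$, $\muttk{e}{e'}$ and $\dispk{e}$, aborting means that the accessed location $s(e)$ lies outside the heap's domain, and $dom(h)\subseteq dom(h\uplus h_F)$ forces $[c](s,h)=\abok$ as well (the remaining primitives never abort). Applying $(BCA)$ on $h$ then yields the desired abort. The rules $(RA)$, $(WA)$ and $(WA2)$ are immediate, since their side conditions ($r\in\rho$, $r\notin\rho$, $r\notin O$) constrain only the resource configuration, which is untouched by the change of heap.

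The remaining rules $(RA1)$, $(RA2)$, $(SA)$, $(WA1)$, $(PA1)$ and $(PA2)$ form the inductive step and are handled uniformly. In each, the abort of the compound command is justified by an abort of an immediate subcommand evaluated at the heap $h\uplus h_F$ and at a resource configuration obtained from $\rho$ by exactly the same bookkeeping as in the corresponding non-aborting rule (for instance $(RA1)$ uses $(O\cup\{r\},L,D)$, while $(WA1)$ uses $\rho\setminus\{r\}$). I would apply the induction hypothesis to that subderivation, which is legitimate since the frame $h_F$ is unchanged, obtaining the matching abort of the subcommand at heap $h$ with the identical resource configuration, and then re-apply the very same abort rule to conclude $C,(s,h,\rho)\rightarrow_{p}\abok$.

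The only real content of the argument is the basic-command case $(BCA)$, namely safety monotonicity of Separation Logic itself; everything else is a mechanical replay of an abort rule after invoking the induction hypothesis. The mild point to get right is precisely that the resource-configuration manipulations are identical on the small- and large-heap sides and that $h_F$ never varies, since that is exactly what licenses the use of the induction hypothesis at each inductive step.
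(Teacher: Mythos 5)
Your proof is correct and follows essentially the same route as the paper's own argument: proving the contrapositive by induction on the derivation of the abort transition, with the rules $(RA)$, $(WA)$, $(WA2)$ immediate, the case $(BCA)$ settled by the domain-inclusion argument inherited from Separation Logic, and the compound abort rules handled by replaying the rule after the induction hypothesis. Your additional observation that the heap is held fixed across every abort rule, so only the resource configuration needs tracking, is a nice explicit justification of a point the paper leaves implicit.
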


\begin{prop}\label{frameprop}
Let  $C,C'\in \mathcal{C}$, $(s,h), (s',h')\in\mathcal{S}$, and $\rho,\rho'\in\mathcal{O}$.

Suppose $h_F$ is a heap such that $h\bot h_F$. If $C,(s,h\uplus h_F,\rho)\rightarrow_{p} C',(s',h',\rho')$ and $C,(s,h,\rho)\not\rightarrow_{p} \abok$, then $h_F$ is a subheap of $h'$ and 
$$C,(s,h,\rho)\rightarrow_{p} C',(s',h'\setminus h_F,\rho').$$
\end{prop}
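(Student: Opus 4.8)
The plan is to prove the frame property (Proposition~\ref{frameprop}) by induction on the derivation of the transition $C,(s,h\uplus h_F,\rho)\rightarrow_{p} C',(s',h',\rho')$, using the non-abortion hypothesis $C,(s,h,\rho)\not\rightarrow_{p}\abok$ crucially in the base case. The induction proceeds over the rules defining $\rightarrow_p$ in Figure~\ref{fig:op}. For each rule, I would show that the same rule can be applied starting from the smaller heap $h$, that $h_F$ remains a subheap of the resulting heap $h'$, and that the resulting transition on the small heap produces the state $(s',h'\setminus h_F,\rho')$.

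The heart of the argument is the base case $(BCT)$, where $C=c$ is a basic command and $[c](s,h\uplus h_F)=(s',h')$. Here the non-abortion hypothesis gives $[c](s,h)\neq\abok$, so by definition of SL execution $[c](s,h)=(s'',h'')$ for some pair. The genuine content is the classical frame property for the basic commands of Separation Logic: for each of $\assigk{x}{e}$, $\loadk{x}{e}$, $\muttk{e}{e'}$, $\consk{x}{$\overline{\textsf{e}}$}$ and $\dispk{e}$, one checks that running $c$ on the enlarged heap leaves $h_F$ untouched and acts on the rest exactly as running $c$ on $h$ does, so $s''=s'$, $h'=h''\uplus h_F$ (whence $h_F$ is a subheap of $h'$), and $h'\setminus h_F=h''$. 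This is essentially \cite[Section 2]{rey2} reasoning; I would state it per-command, noting that allocation $\consk{x}{$\overline{\textsf{e}}$}$ requires a small argument since the freshly allocated locations must avoid $dom(h_F)$, but since those locations may be chosen outside $dom(h)\uplus dom(h_F)$ the claim still goes through (possibly after observing that the semantics allows such a choice, or by renaming if the semantics is nondeterministic on the chosen address).

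The inductive cases are largely mechanical and transmit the property through the structural rules. For $(S2)$, $(P1)$, $(P2)$, $(R1)$, $(R2)$ and $(W1)$ the transition is driven by a subderivation on a smaller command, so I would apply the induction hypothesis to that subderivation (after checking the non-abortion hypothesis is inherited by the premise via the matching abort rule $(SA)$, $(PA1)$, $(PA2)$, $(RA1)$, $(RA2)$ or $(WA1)$) and then reassemble the conclusion with the same rule; the resource-configuration bookkeeping is unaffected by the heap and is handled identically to Proposition~\ref{proprec}. The purely control-flow rules $(S1)$, $(LP)$, $(IF1)$, $(IF2)$, $(P3)$, $(R0)$, $(W0)$ and $(W2)$ neither inspect nor modify the heap (they leave $h$ unchanged and only test the store via $s(B)$), so $h'=h\uplus h_F$, $h_F$ is trivially a subheap, and the corresponding small-heap transition yields $h'\setminus h_F=h$; here Proposition~\ref{astneval} guarantees that the boolean guard evaluates identically since only the store is consulted.

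The main obstacle I anticipate is the allocation command $\consk{x}{$\overline{\textsf{e}}$}$ in the base case: one must argue that the fresh locations returned when running on $h$ can be made to coincide with those returned when running on $h\uplus h_F$, or else that the statement is read up to the nondeterministic choice of addresses. Everything else either reduces to the standard SL frame property or is a routine propagation of the induction hypothesis through one structural rule, with the resource sets carried along exactly as in the proof of Proposition~\ref{proprec}. I would present the $(BCT)$ case in full and then treat the inductive cases by grouping them as "heap-transparent control rules" and "rules with a driving subderivation," giving the argument once per group rather than repeating it rule by rule.
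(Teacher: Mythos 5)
Your proposal is correct and follows essentially the same route as the paper's proof: induction on the rules defining $\rightarrow_{p}$, with the $(BCT)$ case discharged by the frame property of Separation Logic for basic commands, the heap-transparent control rules handled trivially, and the structural rules ($(S2)$, $(P1)$, $(P2)$, $(R1)$, $(R2)$, $(W1)$) handled by propagating the induction hypothesis after inheriting non-abortion through the matching abort rule. The only difference is one of detail: the paper simply cites that ``Separation Logic respects the frame property'' for $(BCT)$, whereas you unfold that citation into a per-command check and correctly flag the nondeterministic-allocation subtlety for $\consk{x}{$\overline{\textsf{e}}$}$ --- which is precisely the content the paper delegates to \cite[Section 2]{rey2}.
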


The proofs follow a standard pattern (See \ref{Appendix} for the proofs).

By safety monotonicity and frame property we know that the execution of parallel commands only affects his own heap; however it is necessary to have dual properties for the resource configuration. Next, we state these dual properties.

\begin{prop}\label{abortpar}
Let $C\in \mathcal{C}$, $(s,h)\in \mathcal{S}$, and $(O_1\cup O_2,L, D),(O_1,L\cup O_2, D)\in \mathcal{O}$.
If $C,(s,h,(O_1,L\cup O_2, D))\not\rightarrow_{p} \abok$, then $$C,(s,h,(O_1\cup O_2,L, D))\not\rightarrow_{p} \abok.$$
\end{prop}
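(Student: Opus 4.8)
The plan is to prove the contrapositive: assuming $C,(s,h,(O_1\cup O_2,L,D))\rightarrow_{p}\abok$, I derive $C,(s,h,(O_1,L\cup O_2,D))\rightarrow_{p}\abok$, proceeding by induction on the derivation of the abort transition. It is convenient to first record that the hypothesis $(O_1\cup O_2,L,D),(O_1,L\cup O_2,D)\in\mathcal{O}$ is equivalent to $O_1,O_2,L,D$ being pairwise disjoint, and that the union $O_1\cup O_2\cup L\cup D$ is the same for both configurations. This second observation immediately disposes of the two abort rules that inspect only membership in the whole configuration: for $(RA)$ the side condition $r\in\rho$ holds for one configuration iff it holds for the other, and likewise $r\notin\rho$ for $(WA)$, so each rule re-applies verbatim.

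The genuinely directional case is $(WA2)$, which aborts $\withink{r}{C'}$ precisely when $r$ is not among the owned resources. Moving from $(O_1\cup O_2,L,D)$ to $(O_1,L\cup O_2,D)$ shrinks the owned set from $O_1\cup O_2$ to $O_1$, so $r\notin O_1\cup O_2$ forces $r\notin O_1$ and the rule re-applies; this is exactly the asymmetry that makes the implication hold in one direction only. The purely structural rules $(SA)$, $(PA1)$, $(PA2)$ recurse on a subcommand under the same configuration, and $(BCA)$ does not read the configuration at all, so all of these follow directly from the induction hypothesis (or trivially).

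The main obstacle is the three rules that rethread the resource configuration, namely $(RA1)$, $(RA2)$ and $(WA1)$. In each the inner abort runs under a configuration obtained from the outer one by adding $r$ to the owned component (for $(RA1)$), adding $r$ to the available component (for $(RA2)$), or deleting $r$ from every component (for $(WA1)$). My approach is to re-express this inner configuration in the template $(O_1'\cup O_2',L',D')$ versus $(O_1',L'\cup O_2',D')$ --- for instance, for $(RA1)$ taking $O_1'=O_1\cup\{r\}$, $O_2'=O_2$, $L'=L$, $D'=D$, and for $(WA1)$ taking each primed set to be its unprimed version with $r$ removed --- so that the induction hypothesis applies to the subderivation and yields the corresponding inner abort under the relocated $O_2$; reapplying the same rule then produces the desired outer abort.

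The delicate point inside this case is that the induction hypothesis is stated only for configurations lying in $\mathcal{O}$, so I must check that the relocated inner configurations remain valid, i.e. that $O_1',O_2',L',D'$ stay pairwise disjoint. For $(WA1)$ this is automatic, since deleting $r$ from pairwise-disjoint sets preserves disjointness. For $(RA1)$ and $(RA2)$, adding $r$ might violate disjointness, and this can only happen when $r$ already belongs to the common union $O_1\cup O_2\cup L\cup D$; but in that case $r\in\rho$ for the outer configuration, so rule $(RA)$ fires directly and aborts both configurations with no appeal to the induction hypothesis. Splitting on whether $r$ lies in the outer configuration therefore closes these cases: when $r$ is fresh the relocated configurations are valid and the induction hypothesis applies, and when $r$ is present $(RA)$ settles the matter immediately.
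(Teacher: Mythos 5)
Your proof is correct and follows essentially the same route as the paper's: the contrapositive, established by induction on the derivation of the abort transition, with the same case analysis ($(RA)$/$(WA)$ via equality of the unions, $(WA2)$ as the one genuinely directional rule, structural recursion for $(SA)$, $(PA1)$, $(PA2)$, and triviality for $(BCA)$). If anything, your handling of $(RA1)$, $(RA2)$ and $(WA1)$ --- re-expressing the inner configuration in the $(O_1'\cup O_2',L',D')$ template and splitting on whether $r$ already occurs in the configuration, so that either $(RA)$ fires outright or the relocated configurations remain in $\mathcal{O}$ and the induction hypothesis legitimately applies --- is more careful than the paper's own proof, which dismisses these cases as ``similar to'' the purely structural $(PA1)$ case.
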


\begin{proof}
We will prove the contra-position by induction on the rules of $\rightarrow_{p}$.

Let $C\in \mathcal{C}$, $(s,h)\in \mathcal{S}$, and $(O_1\cup O_2,L, D),(O_1,L\cup O_2, D)\in \mathcal{O}$ such that 
$$
C,(s,h,(O_1\cup O_2,L, D))\rightarrow_{p} \abok.
$$

If the transition is given by $(BCA)$.
The transition is independent from the resource configuration. Then 
$$C,(s,h,(O_1,L\cup O_2, D))\rightarrow_{p} \abok.$$

If the transition is given by $(RA)$ or $(WA)$.
The conclusion follows from $$r\in(O_1,L\cup O_2, D)\ \  \textrm{iff} \ \ r\in(O_1\cup O_2,L, D).$$

If the transition is given by $(WA2)$.
We have $r\notin O_1\cup O_2$. Then $r\notin O_1$ and $$C,(s,h,(O_1,L\cup O_2, D))\rightarrow_{p} \abok.$$

If the transition is given by $(PA1)$.
We have  $$C_1,(s,h,(O_1\cup O_2,L, D))\rightarrow_{p} \abok.$$ 

Using the induction hypotheses, $C_1,(s,h,(O_1,L\cup O_2, D))\rightarrow_{p} \abok$. Then $$C,(s,h,(O_1,L\cup O_2, D))\rightarrow_{p} \abok.$$

The cases $(PA2)$, $(SA)$, $(RA1)$, $(RA2)$ and $(WA1)$ are similar to the previous case.
\end{proof}

The dual of the frame property for resource configurations is the following.

\begin{prop}\label{p1}
Let $C,C'\in\mathcal{S}$, $(s,h),(s',h')\in\mathcal{S}$, and $\rho_1,\rho_2,\rho'\in \mathcal{O}$ such that $\rho'=(O',L, D')$, $\rho_1=(O_1\cup O_2,L, D)$ and $\rho_2=(O_1,L\cup O_2, D)$.
If $C,(s,h,\rho_2)\not\rightarrow_{p} \abok$ and  $C, (s,h,\rho_1)\rightarrow_{p} C', (s',h',\rho')$, then $O_2\subseteq O'$ and $$C, (s,h,\rho_2)\rightarrow_{p} C', (s',h',(O'\setminus O_2,L\cup O_2, D')).$$
\end{prop}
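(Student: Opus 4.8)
The plan is to argue by induction on the rules of $\rightarrow_{p}$, reading off at each step which last rule derived $C,(s,h,\rho_1)\rightarrow_{p} C',(s',h',\rho')$. Throughout I would use that $\rho_1,\rho_2\in\mathcal{O}$ forces the four sets $O_1,O_2,L,D$ to be pairwise disjoint, and that by Proposition~\ref{proprec} every premise transition leaves its locked component equal to its starting value; this guarantees that each sub-transition again has locked component $L$ on the left, so that the induction hypothesis is applicable to it. The invariant I maintain is precisely the statement itself: the $\rho_2$-computation mimics the $\rho_1$-computation step by step, keeping $O_2$ parked in the locked component instead of the owned component.

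The routine cases are the rules whose effect is insensitive to the owned/locked placement of $O_2$. For $(BCT)$, $(S1)$, $(LP)$, $(IF1)$, $(IF2)$, $(P3)$ and $(R0)$ the resource configuration is unchanged and every side condition ($s(B)=\texttt{true}$, $r\notin\rho$, etc.) tests membership in $O\cup L\cup D$, which is the same set $O_1\cup O_2\cup L\cup D$ for $\rho_1$ and $\rho_2$; hence the same rule fires from $\rho_2$ and yields exactly $\rho_2=(O'\setminus O_2,L\cup O_2,D')$. The rule $(W0)$ is equally harmless: it moves a resource $r$ out of the available set $D$, and disjointness gives $r\notin O_2$, so the bookkeeping $O'\setminus O_2$ goes through verbatim. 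For the structural rules $(S2)$, $(P1)$, $(P2)$ I would first extract a non-abort hypothesis for the premise — using $(SA)$, respectively $(PA1)$/$(PA2)$, which lift a sub-command abort to an abort of the compound — then apply the induction hypothesis to the premise and re-wrap the resulting $\rho_2$-transition with the same rule, threading the configuration through unchanged.

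The recursive resource rules $(R1)$, $(R2)$ and $(W1)$ are the substantive cases. For each I would identify the premise transition and rewrite its owned component in the form $\widehat{O_1}\cup O_2$ — adding $r$ to $O_1$ for $(R1)$ and $(W1)$, and to the available set for $(R2)$. I would then use the matching abort rule ($(RA1)$ or $(RA2)$ for the resource block, $(WA1)$ for the within block, together with the appropriate $Locked$ side-condition) to show that the premise does not abort from the corresponding $\rho_2$-configuration, and feed it to the induction hypothesis. The output is re-wrapped by the same rule; what remains is set arithmetic, for instance $(O''\cup\{r\})\setminus O_2=(O''\setminus O_2)\cup\{r\}$ in $(W1)$ because $r\notin O_2$, and for $(R1)$ checking that $\big((O_{in}\setminus O_2)\setminus\{r\},\,(L\cup O_2)\setminus\{r\},\,D_{in}\setminus\{r\}\big)$ equals $\big(O'\setminus O_2,\,L\cup O_2,\,D'\big)$ using $r\notin L\cup O_2$.

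I expect the genuine obstacle to be the cases $(W1)$ and $(W2)$ when the resource $r$ used or released by $\withink{r}{\cdot}$ lies in $O_2$ rather than $O_1$. In that situation the $\rho_1$-step is legal, since $r$ is owned under $\rho_1$; but under $\rho_2$ the resource $r$ sits in the locked component, so $(W1)$/$(W2)$ cannot fire and instead $(WA2)$ applies, producing $\abok$. This is precisely where the hypothesis $C,(s,h,\rho_2)\not\rightarrow_{p}\abok$ earns its keep: it forbids $r\in O_2$, collapsing the bad subcase to vacuity and leaving only $r\in O_1$, handled as above. Making this dichotomy explicit — and confirming that it is exactly $(WA2)$ that the non-abort assumption blocks — is the conceptual heart of the argument; everything else is disjointness bookkeeping. (The statement is phrased for $C,C'\in\mathcal{S}$, which I read as the command grammar extended with $\withink{r}{\cdot}$, since those are the commands on which the transitions act.)
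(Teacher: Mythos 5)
Your proposal is correct and follows essentially the same route as the paper's proof: induction on the program transitions, with the trivial cases handled by insensitivity to where $O_2$ sits, the structural cases $(S2)$, $(P1)$, $(P2)$ handled by lifting non-abortion to the premise and re-wrapping the inductive output, the resource cases $(R1)$, $(R2)$, $(W1)$ handled via the matching abort rules plus set arithmetic, and — exactly as in the paper — the non-abort hypothesis from $\rho_2$ used in the $(W1)$/$(W2)$ cases to force $r\in O_1$ (otherwise $(WA2)$ would fire), which is indeed the crux. The only nitpick is a harmless slip in your description of $(W1)$, whose premise removes $r$ from the owned set (so $\widehat{O_1}=O_1\setminus\{r\}$) rather than adding it; your subsequent treatment of that case is nonetheless the right one.
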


\begin{proof}

Let $C,C',s,s',h,h',O_1,L,O_2,D, O',D'$ as stated in the proposition. 

Suppose that  $C,(s,h,(O_1,L\cup O_2,D))\not\rightarrow_{p} \abok$ and $$C, (s,h,(O_1\cup O_2,L, D))\rightarrow_{p} C', (s',h',(O',L, D')).$$

The prove is done by induction on the program transitions.

If the transition is given by $(S1)$, $(LP)$, $(IF1)$, $(IF2)$, $(BCT)$ or $(P3)$, the transition does not depend on the resource configuration. Then, the conclusion is immediate.

If the transition is given by $(W0)$.

We have $C=\withk{r}{B}{\tilde{C}}$, $C'=\withink{r}{\tilde{C}}$, $r\in D$, $s(B)=\texttt{true}$, $s'=s$, $h'=h$, $O'=O_1\cup O_2 \cup  \{r\}$ and $D'=D\setminus \{r\}$.

 Then $O_2\subseteq O'$ and $O'\setminus O_2=O_1\cup \{r\}$. Therefore
$$C, (s,h,(O_1,L\cup O_2, D))\rightarrow_{p} C', (s',h',(O'\setminus O_2,L\cup O_2, D')).$$

If the transition is given by $(W1)$.

We have $C=\withink{r}{\tilde{C}}$, $C'=\withink{r}{\tilde{C'}}$, $r\in (O_1\cup O_2)\cap O'$ and $$\tilde{C}, (s,h,(O_1\cup O_2\setminus \{r\},L, D))\rightarrow_{p} \tilde{C'}, (s',h',(O'\setminus \{r\},L, D')).$$

From $C,(s,h,(O_1,L\cup O_2, D))\not\rightarrow_{p} \abok$, we know that $r\in O_1$ and $$\tilde{C}, (s,h,(O_1\setminus \{r\},L\cup O_2, D))\not\rightarrow_{p} \abok.$$

Now, we can apply the induction hypothesis to conclude that $O_2\subseteq O'\setminus \{r\}$ and $$\tilde{C}, (s,h,(O_1\setminus \{r\},L\cup O_2, D))\rightarrow_{p} \tilde{C'}, (s',h',((O'\setminus \{r\})\setminus O_2,L\cup O_2, D')).$$

Note that $O_2\subseteq O'\setminus \{r\}\subseteq O'$ and $(O'\setminus \{r\})\setminus O_2=(O'\setminus O_2)\setminus \{r\}$.

From $r\in O_1\cap O'$ and $r\notin O_2$, we know that $r\in O_1 \cap (O'\setminus O_2)$.
Therefore $$C, (s,h,(O_1,L\cup O_2, D))\rightarrow_{p} C', (s',h',(O'\setminus O_2,L\cup O_2, D')).$$

If the transition is given by $(W2)$.

We have $C=\withink{r}{\skipk}$, $C'=\skipk$, $s'=s$, $h'=h$, $r\in O_1\cup O_2$, $O'=(O_1\cup O_2) \setminus \{r\}$ and $D'=D\cup \{r\}$.

As before, we know that $r\in O_1$. Hence we can rewrite the set of owned resources in the following expression $$O'=(O_1\setminus \{r\}) \cup O_2.$$

Then $O_2\subseteq O' $ and $$C, (s,h,(O_1,L\cup O_2, D))\rightarrow_{p} C', (s',h',(O'\setminus O_2,L\cup O_2, D')).$$

If the transition is given by $(R0)$.

We have $C=\resk{r}{\skipk}$, $C'=\skipk$, $s'=s$, $h'=h$, $r\notin(O_1\cup O_2,L, D)$, $O'=O_1\cup O_2$ and $D'=D$. Then $O_2\subseteq O'$. 

From $r\notin (O_1\cup O_2,L, D)$, we know that $r\notin (O_1,L\cup O_2, D)$. Therefore $$C, (s,h,(O_1,L\cup O_2, D))\rightarrow_{p} C', (s',h',(O'\setminus O_2,L\cup O_2, D')).$$

If the transition is given by $(R1)$.

We have $C=\resk{r}{\tilde{C}}$, $C'=\resk{r}{\tilde{C'}}$, $r\notin (O_1\cup O_2,L, D)$, $r\in Locked(\tilde{C})$ and
$$\tilde{C}, (s,h,(O_1\cup O_2\cup\{r\},L, D))\rightarrow_{p} \tilde{C'}, (s',h',(O'',L, D'')),$$ such that $O''\cup D''=O'\cup D' \cup \{r\}$.

From $C, (s,h,(O_1,L\cup O_2, D))\not\rightarrow_{p} \abok$, we know that $r\notin(O_1,L\cup O_2, D)$ and $$\tilde{C}, (s,h,(O_1\cup\{r\},L\cup O_2, D))\not\rightarrow_{p} \abok.$$

By induction hypothesis, we have that $O_2\subseteq O''$ and $$\tilde{C},(s,h,(O_1\cup \{r\}, L\cup O_2, D))\rightarrow_{p} \tilde{C},(s,h,(O''\setminus O_2, L\cup O_2, D'')).$$

From $O''\subseteq O'\cup\{r\}$ and $r\notin O_2$, we have that $O_2\subseteq O'$.

Moreover, we get that $(O''\setminus O_2 )\cup D''= (O'\setminus O_2) \cup D'\cup\{r\}$. Therefore $$C, (s,h,(O_1,L\cup O_2, D))\rightarrow_{p} C', (s',h',(O'\setminus O_2,L\cup O_2, D')).$$

The case $(R2)$ is similar to the previous case.

If the transition is given by $(P1)$.

We have $C=\park{C_{1}}{C_{2}}$, $C'= \park{C'_{1}}{C_{2}}$ and $$C_1,(s,h,(O_1\cup O_2,L,D))\rightarrow_{p} C'_1,(s',h',(O',L,D')).$$ 
And $C_1,(s,h,(O_1,L\cup O_2,D))\not\rightarrow_{p} \abok$, since $C,(s,h,(O_1,L\cup O_2,D))\not\rightarrow_{p} \abok$.

By the induction hypothesis, we conclude that $O_2\subseteq O'$ and $$C_1,(s,h,(O_1,L\cup O_2,D))\rightarrow_{p} C'_1,(s',h',(O'\setminus O_2,L\cup O_2,D')).$$

Therefore $$C, (s,h,(O_1,L\cup O_2, D))\rightarrow_{p} C', (s',h',(O'\setminus O_2,L\cup O_2, D')).$$

The cases $(P2)$ and $(S2)$ are similar to the previous case.
\end{proof}

The previous propositions allow us to make a correspondence between the transitions in a parallel execution to transitions of its commands executed independently.

For the soundness of the renaming rule, we prove that the execution of a program and its renaming version are equivalents.

\begin{prop}\label{abortrename}
Let $C,C'\in \mathcal{C}$, $(s,h), (s',h')\in\mathcal{S}$, $\rho,\rho'\in\mathcal{O}$, and $r,r'\in\textbf{Res}$ such that $r'\notin Res(C)$ and $r'\notin \rho$.
\begin{enumerate}
\item  $C, (s,h,\rho)\not\rightarrow_{p} \abok$ if and only if $C[r'/r], (s,h,\rho [r'/r])\not\rightarrow_{p} \abok$.

\item  $C, (s,h,\rho)\rightarrow_{p} C', (s',h',\rho')$ if and only if $$C[r'/r], (s,h,\rho[r'/r])\rightarrow_{p} C'[r'/r], (s',h',\rho'[r'/r]).$$
\end{enumerate}
\end{prop}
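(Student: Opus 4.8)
The plan is to prove both equivalences by rule induction on the derivation of $\rightarrow_p$, after first isolating the purely combinatorial facts about the renaming operation. Throughout, I would exploit that, since $r'\notin Res(C)$, the substitution $[r'/r]$ acts injectively on every resource name occurring in $C$ or in $\rho$; hence it commutes with the operations on configurations: for any resource name $\tilde r$ and configuration $\sigma$ one has $\tilde r\in\sigma$ iff $\tilde r[r'/r]\in\sigma[r'/r]$, and $(\sigma\setminus\{\tilde r\})[r'/r]=\sigma[r'/r]\setminus\{\tilde r[r'/r]\}$, $(O\cup\{\tilde r\},L,D)[r'/r]=(O[r'/r]\cup\{\tilde r[r'/r]\},L[r'/r],D[r'/r])$. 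I would also record the auxiliary identities $Locked(C[r'/r])=Locked(C)[r'/r]$, proved by a routine structural induction on $C$, and $Res(C[r'/r])=(Res(C)\setminus\{r\})\cup\{r'\}$ when $r\in Res(C)$. These are precisely the facts needed to transport the side conditions of the operational rules across the renaming.

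For part 2 I would prove only the forward implication by induction on the transition $C,(s,h,\rho)\rightarrow_p C',(s',h',\rho')$. The structural rules $(S1),(S2),(LP),(IF1),(IF2),(P1),(P2),(P3)$ and the basic-command rule $(BCT)$ are immediate: the former propagate the substitution through the induction hypothesis, the latter does not mention resources and leaves $(s,h)$ untouched. The cases needing care are those touching the configuration, namely $(R0),(R1),(R2),(W0),(W1),(W2)$. Here I would rewrite both the premise configuration and the conclusion configuration using the commutation identities above and check that each side condition ($\tilde r\notin\rho$, $\tilde r\in Locked(\tilde C)$, $s(B)=\texttt{true}$, and so on) is preserved under $[r'/r]$. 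The crucial point is that the freshness hypothesis is maintained when descending into a premise: rules $(R1),(R2)$ pass to a configuration augmented by the bound resource $\tilde r\in Res(C)$, and since $r'\notin Res(C)$ we have $r'\neq\tilde r$, so $r'$ stays absent from the augmented configuration and the induction hypothesis applies; rules $(W1),(W2)$ only remove resources, so absence of $r'$ is trivially preserved. The abort equivalence of part 1 is handled by the same scheme over the abort rules of Figure~\ref{fig:abort}: $(BCA)$ is configuration-independent, $(RA),(WA),(WA2)$ reduce to membership side conditions transported by the commutation identities, and $(RA1),(RA2),(WA1),(SA),(PA1),(PA2)$ recurse through the induction hypothesis exactly as above.

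Finally, I would obtain the reverse implications by symmetry rather than re-running the induction. Since $r'$ occurs neither in $C$ nor in $\rho$, applying $[r/r']$ undoes $[r'/r]$: we have $C[r'/r][r/r']=C$ and $\rho[r'/r][r/r']=\rho$, and moreover $r\notin Res(C[r'/r])$ and $r\notin\rho[r'/r]$, so the forward implication is itself applicable to $C[r'/r]$ with the swapped pair $(r',r)$. Instantiating the already-proved forward direction at $C[r'/r]$ under the renaming $[r/r']$ therefore yields the converse of both statements. The main obstacle is purely the bookkeeping in the resource-manipulating rules: at each such rule one must verify that the freshness condition survives the passage to the recursive premise and that the substitution commutes with every configuration operation occurring in that rule. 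Once the commutation lemmas and the $Locked$/$Res$ identities are in place, every individual case becomes mechanical.
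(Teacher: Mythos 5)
Your proposal is correct and follows essentially the same route as the paper's proof: one direction of each equivalence by rule induction on $\rightarrow_{p}$, transporting membership and $Locked$ side conditions through the commutation identities for $[r'/r]$, with the reverse direction obtained from the involution $C[r'/r][r/r']=C$, $\rho[r'/r][r/r']=\rho$. The only cosmetic difference is that you factor the commutation facts out as explicit auxiliary lemmas, whereas the paper invokes them inline within each case.
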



\begin{proof}
Let $C,C'\in \mathcal{C}$, $(s,h), (s',h')\in\mathcal{S}$, $\rho,\rho'\in\mathcal{O}$, and $r,r'\in\textbf{Res}$ such that $r'\notin Res(C)$ and $r'\notin \rho$.

Note that $C[r'/r][r/r']=C$ and $\rho [r'/r][r/r']=\rho$. Hence, we only need to prove one direction of the equivalence. We prove both by induction on $\rightarrow_{p}$.

For the first one, we suppose that $C, (s,h,\rho)\rightarrow_{p} \abok$ and prove that $C[r'/r], (s,h,\rho [r'/r])\rightarrow_{p} \abok$.

Suppose that the transition to the $\abok$ state is given by $(BCA)$.

Note that the transitions is independent from the resource configuration and $C[r'/r]=C$.
Then $$C[r'/r], (s,h,\rho[r'/r])\rightarrow_{p} \abok.$$

Suppose that the transition is given by $(PA1)$. Then $C=\park{C_{1}}{C_{2}}$ and $$C_1, (s,h,\rho)\rightarrow_{p} \abok.$$

We know that $r'\notin Res(C_1)$, because $r'\notin Res(C)$.
Using the induction hypothesis, we have that $$C_1[r'/r], (s,h,\rho[r'/r])\rightarrow_{p} \abok.$$ 

Hence $$C[r'/r], (s,h,\rho [r'/r])\rightarrow_{p} \abok.$$

The cases $(P2)$ and $(SA)$ are identical to the previous case.

Suppose that the transition is given by $(RA)$. We have $C=\resk{\hat{r}}{\tilde{C}}$ and $\hat{r}\in\rho$.

Note that $C[r'/r]=\resk{\hat{r}[r'/r]}{\tilde{C}[r'/r]}$ and $$\hat{r}\in\rho\ \ \textrm{iff}\ \ \hat{r}[r'/r]\in\rho[r'/r].$$

Therefore $$C[r'/r], (s,h,\rho[r'/r])\rightarrow_{p} \abok.$$

The cases $(WA)$ and $(WA2)$ are analogous to the case before.

Suppose that the transition is given by $(RA1)$. We have $C=\resk{\hat{r}}{\tilde{C}}$, $\hat{r}\in Locked(\tilde{C})$ $\hat{r}\notin \rho=(O,L,D)$ and $$\tilde{C}, (s,h,(O\cup\{\hat{r}\},L,D))\rightarrow_{p} \abok.$$ 

We have that $r'\notin Res(\tilde{C})\subset Res(C)$. From the induction hypothesis, we conclude that $$\tilde{C}[r'/r], (s,h,(O[r'/r]\cup\{\hat{r}[r'/r]\},L[r'/r],D[r'/r]))\rightarrow_{p} \abok.$$

Note that $C[r'/r]=\resk{\hat{r}[r'/r]}{\tilde{C}[r'/r]}$ and  $$\hat{r}\in Locked(\tilde{C})\ \ \textrm{iff}\ \ \hat{r}[r'/r]\in Locked(\tilde{C}[r'/r]).$$

Therefore $$C[r'/r], (s,h,\rho[r'/r])\rightarrow_{p} \abok.$$

The case $(RA2)$ is analogous to the previous case.

Suppose that the transition is $(WA1)$. We have $C=\withink{\hat{r}}{\tilde{C}}$ and $$\tilde{C}, (s,h,\rho\setminus \{\hat{r}\})\rightarrow_{p} \abok.$$ 

We know that $r'\notin Res(\tilde{C})$ and $r'\notin \rho \setminus \{\hat{r}\}$, because $r'\notin Res(C)$ and $r'\notin \rho$ respectively. 
By induction hypothesis, we have the following transition $$\tilde{C}[r'/r], (s,h,(\rho\setminus \{\hat{r}\})[r'/r])\rightarrow_{p} \abok.$$ 

Note that $(\rho\setminus \{\hat{r}\})[r'/r]=\rho[r'/r]\setminus \{\hat{r}[r'/r]\}$.
Therefore $$C[r'/r], (s,h,\rho[r'/r])\rightarrow_{p} \abok.$$

This concludes the proof of the first equivalence.
For the second equivalence, we suppose that $C, (s,h,\rho)\rightarrow_{p}C', (s',h',\rho')$. And show that $$C[r'/r], (s,h,\rho[r'/r])\rightarrow_{p} C'[r'/r], (s',h',\rho'[r'/r]).$$

Suppose that the transition is given by $(BCT)$, $(LP)$, $(IF1)$, $(IF2)$, $(S1)$ or $(P3)$. The transition does not depend in the resource context or the resource names. So, $C[r'/r], (s,h,\rho[r'/r])\rightarrow_{p}C'[r'/r], (s',h',\rho'[r'/r])$.

Suppose that the transition is given by $(P1)$, $(P2)$ or $(S2)$. Using the induction hypothesis, it is straightforward that $$C[r'/r], (s,h,\rho[r'/r])\rightarrow_{p}C'[r'/r], (s',h',\rho'[r'/r]).$$

Suppose that the transition is given by $(R0)$.
Then $C=\resk{\hat{r}}{\skipk}$, $C'=\skipk$ and $\hat{r}\notin \rho$. Note that $\hat{r}[r'/r]\notin \rho[r'/r]$. Hence $$C[r'/r], (s,h,\rho[r'/r])\rightarrow_{p}C'[r'/r], (s',h',\rho'[r'/r]).$$

Suppose that the transition is given by $(R1)$. Then $C=\resk{\hat{r}}{\tilde{C}}$, $C'=\resk{\hat{r}}{\tilde{C}'}$, $\hat{r}\notin \rho$, $\hat{r}\in Locked(C)$ and $$\tilde{C}, (s,h,(O\cup\{\hat{r}\}, L ,D))\rightarrow_{p} \tilde{C}', (s',h',\rho''),$$ where $\rho''\setminus \{\hat{r}\}=\rho'$. 

Note that $r'\notin Res(\tilde{C})$ and $r'\notin (O\cup\{\hat{r}\}, L ,D)$.
By the induction hypothesis, we have the following transition $$\tilde{C}[r'/r], (s,h,(O\cup\{\hat{r}\}, L ,D)[r'/r])\rightarrow_{p}\tilde{C}'[r'/r], (s',h',\rho''[r'/r]).$$ 
Moreover, we know that $(\rho''\setminus \{\hat{r}\})[r'/r]=\rho''[r'/r]\setminus \{\hat{r}[r'/r]\}=\rho'[r'/r]$ and $\hat{r}[r'/r]\in Locked(C[r'/r])$.
Therefore, $$C[r'/r], (s,h,\rho[r'/r])\rightarrow_{p}C'[r'/r], (s',h',\rho'[r'/r]).$$

The cases $(R2)$ and $(W1)$ are analogous to the previous case.

Suppose that it is $(W0)$. Then  $s(B)=\texttt{true}$, $C=\withk{\hat{r}}{B}{\tilde{C}}$, $C'=\withink{\hat{r}}{\tilde{C}}$, $\rho=(O,L,D\cup \{\hat{r}\})$ and $\rho'=(O\cup \{\hat{r}\},L,D)$.

When $\hat{r}\neq r$, the conclusion is immediate. Suppose that $\hat{r}= r$. 

We have $C[r'/r]=\withk{r'}{B}{\tilde{C}[r'/r]}$, $C'[r'/r]=\withink{r'}{\tilde{C}[r'/r]}$, $\rho[r'/r]=(O,L,D\cup \{r'\})$ and $\rho'[r'/r]=(O\cup \{r'\},L,D)$.

It follows that $$C[r'/r], (s,h,\rho[r'/r])\rightarrow_{p}C'[r'/r], (s',h',\rho'[r'/r]).$$

The case $(W2)$ is analogous to the previous case.
%
%
%
%
\end{proof}

\section{Validity}


In this section, we start by defining the validity of specifications in the SOS presented before. This captures the idea that a specification is valid if and only if every execution starting from a state that respects the precondition and the shared state is not faulty and if it terminates, then the postcondition and the shared state are respected. 

Let $\Gamma$ be a resource context and $D=\{r_{i_{1}}, \ldots, r_{i_{k}}\}\subseteq Res(\Gamma)$, we define 
$$\displaystyle\underset{r\in D}{\circledast} \Gamma(r):=R_{i_{1}}\ast\ldots\ast R_{i_{k}},\quad inv(\Gamma):=\displaystyle\underset{r\in Res(\Gamma)}{\circledast} \Gamma(r).$$

\begin{defi}\label{validity}
We write $\Gamma \models \{P\} C \{Q\}$, if for every  $(s,h)\in\mathcal{S}$ such that $s,h \models P \ast inv(\Gamma)$, we have that
\begin{itemize}
	\item $C,(s,h, (\emptyset,\emptyset,res(\Gamma))\not\rightarrow^{k}_{p} \abok$, for every $k\geq 0$. And
	\item If there exist  $(s',h')\in\mathcal{S}$ and $k\geq 0$ such that $$C,(s,h,(\emptyset,\emptyset,res(\Gamma)))\rightarrow^{k}_{p} \skipk, (s',h',(\emptyset,\emptyset,res(\Gamma))),$$ then $s',h' \models Q \ast inv(\Gamma)$.
\end{itemize}
\end{defi}

However we were not able to inductively prove the soundness of CSL using this notion, because we can not emulate the change of parallel execution in all its parts. The rest of this section is devoted to see how we overcome this difficulty. Thus we introduce the environment transition, that will be essential to spread changes made in the state by one program to other parallel programs. And we give a refined notion of validity for the SOS extended with the environment transition, this new notion is called safety. We finish this section by seeing that safety implies validity.

\subsection{Environment transition}\label{envtrasec}

In order to define the environment transition, we define the environment transformation respecting a set of variables. This transformation modifies  the storage and the resource configuration, afterwards the environment transition combines this transformations with modification in the shared state.

\begin{defi}
Let $(s,h,(O,L,D)), (s',h',(O',L',D'))$ be states and $A\subseteq \textbf{Var}$.
We say that the \emph{environment transforms} $(s, h,(O,L,D))$ into $(s', h,(O,L',D'))$ respecting $A$ and we write $(s, h,(O,L,D)) \stackrel{A}{\leftrightsquigarrow} (s', h,(O,L',D'))$ if $s(x)=s'(x)$, for every $x\in A$, and $L'\cup D' = L\cup D$.
\end{defi}

Note that the environment transformation preserves the local heap and the owned resources, since other programs cannot change them. Furthermore, the environment transformation, $\stackrel{A}{\leftrightsquigarrow}$, naturally defines a relation between states. It is easy to see that this relation is an equivalence relation and it is order reversing with respect to $A$. In the next proposition, we state this properties.

\begin{prop}\label{amb}
Let $A',A\subseteq\textbf{Var}$.
The relation $\stackrel{A}{\leftrightsquigarrow}$ is an equivalence relation. 
If $A'\subseteq A$ and $(s,h,\rho)\stackrel{A}{\leftrightsquigarrow}(s',h',\rho')$, then $(s,h,\rho) \stackrel{A'}{\leftrightsquigarrow}(s',h',\rho')$.
\end{prop}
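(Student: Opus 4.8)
The plan is to prove the two assertions separately, both by unwinding the definition of $\stackrel{A}{\leftrightsquigarrow}$ given just above the statement. Recall that $(s, h,(O,L,D)) \stackrel{A}{\leftrightsquigarrow} (s', h,(O,L',D'))$ holds precisely when the heap is unchanged, the owned set $O$ is unchanged, $s$ and $s'$ agree on every $x\in A$, and $L'\cup D' = L\cup D$. Every condition is therefore a symmetric/transitive condition on storages and resource sets, so I expect the whole argument to be routine verification.

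First I would establish that $\stackrel{A}{\leftrightsquigarrow}$ is an equivalence relation by checking reflexivity, symmetry, and transitivity directly. For reflexivity, taking $s'=s$, $L'=L$, $D'=D$ makes the agreement $s(x)=s'(x)$ trivial and gives $L'\cup D'=L\cup D$, so every state relates to itself. For symmetry, I would observe that the defining conditions are all symmetric: the predicate $s(x)=s'(x)$ for all $x\in A$ is symmetric in $s,s'$, the set-union equation $L'\cup D'=L\cup D$ is symmetric, and the constraints that the heap and $O$ are preserved are manifestly symmetric; hence if $(s,h,\rho)\stackrel{A}{\leftrightsquigarrow}(s',h',\rho')$ then $(s',h',\rho')\stackrel{A}{\leftrightsquigarrow}(s,h,\rho)$. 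For transitivity, given $(s,h,\rho)\stackrel{A}{\leftrightsquigarrow}(s',h',\rho')$ and $(s',h',\rho')\stackrel{A}{\leftrightsquigarrow}(s'',h'',\rho'')$, I would chain the equalities: $s(x)=s'(x)=s''(x)$ for $x\in A$, the heaps and owned sets all coincide with the common value $h,O$, and $L''\cup D''=L'\cup D'=L\cup D$, yielding the required relation between the first and third states.

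For the second assertion (order reversal in the subscript), I would assume $A'\subseteq A$ and $(s,h,\rho)\stackrel{A}{\leftrightsquigarrow}(s',h',\rho')$. The hypothesis gives $s(x)=s'(x)$ for every $x\in A$, and since $A'\subseteq A$, this agreement holds a fortiori for every $x\in A'$. The heap, owned-resource, and $L\cup D=L'\cup D'$ conditions do not mention the subscript at all, so they transfer unchanged. Hence $(s,h,\rho)\stackrel{A'}{\leftrightsquigarrow}(s',h',\rho')$, as desired.

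I do not anticipate any genuine obstacle here; the only point requiring minor care is confirming that the defining relation as written is indeed symmetric despite being stated with $s$ on the left and $s'$ on the right — one must note that the equality $s(x)=s'(x)$ and the set equation $L'\cup D'=L\cup D$ are both symmetric predicates, rather than directed transformations, so the suggestive arrow notation does not secretly encode an asymmetry. Once that is observed the proof is a short chain of elementary set- and function-equality manipulations.
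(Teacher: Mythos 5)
Your proof is correct and is exactly the routine verification the paper has in mind: the paper in fact gives no proof of Proposition~\ref{amb}, remarking just before its statement that ``it is easy to see'' that $\stackrel{A}{\leftrightsquigarrow}$ is an equivalence relation and order reversing with respect to $A$. Your direct unwinding of the definition (reflexivity, symmetry, transitivity, and monotone restriction of the agreement set from $A$ to $A'\subseteq A$), including the observation that the relation implicitly fixes the heap and the owned set $O$, supplies precisely the omitted check.
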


We denote the \emph{environment transition} by $\xrightarrow{A, \Gamma}_{e}$, it is a relation between $(C,(s,h\uplus h_G,\rho))$ and $(C,(s', h\uplus h'_G,\rho'))$, where $C$ is a command and $(s,h\uplus h_G,\rho)$, $(s', h\uplus h'_G,\rho')$ are states, and it is defined by the rule below.
 Consider the set $A'=A\cup \bigcup_{r\in Locked(C)}PV(r)$. If $(s, h,\rho) \stackrel{A'}{\leftrightsquigarrow} (s', h,\rho')$, $s, h_G\models \underset{r\in D}{\circledast} \Gamma(r)$ and $s', h'_G\models \underset{r\in D'}{\circledast} \Gamma(r)$, then 
$$\infer[(E)]{C, (s,h\uplus h_G,\rho)\xrightarrow{A, \Gamma}_{e} C, (s', h\uplus h'_G,\rho')}{}.$$

As noted before the environment transition is used to simulate modification done by parallel program. The environment transition can be used to: change the storage, except for variables in the rely-set $A$ or variables protected by a locked resources; interchange locked resources and available resource; and modify the available shared heap.

We extend the transitions on the SOS with the environment transition, and we define the relation $\xrightarrow{A, \Gamma}$ from $(C,(s,h,\rho))$ to $(C',(s',h',\rho'))$ or to $\abok$, where $C,C'$ are commands and $(s,h,\rho), (s',h',\rho')$ are states. 
This relation is given by $$\xrightarrow{A, \Gamma}\  =\  \rightarrow_{p}\cup \xrightarrow{A, \Gamma}_{e}.$$

\subsection{Safety}

For a command $C$, we associate the set of variables passive to be \emph{changed} by $C$ in the next transition, and we denote it by $chng(C)$. This set consists of all variables $\textsf{x}$ such that $C$ can perform a transition using $\assigk{x}{e}$; $\loadk{x}{e}$ or $\consk{x}{e}$.

In the next definition of a program's safety with respect to a state for the following $n$ transitions, we include some additional properties that will be useful to prove the soundness of CSL.
\begin{defi}
Let $C\in \mathcal{C}$, $(s,h)\in\mathcal{S}$, $\rho\in\mathcal{O}$, $\Gamma$ be a resource context, $Q$ be an assertion and $A\subseteq \textbf{Var}$. We say that $\mathit{Safe}_0(C, s, h,\rho, \Gamma, Q, A)$ is always valid, and $\mathit{Safe}_{n+1}(C, s, h,\rho, \Gamma, Q, A)$ is valid if:
	\begin{enumerate}
		\item[(i)] If $C=\skipk$, then $s,h \models Q$;
		\item[(ii)]  $C,(s,h,\rho)\not\rightarrow_{p} \abok$;
		\item[(iii)] $chng(C)\cap \bigcup_{r\in L\cup D} PV(r)= \emptyset$;
		\item[(iv)] For every $h_G$ such that $h\bot h_G$, $s,h_G \models  \underset{r\in D}{\circledast} \Gamma(r)$ and $$ C,(s,h\uplus h_G,\rho)\xrightarrow{A, \Gamma} C', (s', \hat{h},\rho'),$$ then there exist $h'$ and $h'_G$ such that $\hat{h}= h'\uplus h'_G$, $$s',h'_G \models \underset{r\in D'}{\circledast} \Gamma(r),\quad \mathit{Safe}_n(C', s', h',\rho', \Gamma, Q, A)\textrm{ is valid.}$$
\end{enumerate}
\end{defi}

The property $(i)$ states that if the execution terminates, then $Q$ is respected. In the property $(ii)$, we ensure that the next transition of $C$ does not abort for the state $(s,h,\rho)$. The property $(iii)$ guarantees that the next transition of $C$ does not change variables protected by resources not owned. In the final condition $(iv)$, we require that the available shared state is preserved after every transition and that the posterior transitions respects this conditions.  

In the next theorem, we see that if a program is safe for every number of transitions and for every state that respects the pre-condition, then the corresponding specification is valid with respect to the SOS. The theorem is proved by induction on the number of program's transitions.

\begin{teo}\label{safetosemop}
Let $C\in\mathcal{C}$, let $P,Q$ be assertions, let $\Gamma$ be a resource context and $A\subseteq \textbf{Var}$. If for every $(s,h)\in\mathcal{S}$ and $n\geq 0$ such that $s,h\models P$, we have that $\mathit{Safe}_n(C,s,h,(\emptyset,\emptyset, Res(\Gamma)), \Gamma, Q, A)$ is valid, then $$\Gamma \models \{P\} C \{Q\}.$$
\end{teo}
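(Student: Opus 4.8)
The plan is to derive the two clauses of Definition~\ref{validity} by iterating clause $(iv)$ of the $\mathit{Safe}$ predicate along a program execution, reading off the no-abort guarantee from clause $(ii)$ and the postcondition from clause $(i)$. First I would fix $(s,h)$ with $s,h\models P\ast inv(\Gamma)$; by the semantics of $\ast$ there is a disjoint splitting $h=h_P\uplus h_{inv}$ with $s,h_P\models P$ and $s,h_{inv}\models inv(\Gamma)$. Since $inv(\Gamma)=\underset{r\in Res(\Gamma)}{\circledast}\Gamma(r)$ and the initial configuration is $\rho_0=(\emptyset,\emptyset,Res(\Gamma))$, the heap $h_{inv}$ is exactly a witness for the shared part with respect to $D_0=Res(\Gamma)$. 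Because $s,h_P\models P$, the hypothesis supplies $\mathit{Safe}_n(C,s,h_P,\rho_0,\Gamma,Q,A)$ for every $n\geq 0$, and these are the predicates I will consume.

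The core is a one-step preservation observation, which is just clause $(iv)$ read through the inclusion $\rightarrow_p\ \subseteq\ \xrightarrow{A,\Gamma}$: if $\mathit{Safe}_{n+1}(C,s,h_{loc},\rho,\Gamma,Q,A)$ holds, $h_{loc}\bot h_G$, $s,h_G\models\underset{r\in D}{\circledast}\Gamma(r)$ and $C,(s,h_{loc}\uplus h_G,\rho)\rightarrow_p C',(s',\hat h,\rho')$, then $\hat h$ splits as $h_{loc}'\uplus h_G'$ with $s',h_G'\models\underset{r\in D'}{\circledast}\Gamma(r)$ and $\mathit{Safe}_n(C',s',h_{loc}',\rho',\Gamma,Q,A)$. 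I would then fix an arbitrary non-aborting program execution $C,(s,h,\rho_0)\rightarrow_p C_1,(s_1,h_1,\rho_1)\rightarrow_p\cdots$ and show, by induction on its length $i$, that the full heap decomposes as $h_i=h_{loc}^{(i)}\uplus h_G^{(i)}$ with $s_i,h_G^{(i)}\models\underset{r\in D_i}{\circledast}\Gamma(r)$ and $\mathit{Safe}_{m}(C_i,s_i,h_{loc}^{(i)},\rho_i,\Gamma,Q,A)$ for a suitable index $m$. The base case is the previous paragraph (taking $h_{loc}^{(0)}=h_P$ and $h_G^{(0)}=h_{inv}$), and each step is a single application of the preservation observation to the transition actually taken, which is legitimate precisely because the full heap $h_i$ coincides with $h_{loc}^{(i)}\uplus h_G^{(i)}$ so the real execution step is a transition out of $(s_i,h_{loc}^{(i)}\uplus h_G^{(i)},\rho_i)$.

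For the no-abort clause I would argue by contradiction: suppose some chain satisfies $C,(s,h,\rho_0)\rightarrow_p^{k}\abok$, i.e. $k-1$ ordinary steps followed by an aborting step from $(s_{k-1},h_{k-1},\rho_{k-1})$. Starting from $\mathit{Safe}_{k}(C,s,h_P,\rho_0,\Gamma,Q,A)$ and applying the preservation observation along the first $k-1$ steps yields $\mathit{Safe}_{1}(C_{k-1},s_{k-1},h_{loc}^{(k-1)},\rho_{k-1},\Gamma,Q,A)$ with $h_{k-1}=h_{loc}^{(k-1)}\uplus h_G^{(k-1)}$. Clause $(ii)$ gives $C_{k-1},(s_{k-1},h_{loc}^{(k-1)},\rho_{k-1})\not\rightarrow_p\abok$, whereas the aborting step occurs on the larger heap $h_{k-1}$; the contrapositive of safety monotonicity (Proposition~\ref{safetymono}) then forces an abort from $(s_{k-1},h_{loc}^{(k-1)},\rho_{k-1})$, a contradiction. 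For the termination clause, assuming $C,(s,h,\rho_0)\rightarrow_p^{k}\skipk,(s',h',(\emptyset,\emptyset,Res(\Gamma)))$, I would apply the preservation observation $k$ times starting from $\mathit{Safe}_{k+1}(C,s,h_P,\rho_0,\Gamma,Q,A)$, reaching $\mathit{Safe}_{1}(\skipk,s',h_{loc}',(\emptyset,\emptyset,Res(\Gamma)),\Gamma,Q,A)$ with $h'=h_{loc}'\uplus h_G'$ and $s',h_G'\models\underset{r\in Res(\Gamma)}{\circledast}\Gamma(r)=inv(\Gamma)$. Since the reached command is $\skipk$, clause $(i)$ gives $s',h_{loc}'\models Q$, and combining the two assertions over the disjoint parts yields $s',h'\models Q\ast inv(\Gamma)$, as required.

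The only genuinely delicate points will be the bookkeeping of the evolving local/shared split of the heap and of the resource configuration across the iteration, and the use of the contrapositive of safety monotonicity to transport the non-abort guarantee from the local heap $h_{loc}^{(k-1)}$ back to the heap on which the concrete execution actually runs. Everything else is a direct unfolding of the $\mathit{Safe}$ clauses, together with the observation that pure program steps are a special case of the enriched relation $\xrightarrow{A,\Gamma}$, so the environment transitions built into safety are never needed here but guarantee that clause $(iv)$ covers each step taken.
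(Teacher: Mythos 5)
Your proposal is correct and takes essentially the same approach as the paper: the paper only states that the theorem ``is proved by induction on the number of program's transitions,'' and your argument is exactly that induction, realized by iterating clause $(iv)$ of $\mathit{Safe}$ along the execution while maintaining the local/shared heap split, and using Proposition~\ref{safetymono} to transport the clause-$(ii)$ no-abort guarantee from the local heap to the full heap. The bookkeeping you flag (the evolving decomposition $h_i=h_{loc}^{(i)}\uplus h_G^{(i)}$ with the available-resource invariants, and the appeal to monotonicity) is precisely what makes the sketched induction go through, so no gap remains.
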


In order to prove the soundness of CSL, by the theorem above, is sufficient to show that every derivable specification on CSL implies safety, a result we prove in the next section.

\section{Soundness}\label{soundness}

%

We sketch here the soundness of CSL with respect to the SOS. First, we state the main result of this work, the soundness of CSL. Next we present an intermediate theorem that, together with the Theorem~\ref{safetosemop}, proves the main result. The intermediate theorem says that every derivable specification in CSL is safe in the extended operational semantics.

\begin{teo}\label{main}
If $\Gamma\vdash_{A} \{P\} C \{Q\}$, then $\Gamma\models \{P\} C\{Q\}$.
\end{teo}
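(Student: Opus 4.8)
The plan is to prove the intermediate theorem announced before the statement: \emph{if $\Gamma\vdash_A\{P\}C\{Q\}$, then for every $(s,h)\in\mathcal{S}$ with $s,h\models P$ and every $n\geq 0$, $\mathit{Safe}_n(C,s,h,(\emptyset,\emptyset,Res(\Gamma)),\Gamma,Q,A)$ is valid.} Together with Theorem~\ref{safetosemop} this gives $\Gamma\models\{P\}C\{Q\}$, which is exactly the claim. I would prove it by induction on the derivation of $\Gamma\vdash_A\{P\}C\{Q\}$, with an inner induction on the step index $n$: $\mathit{Safe}_0$ is vacuous, and to obtain $\mathit{Safe}_{n+1}$ I check clauses (i)--(iv), feeding the hypotheses into clause (iv) to supply $\mathit{Safe}_n$ of the residual command after one step of $\xrightarrow{A,\Gamma}$. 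In every case clause (iv) must be checked for both a program transition and an environment transition; the latter is uniform, since the environment transformation $\stackrel{A'}{\leftrightsquigarrow}$ fixes the local heap and the owned resources and alters the store only off the rely-set, so the relevant pre/postconditions (whose free variables lie in the rely-set) are preserved by Proposition~\ref{astneval} and Proposition~\ref{amb}. Because the residual of a transition is generally not a sub-derivation, I would build closure under transitions into the statement (it is already present in clause (iv)) and prove a small composition lemma for $\mathrm{(SEQ)}$: safety of $C_1$ towards $P_2$ plus safety of $C_2$ from any $P_2$-state yields safety of $\seqk{C_1}{C_2}$; the loop rule $\mathrm{(LP)}$ then follows by unfolding through the $(LP)$ transition.

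Next I would dispatch the routine cases. $\mathrm{(SKP)}$ is immediate, as $\skipk$ has no program transition and clause (i) holds with $Q=P$. For $\mathrm{(BC)}$ I would use soundness of the underlying judgement $\vdash^{SL}\{P\}c\{Q\}$ for clauses (i)--(ii), safety monotonicity (Proposition~\ref{safetymono}) and the frame property (Proposition~\ref{frameprop}) to accommodate the extra shared heap $h_G$ of clause (iv), and the side condition $mod(c)\notin PV(\Gamma)$ for clause (iii). The rule $\mathrm{(FRA)}$ uses the same two propositions to carry the frame $R$ along each transition, together with $mod(C)\cap FV(R)=\emptyset$. For $\mathrm{(CONS)}$ I would use monotonicity of safety in the rely-set (Proposition~\ref{amb}) and the fact that weakening $P$ and strengthening $Q$ preserve clause (i) and the hypotheses of (iv). The rule $\mathrm{(CONJ)}$ is where preciseness of the invariants is essential: it pins the shared subheap uniquely, so both conjuncts speak about the same heap split and $Q_1\wedge Q_2$ holds. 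Finally $\mathrm{(AUX)}$ holds because auxiliary assignments become $\skipk$ and touch neither the control flow nor the heap, while $\mathrm{(REN)}$ follows from Proposition~\ref{abortrename}, which transports both non-abortion and stepping across a resource renaming.

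The resource rules $\mathrm{(CR)}$ and $\mathrm{(RES)}$ require careful bookkeeping of the resource configuration, for which Proposition~\ref{lockres} is the workhorse: along any execution started from a non-extended command the owned set coincides with $Locked$, so ownership transfers are forced. For $\mathrm{(CR)}$, the acquisition $(W0)$ moves the invariant $R$ out of the available shared heap into the thread's local heap, matching the premise's precondition $(P\wedge B)\ast R$; the body runs under the rely-set $A\cup X$; and the release $(W2)$ returns $R$ to the shared heap; clause (iii) is maintained because while $r$ is not owned its protected variables $X$ are not modified. For $\mathrm{(RES)}$ I would relate executions of $\resk{r}{C}$ in context $\Gamma$ to executions of $C$ in the extended context $\Gamma,r(X):R$ via $(R1)$/$(R2)$ and their abort counterparts, again using Proposition~\ref{lockres} to show the local resource counts as owned exactly when it is locked.

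The main obstacle is the parallel rule $\mathrm{(PAR)}$, the very reason for the environment-transition machinery. From $s,h\models(P_1\ast P_2)\ast inv(\Gamma)$ I would split $h=h_1\uplus h_2$ with $s,h_i\models P_i$, split the owned resources, and then maintain as an invariant throughout the execution of $\park{C_1}{C_2}$ that the two threads sit in safe states with consistent heap and resource splits, i.e. that $\mathit{Safe}_m$ holds for each thread at the configurations $(O_1,L\cup O_2,D)$ and $(O_2,L\cup O_1,D)$, where $O_1\cup O_2$ is the owned set of the composite; these per-thread predicates are carried forward using clause (iv) of each thread's safety rather than re-derived. The crux is that a program step of $C_1$ (rule $(P1)$) must be presented to $C_2$ as an \emph{environment} step $\xrightarrow{A_2,\Gamma}_{e}$, and symmetrically for $(P2)$: the side condition~(\ref{parsc}) is exactly what guarantees that $C_1$'s store changes respect $A_2$ and the protected variables of the resources $C_2$ currently locks, so that the transformation $\stackrel{A'}{\leftrightsquigarrow}$ is legitimate, while the frame property plays the analogous role for the heap. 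To split and recombine the resource configuration I would use Proposition~\ref{abortpar} to move non-abortion between $(O_1\cup O_2,L,D)$ and $(O_1,L\cup O_2,D)$, and Proposition~\ref{p1}, the dual frame property for resource configurations, to move the step itself while tracking how ownership migrates. Clause (i) holds because when both threads reach $\skipk$ (rule $(P3)$) the split gives $s,h\models Q_1\ast Q_2$, and clause (ii) follows from Proposition~\ref{abortpar} with the abort rules $(PA1)$/$(PA2)$. I expect the genuine difficulty to be keeping the heap split, the resource split, and the translation of program steps into environment transitions mutually consistent at every step.
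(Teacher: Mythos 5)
Your proposal follows essentially the same route as the paper: it reduces the theorem to the intermediate safety result (Theorem~\ref{CSLtosafe}) combined with Theorem~\ref{safetosemop}, proves safety by induction on the derivation with an inner induction on $n$, and relies on the same auxiliary machinery --- Propositions~\ref{safetymono}, \ref{frameprop}, \ref{abortpar}, \ref{p1}, \ref{lockres} and \ref{abortrename}, per-rule safety lemmas for $\mathrm{(SKP)}$, $\mathrm{(BC)}$, $\mathrm{(FRA)}$, $\mathrm{(CR)}$, $\mathrm{(RES)}$, $\mathrm{(REN)}$, $\mathrm{(AUX)}$, and, crucially for $\mathrm{(PAR)}$, the translation of one thread's program steps into the other thread's environment transitions justified by the side condition~(\ref{parsc}). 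This matches the paper's proof in both structure and substance.
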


Theorem~\ref{main} is an immediate consequence of the next theorem and Theorem~\ref{safetosemop}.

\begin{teo}\label{CSLtosafe}
Let $C$ be a command, let $P,Q$ be assertions, let $\Gamma$ be a resource context and $A\subseteq \textbf{Var}$. If $\Gamma\vdash_{A} \{P\} C \{Q\}$, then  for every $(s,h)\in\mathcal{S}$ and $n\geq 0$ such that $s,h\models P$, we have that $\mathit{Safe}_n(C,s,h,(\emptyset,L,D), \Gamma, Q ,A)$ is valid, where $L\cup D = Res(\Gamma)$.
\end{teo}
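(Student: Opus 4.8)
The plan is to prove Theorem~\ref{CSLtosafe} by induction on the derivation of $\Gamma\vdash_A\{P\}C\{Q\}$, with an inner induction on the step count $n$ inside each case, since $\mathit{Safe}_{n+1}$ is defined in terms of $\mathit{Safe}_n$ on successor states. First I would establish two structural monotonicity lemmas to smooth the consequence-style rules: safety is monotone in the rely-set (if $A_1\subseteq A_2$ then $\mathit{Safe}_n(C,s,h,\rho,\Gamma,Q,A_1)$ implies $\mathit{Safe}_n(C,s,h,\rho,\Gamma,Q,A_2)$, because by Proposition~\ref{amb} enlarging $A$ shrinks the admissible environment transitions, so fewer cases arise in clause (iv)), and safety is monotone in the postcondition ($\models Q\Rightarrow Q'$ propagates through clause (i)). With these, (CONS), (SEQ), and (IF) reduce to reconciling rely-sets and postconditions. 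Environment transitions are then handled uniformly everywhere: since well-formedness gives $FV(P),FV(Q)\subseteq A\subseteq A'$ and such a transition fixes the storage on $A'$ and the local heap $h$, Proposition~\ref{astneval} preserves the relevant assertion, the rule $(E)$ itself supplies the invariant $\underset{r\in D'}{\circledast}\Gamma(r)$ on the reshuffled shared heap, and clause (iii) survives because $L\cup D$ is invariant.

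For the leaf rules: (SKP) has only environment transitions, dispatched as above; for (BC) I would use Separation Logic soundness for $[c]$ together with safety monotonicity (Proposition~\ref{safetymono}) to discharge clause (ii) on $h\uplus h_G$, the side condition $mod(c)\cap PV(\Gamma)=\emptyset$ for clause (iii), and the frame property (Proposition~\ref{frameprop}) to recover the small-heap execution, after which the $(BCT)$ step leaves $\skipk$ and the argument collapses to the skip case with $Q$. The frame rule (FRA) is the first genuinely heap-manipulating case: I would split $h=h_P\uplus h_R$ with $s,h_R\models R$ and carry $h_R$ throughout, using Proposition~\ref{safetymono} for non-abortion and Proposition~\ref{frameprop} to transfer each program step back to $h_P$, noting $mod(C)\cap FV(R)=\emptyset$ keeps $R$ valid under program steps and $FV(R)\subseteq A\cup FV(R)$ keeps it valid under the enlarged-rely environment steps. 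The conjunction rule (CONJ) is where preciseness of resource invariants is used: given an environment transition, each premise furnishes a decomposition $\hat h=h'_i\uplus h'_{G,i}$, and since $\underset{r\in D'}{\circledast}\Gamma(r)$ is precise the shared parts coincide, so the two continuations act on the same local heap and combine.

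The resource rules (CR) and (RES) would be handled by tracking migration of the invariant heap between the shared pool and the local heap: acquiring $r$ via $(W0)$ moves the $R$-portion of $h_G$ into the local heap (so $C$ runs against $(P\wedge B)\ast R$ with $r$ owned), and $(W2)$ moves it back. Crucially, while $r$ is locked the augmented rely-set $A'=A\cup\bigcup_{r\in Locked(C)}PV(r)$ automatically contains $X=PV(r)$, which is exactly why the premise is taken with rely-set $A\cup X$; and since $C$ runs with $r$ removed from the owned set (rule $(W1)$), the premise's safety with $O=\emptyset$ applies directly. Rule (REN) follows from Proposition~\ref{abortrename}, and (AUX) from the fact that assignments to auxiliary variables do not affect control flow and vanish under $\cdot\setminus X$.

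The parallel rule (PAR) is the crux and the expected main obstacle. I would prove by induction on $n$ a generalized combination lemma: if $\mathit{Safe}_n(C_1,s,h_1,(O_1,L\cup O_2,D),\Gamma,Q_1,A_1)$ and $\mathit{Safe}_n(C_2,s,h_2,(O_2,L\cup O_1,D),\Gamma,Q_2,A_2)$ hold with $h_1\bot h_2$, $O_1,O_2$ disjoint, and the side condition~$(\ref{parsc})$, then $\mathit{Safe}_n(\park{C_1}{C_2},s,h_1\uplus h_2,(O_1\cup O_2,L,D),\Gamma,Q_1\ast Q_2,A_1\cup A_2)$; the theorem is the instance $O_1=O_2=\emptyset$. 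Clause (ii) combines safety monotonicity (Proposition~\ref{safetymono}) with Proposition~\ref{abortpar} to exclude $(PA1)$/$(PA2)$ aborts, and clause (iii) follows by union. The heart is clause (iv): a $(P1)$ step of $C_1$ is transferred to $C_1$'s local state by the heap frame property (Proposition~\ref{frameprop}) and the dual resource-frame property (Proposition~\ref{p1}), after which the same step, viewed from $C_2$, is an environment transition respecting $A_2$ — this is where $mod(C_1)\cap A_2=\emptyset$ is indispensable — so $C_2$'s own clause (iv) supplies its continuation; the updated owned/locked split (e.g.\ when $C_1$ acquires $r$ via $(W0)$, moving it from $D$ into $O_1$ and, from $C_2$'s vantage, from available into $L\cup O_1$, with the invariant heap migrating out of the shared pool) matches the lemma's hypotheses with $O_1$ replaced by $O_1\cup\{r\}$, so the inner induction on $n$ closes the step. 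The symmetric $(P2)$ case, the $(P3)$ termination (yielding $s,h_1\uplus h_2\models Q_1\ast Q_2$), and a global environment transition (simultaneously an $A_1$- and an $A_2$-environment step) are analogous. The delicate accounting of how the shared invariant heap is split and recombined across the two threads at resource acquisition and release is the step I expect to require the most care.
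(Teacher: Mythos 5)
Your proposal follows essentially the same route as the paper's proof: an outer induction on the CSL derivation with per-rule auxiliary lemmas proved by inner induction on $n$, and in particular your generalized parallel combination lemma (with $\rho_1=(O_1,L\cup O_2,D)$, $\rho_2=(O_2,L\cup O_1,D)$ and the transfer of one thread's program step into the other thread's environment transition via Propositions~\ref{frameprop}, \ref{p1} and \ref{abortpar}) is exactly the paper's Proposition~\ref{safepar}, just as your treatments of $\skipk$, basic commands, frame, critical region, local resource, renaming and auxiliary variables match Propositions~\ref{safeskip}--\ref{safeaux}. Your explicit rely-set/postcondition monotonicity lemmas and the precision argument for (CONJ) are small elaborations of cases the paper dismisses as immediate, but they do not change the structure of the argument.
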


In the next lines, we present a proof of this theorem by studying the inference rules of CSL. The proof is carried by induction on the inference rules and uses auxiliary results about the inference rules.


\begin{prop}\label{safeskip}
Let $(s,h)\in \mathcal{S}$, let $\rho\in\mathcal{O}$, let $\Gamma$ be a resource context, let $Q$ be an assertion and $A\subseteq \textbf{Var}$ such that $FV(Q)\subseteq A$. 
If $s,h\models Q$, then $\mathit{Safe}_{n}(\skipk,s,h,\rho, \Gamma, Q, A)$ is valid for every $n\geq 0$.
\end{prop}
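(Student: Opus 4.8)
The plan is to prove, by induction on $n$, the slightly generalized statement that for \emph{every} state $(s,h)\in\mathcal{S}$ with $s,h\models Q$ and every $\rho\in\mathcal{O}$ one has $\mathit{Safe}_n(\skipk,s,h,\rho,\Gamma,Q,A)$ valid (the generalization over $(s,h,\rho)$ is needed because condition $(iv)$ hands us a fresh state at which the induction hypothesis for $n$ must be reapplied). The base case $n=0$ is immediate, since $\mathit{Safe}_0$ is valid by definition. For the inductive step I fix $(s,h)$ with $s,h\models Q$ and $\rho=(O,L,D)$, assume the claim for $n$, and verify the four conditions defining $\mathit{Safe}_{n+1}(\skipk,s,h,\rho,\Gamma,Q,A)$.

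Conditions $(i)$--$(iii)$ are essentially structural facts about $\skipk$. For $(i)$, $C=\skipk$ and $s,h\models Q$ holds by hypothesis. For $(ii)$, inspecting the abort transitions shows that none has $\skipk$ alone on the left, so $\skipk,(s,h,\rho)\not\rightarrow_p\abok$. For $(iii)$, $chng(\skipk)=\emptyset$ because $\skipk$ cannot perform an assignment, load, or allocation, whence the intersection with $\bigcup_{r\in L\cup D}PV(r)$ is empty.

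The real content is condition $(iv)$, and the key observation is that $\skipk$ has no program transition of its own (the rules $(S1)$ and $(P3)$ act on $\skipk;C_2$ and $\skipk\|\skipk$, not on bare $\skipk$). Hence any step $\skipk,(s,h\uplus h_G,\rho)\xrightarrow{A,\Gamma}C',(s',\hat h,\rho')$ must be an environment transition via rule $(E)$, so $C'=\skipk$, the local heap is preserved, and the owned set $O$ is unchanged; moreover $A'=A\cup\bigcup_{r\in Locked(\skipk)}PV(r)=A$ since $Locked(\skipk)=\emptyset$. I then take $h'=h$, so that $\hat h=h'\uplus h'_G$ holds with the $h'_G$ supplied by $(E)$, and the required $s',h'_G\models\underset{r\in D'}{\circledast}\Gamma(r)$ is exactly a premise of $(E)$. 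To reapply the induction hypothesis I must check $s',h'\models Q$, i.e. $s',h\models Q$: here I use that $(s,h,\rho)\stackrel{A}{\leftrightsquigarrow}(s',h,\rho')$ gives $s(x)=s'(x)$ for all $x\in A\supseteq FV(Q)$, so Proposition~\ref{astneval} transfers $s,h\models Q$ to $s',h\models Q$. The induction hypothesis then yields that $\mathit{Safe}_n(\skipk,s',h,\rho',\Gamma,Q,A)$ is valid, closing $(iv)$.

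The only subtle point — and thus the step I expect to need the most care — is $(iv)$: one has to recognize that every transition out of $\skipk$ is an environment step preserving the local heap and the owned resources, and then combine the agreement of the storages on $A$ with the preservation of $h$ to recover validity of $Q$ through Proposition~\ref{astneval} before invoking the induction hypothesis. The remaining conditions are routine.
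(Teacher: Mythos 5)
Your proof is correct and follows essentially the same route as the paper's: induction on $n$, the first three conditions dispatched as structural facts about $\skipk$, and for condition $(iv)$ the observation that the only transition available to bare $\skipk$ is an environment transition, after which agreement of the storages on $A\supseteq FV(Q)$ together with Proposition~\ref{astneval} transfers $s,h\models Q$ to $s',h\models Q$ so the induction hypothesis applies. Your explicit remarks that the induction must be over all states and that $Locked(\skipk)=\emptyset$ (so the respected set is exactly $A$) are details the paper leaves implicit, but they do not change the argument.
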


\begin{proof}
Let $(s,h)\in \mathcal{S}$, let $\rho\in\mathcal{O}$, let $\Gamma$ be a resource context, let $Q$ be an assertion and $A\subseteq \textbf{Var}$ such that $FV(Q)\subseteq A$ and $s,h\models Q$.

We prove the result by induction on $n$.
For $n=0$, the result is trivial.

Let $n=k+1$.  The first properties of safety are immediate, because $s,h\models Q$, the command $\skipk$ does not abort and it does not modify protected variables.

For the last property, let $h_G$, $C'$, $s'$, $h'$ and $\rho'$ be such that $s,h_G\models \underset{r\in D}{\circledast} \Gamma(r)$ and $$ \skipk,(s,h\uplus h_G,\rho)\xrightarrow{A, \Gamma} C', (s', h',\rho').$$

We note that the only possible transition of $\skipk$ is an environment transition. Then $C'=\skipk$,  $s(x)=s'(x)$, for every $x\in A$, and there is $h'_G$ such that  $h'=h\uplus h'_G$ and $$s',h'_G\models \underset{r\in D'}{\circledast} \Gamma(r).$$ 

It is enough to check that $Safe_k(\skipk, s', h, \rho', \Gamma,Q,A)$ is valid. But the environment transition does not modify the variables in the rely-set neither the local heap. 
Therefore $s',h\models Q$, by Proposition \ref{astneval}. And $Safe_k(\skipk, s', h, \rho', \Gamma,Q,A)$ is valid, by induction.
\end{proof}

In order to check the safety of basic commands rules $(BC)$, we argue mostly as in the context of SL. Like in SL, we know that if a state respects the precondition, then the execution does not abort and the state reached after the program transition $(BCT)$ respects the post condition.

\begin{prop}
Let $c$ be a basic command, let $P,Q$ be assertions, let $\Gamma$ be a resource context, let $(s,h)\in \mathcal{S}$, let $\rho\in\mathcal{O}$ and $A\subseteq \textbf{Var}$ such that $\vdash^{SL} \{P\} c \{Q\}$ and $FV(P,c,Q)\subset A$.
If  $s,h\models P$ and $mod(c)\notin PV(\Gamma)$, then $\mathit{Safe}_n(c,s,h,\rho, \Gamma, Q, A)$ is valid, for all $n\geq 0$.
\end{prop}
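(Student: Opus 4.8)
The plan is to prove the statement by induction on $n$, exactly as in Proposition~\ref{safeskip}. For $n=0$ the claim $\mathit{Safe}_0$ is vacuously valid. For the inductive step $n=k+1$, I would verify the four safety conditions in turn, leaning on the Separation Logic judgement $\vdash^{SL} \{P\} c \{Q\}$ and the hypothesis $mod(c)\notin PV(\Gamma)$. Condition (i) is trivial since $c$ is a basic command and hence $c\neq\skipk$. Condition (ii), non-abortion, follows from SL soundness: because $s,h\models P$ and $\vdash^{SL}\{P\}c\{Q\}$, we have $[c](s,h)\neq\abok$, so the only possible program transition is $(BCT)$ rather than $(BCA)$, and no other program rule applies to a basic command. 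Condition (iii) is immediate from $mod(c)\notin PV(\Gamma)$: the set $chng(c)\subseteq mod(c)$ consists of the variable written by $c$, and since it is not protected by any resource it cannot lie in $\bigcup_{r\in L\cup D}PV(r)$.

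The real work is condition (iv). Here I would fix an arbitrary $h_G$ with $h\bot h_G$ and $s,h_G\models\underset{r\in D}{\circledast}\Gamma(r)$, and consider a transition $c,(s,h\uplus h_G,\rho)\xrightarrow{A,\Gamma} C',(s',\hat{h},\rho')$. This transition is either a program transition or an environment transition, and I would split into those two cases. In the program case the only applicable rule is $(BCT)$, giving $C'=\skipk$, $\rho'=\rho$, and $[c](s,h\uplus h_G)=(s',\hat{h})$. The crucial step is to show the write does not touch the framed heap $h_G$: by safety monotonicity (Proposition~\ref{safetymono}) $c$ does not abort on $h\uplus h_G$, and by the frame property (Proposition~\ref{frameprop}) $h_G$ is a subheap of $\hat{h}$ and $c,(s,h,\rho)\rightarrow_p\skipk,(s',\hat{h}\setminus h_G,\rho)$. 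Setting $h':=\hat{h}\setminus h_G$ and $h'_G:=h_G$ gives $\hat{h}=h'\uplus h'_G$; since $D'=D$ the invariant-preservation requirement $s',h'_G\models\underset{r\in D'}{\circledast}\Gamma(r)$ reduces to knowing that executing $c$ does not alter the framed heap nor the variables free in the invariants. This is exactly where $mod(c)\notin PV(\Gamma)$ re-enters: by Proposition~\ref{astneval} the invariant's validity is insensitive to the changed variable, so $s',h_G\models\underset{r\in D}{\circledast}\Gamma(r)$ still holds. Finally $s',h'\models Q$ by SL soundness, and $FV(Q)\subseteq A$, so I can invoke Proposition~\ref{safeskip} to conclude $\mathit{Safe}_k(\skipk,s',h',\rho,\Gamma,Q,A)$ is valid.

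In the environment-transition case the argument mirrors the end of Proposition~\ref{safeskip}: an $(E)$ transition leaves $c$ unchanged ($C'=c$), preserves the local heap $h$ and the owned resources, and only alters the storage off the set $A'=A\cup\bigcup_{r\in Locked(c)}PV(r)$ together with the available shared heap. Since $c$ is a non-extended basic command, $Locked(c)=\emptyset$ and $A'=A$. Because $FV(P)\subseteq A$, Proposition~\ref{astneval} ensures the new storage $s'$ still satisfies $s',h\models P$, so the inductive hypothesis (applied to the transformed state, whose available-invariant condition is guaranteed by the definition of $(E)$) yields $\mathit{Safe}_k(c,s',h,\rho',\Gamma,Q,A)$.

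I expect condition (iv) in the program-transition case to be the main obstacle, and within it the delicate point is the bookkeeping that separates the genuine local heap from the framed invariant heap $h_G$ while simultaneously showing the basic command perturbs neither $h_G$ nor the free variables of the resource invariants. Everything hinges on combining safety monotonicity, the frame property, the non-protection hypothesis $mod(c)\notin PV(\Gamma)$, and Proposition~\ref{astneval}; the remaining conditions and the environment case are largely routine, reusing the reasoning already established for $\skipk$.
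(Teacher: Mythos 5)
Your proof is correct and follows essentially the same route as the paper's: induction on $n$, conditions (i)--(iii) immediate from $c\neq\skipk$, SL soundness and $mod(c)\notin PV(\Gamma)$, and condition (iv) split into the $(BCT)$ case (handled via the frame property and Proposition~\ref{safeskip}) and the environment case (handled via Proposition~\ref{astneval} and the induction hypothesis). If anything, you are more explicit than the paper in the $(BCT)$ case, where you justify via Proposition~\ref{astneval} and $mod(c)\notin PV(\Gamma)$ that the resource invariants of the available resources remain valid after the storage update --- a point the paper leaves implicit.
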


\begin{proof}
Let $c$ be a basic command, let $P,Q$ be assertions, let $\Gamma$ be a resource context, let $(s,h)\in \mathcal{S}$, let $\rho\in\mathcal{O}$ and $A\subseteq \textbf{Var}$ such that $\vdash^{SL} \{P\} c \{Q\}$, $FV(P,c,Q)\subset A$, $s,h\models P$ and $mod(c)\notin PV(\Gamma)$.

From $\vdash^{SL} \{P\} c \{Q\}$, we know that $[c](s,h)\neq\abok$ and $[c](s,h)\models Q$.

We prove by induction on $n$ that $\mathit{Safe}_n(c,s,h,\rho, \Gamma, Q, A)$ is valid.
For $n=0$, the result is trivial.

Let $n=k+1$.  The first properties of safety are immediate, because $c\neq \skipk$, $[c](s,h)\neq\abok$ and $mod(c)\notin PV(\Gamma)$.

For the last property, let $h_G$, $C'$, $s'$, $h'$ and $\rho'$ be such that $s,h_G\models \underset{r\in D}{\circledast} \Gamma(r)$ and $$ c,(s,h\uplus h_G,\rho)\xrightarrow{A, \Gamma} C', (s', h',\rho').$$

We have two possibilities: a transition by the environment or by $(BCT)$.

Suppose that it is an environment transition, then $C'=c$,  $s(x)=s'(x)$, for every $x\in A\supseteq FV(P)$, and there is $h'_G$ such that  $h'=h\uplus h'_G$ and $$s',h'_G\models \underset{r\in D'}{\circledast} \Gamma(r).$$
Moreover, the precondition is preserved, $s',h\models P$. So we can apply the induction hypothesis to see that 
$\mathit{Safe}_n(c,s',h,\rho', \Gamma, Q, A)$ is valid. 

Suppose that it is a transition by $(BCT)$, then $C'=\skipk$, $(s',h')=[c](s,h\uplus h_G)$ and $\rho'=\rho$. Using the frame property, Proposition~\ref{safetymono}, we know that $h'=h''\uplus h_G$, where $[c](s,h)=(s',h'')\models Q$. 

Therefore, by Proposition~\ref{safeskip}, we have that  $\mathit{Safe}_n(\skipk,s',h'',\rho, \Gamma, Q, A)$ is valid.
This concludes the proof.
\end{proof}

The soundness of the rules $(SEQ)$, $(LP)$, $(CONJ)$, $(IF)$ and $(CONS)$ are a direct consequence of the inductive process.

The soundness of the frame rule $(FRA)$ is supported by the following proposition. It follows from the safety monotonicity and frame property (Propositions \ref{safetymono} and \ref{frameprop}). We note that $R$ is valid after every transition, because $FV(R)$ is not modified by the command and the rely-set includes it.

\begin{prop}\label{safeframe}
Let $C$ be a reachable command, let $\Gamma$ be a resource context, let $(s,h\uplus h_R)\in\mathcal{S}$, let $\rho\in\mathcal{O}$, let $Q,R$ be assertions and $A\subseteq \textbf{Var}$ such that $s,h_R\models R$.
If  $\mathit{Safe}_n(C,s,h,\rho, \Gamma, Q,A)$ is valid and $mod(C)\cap FV(R)=\emptyset$, then $\mathit{Safe}_n(C,s,h\uplus h_R,\rho, \Gamma, Q\ast R, A\cup FV(R))$ is valid.
\end{prop}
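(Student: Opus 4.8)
The plan is to prove this by induction on $n$, the number of transitions, since $\mathit{Safe}$ is defined inductively on exactly this parameter. The base case $n=0$ is trivial because $\mathit{Safe}_0$ always holds. For the inductive step, I would assume the result for $n$ (as a universally quantified statement over all reachable states, since the command and state change along a transition) and prove it for $n+1$. The whole proof amounts to verifying the four clauses $(i)$--$(iv)$ of the definition of $\mathit{Safe}_{n+1}(C,s,h\uplus h_R,\rho,\Gamma,Q\ast R,A\cup FV(R))$, drawing on the validity of $\mathit{Safe}_{n+1}(C,s,h,\rho,\Gamma,Q,A)$ together with $s,h_R\models R$ and $mod(C)\cap FV(R)=\emptyset$.

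First I would dispatch the easy clauses. For $(i)$: if $C=\skipk$, then from the hypothesis we get $s,h\models Q$; since $s,h_R\models R$ and the two heaps are disjoint, the definition of $\ast$ gives $s,h\uplus h_R\models Q\ast R$. For $(ii)$: we have $C,(s,h,\rho)\not\rightarrow_p\abok$ by hypothesis, and Proposition~\ref{safetymono} (safety monotonicity) lifts this to $C,(s,h\uplus h_R,\rho)\not\rightarrow_p\abok$. Clause $(iii)$ is unchanged by the framing, since $chng(C)$ and $\rho=(O,L,D)$ depend only on the command and configuration, not on the heap, and the hypothesis already gives $chng(C)\cap\bigcup_{r\in L\cup D}PV(r)=\emptyset$.

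The substantive work is clause $(iv)$. Here I would take an arbitrary frame heap $h_G$ disjoint from $h\uplus h_R$ with $s,h_G\models\underset{r\in D}{\circledast}\Gamma(r)$, and a transition $C,(s,h\uplus h_R\uplus h_G,\rho)\xrightarrow{A\cup FV(R),\Gamma}C',(s',\hat h,\rho')$. The key move is to split into the two kinds of transition. For a program transition, I set $h_R\uplus h_G$ to play the role of the frame in Proposition~\ref{frameprop}: applying it (using $(ii)$ to know the smaller execution does not abort) shows $h_R\uplus h_G$ survives into $\hat h$ and that the same step is realizable as $C,(s,h\uplus h_G',\rho)\rightarrow_p C',(s',h'\uplus h_G',\rho')$ on the smaller heap — but the cleaner route is to frame off only $h_R$, obtaining a transition from $(s,h\uplus h_G,\rho)$, which is exactly the premise available from $\mathit{Safe}_{n+1}(C,s,h,\rho,\Gamma,Q,A)$ applied with frame $h_G$. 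That yields the decomposition $h'\uplus h_G'$ of the smaller result with $s',h_G'\models\underset{r\in D'}{\circledast}\Gamma(r)$ and $\mathit{Safe}_n(C',s',h',\rho',\Gamma,Q,A)$. I would then recover $h_R$: because $mod(C)\cap FV(R)=\emptyset$ and $A\cup FV(R)$ contains $FV(R)$, neither a program step nor an environment step alters the values of $FV(R)$-variables nor the heap portion $h_R$, so by Proposition~\ref{astneval} we keep $s',h_R\models R$, and the final heap is $(h'\uplus h_R)\uplus h_G'$. Finally the induction hypothesis applied to $\mathit{Safe}_n(C',s',h',\rho',\Gamma,Q,A)$ gives $\mathit{Safe}_n(C',s',h'\uplus h_R,\rho',\Gamma,Q\ast R,A\cup FV(R))$, as required.

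The environment-transition case is mostly bookkeeping: an $(E)$-step fixes the owned resources and the local heap, changing only the store (off the enlarged rely-set $A\cup FV(R)$, which now protects $FV(R)$) and the available shared heap; so $h_R$ is untouched, $R$ persists by Proposition~\ref{astneval}, and the induction hypothesis closes the case. \textbf{The main obstacle} I anticipate is getting the frame-heap bookkeeping exactly right in clause $(iv)$: one must carefully track which heap ($h_R$ versus $h_G$) is framed off against which proposition, ensure the enlarged rely-set $A\cup FV(R)$ is what makes the environment step preserve $R$, and confirm that the roles of the shared-invariant heap $h_G$ and the private frame $h_R$ do not get conflated when invoking Proposition~\ref{frameprop}. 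The definition of the extended relation $\xrightarrow{A,\Gamma}$ as $\rightarrow_p\cup\xrightarrow{A,\Gamma}_e$ means both cases must be handled, but only the program-transition case genuinely uses safety monotonicity and the frame property; the rest is ensuring disjointness of domains is preserved throughout.
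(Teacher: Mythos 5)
Your proposal is correct and follows essentially the same route as the paper's proof: induction on $n$, with clauses $(i)$--$(iii)$ dispatched via safety monotonicity, and clause $(iv)$ split into a program-transition case (frame off $h_R$ by Proposition~\ref{frameprop}, apply the safety hypothesis with shared heap $h_G$, recover $s',h_R\models R$ from $mod(C)\cap FV(R)=\emptyset$, then invoke the induction hypothesis) and an environment-transition case (shrink the rely-set from $A\cup FV(R)$ back to $A$, apply the hypothesis, and close by induction). The "cleaner route" you settle on — framing only $h_R$ rather than $h_R\uplus h_G$ — is exactly the paper's choice.
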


\begin{proof}
Let $C$ a reachable command, $\Gamma$ a resource context, $(s,h\uplus h_R)\in\mathcal{S}$,  $\rho\in\mathcal{O}$, $Q,R$ assertions and $A\subseteq \textbf{Var}$ such that $s,h_R\models R$,  $\mathit{Safe}_n(C,s,h,\rho, \Gamma, Q,A)$ is valid and $mod(C)\cap FV(R)=\emptyset$.

We just show the inductive step of the proof.
Let $n=k+1$.

If $C=\skipk$, then $s,h\models Q$ and $s,h\uplus h_R\models Q\ast R$.

By $Safe_n(C,s,h,\rho, \Gamma, Q,A)$ and Proposition \ref{safetymono}, we have that $$C,(s,h\uplus h_R,\rho)\not\rightarrow \abok,$$
and $$chng(C)\cap \bigcup_{r\in L\cup D} PV(r)= \emptyset.$$
Therefore, the first properties of $Safe_n(C,s,h\uplus h_R,\rho, \Gamma, Q\ast R,A\cup FV(R))$ are established.

In order to check the last property. Let $h_G$, $C'$, $s'$, $h'$ and $\rho'$ such that $h_G\bot (h\uplus h_R)$, $s,h_G\models \underset{r\in D}{\circledast} \Gamma(r)$ and  
$$C,(s,h\uplus h_R\uplus h_G,\rho)\xrightarrow{A\cup FV(R), \Gamma} C', (s', h',\rho').$$

This transition can be a program transition or a environment transition.

Suppose that it is a program transition.

By frame property, Proposition~\ref{safetymono}, there is $h''$ such that $h'=h''\uplus h_R$ and $$C,(s,h\uplus h_G,\rho)\rightarrow_p C', (s', h'',\rho').$$ 

From the validity of $Safe_n(C,s,h,\rho, \Gamma, Q,A)$ and the transition above, we know that there exists $h'_G\subseteq h''$, such that $Safe_k(C',s',h''\setminus h'_G,\rho', \Gamma, Q,A)$ is valid and $$s',h'_G\models \underset{r\in D'}{\circledast} \Gamma(r).$$

From $mod(C)\cap FV(R)=\emptyset$, we have that $mod(C')\cap FV(R)=\emptyset$ and $$s',h_R\models R.$$ 

By induction, $Safe_k(C',s',h'\setminus h'_G,\rho', \Gamma, Q\ast R,A\cup FV(R))$ is valid.

Suppose that it occurs an environment transition.

There exists $h'_G\subseteq h'$ such that $h'=h\uplus h_R\uplus h'_G$, $s',h'_G\models \underset{r\in D'}{\circledast} \Gamma(r)$,
and $$(s,h\uplus h_R,\rho)\stackrel{A'}{\leftrightsquigarrow} (s',h\uplus h_R,\rho'),$$
where $A'=A\cup FV(R)\cup \bigcup_{r\in Locked(C)}PV(r)$. And $s',h_R\models R$, because $FV(R)\subseteq A'$.

Let $A''= A\cup \bigcup_{r\in Locked(C)}PV(r)\subset A'$. Then $(s,h,\rho)\stackrel{A''}{\leftrightsquigarrow} (s',h,\rho')$, and 
$$C,(s,h\uplus h_G,\rho)\xrightarrow{A, \Gamma}_{e} C, (s', h\uplus h_G,\rho').$$

From the validity of $Safe_n(C,s,h,\rho, \Gamma, Q,A)$ and the transition above, we have that $Safe_k(C,s',h,\rho', \Gamma, Q,A)$ is valid.

By induction, we conclude that $Safe_k(C,s',h\uplus h_R,\rho', \Gamma, Q\ast R,A\cup FV(R))$ is valid.
\end{proof}

Next we study the parallel rule $(PAR)$.

\begin{prop}\label{safepar}
Let $\park{C_1}{C_2}$ be a reachable command, let $Q_1,Q_2$ be assertions, let $(s,h),(s,h_1),(s,h_2)\in\mathcal{S}$, let $\rho,\rho_1,\rho_2\in\mathcal{O}$ and $A_1,A_2\subseteq \textbf{Var}$. Suppose that $h=h_1\uplus h_2$, $\rho=(O_1\cup O_2,L,D)$, $\rho_1=(O_1,L\cup O_2,D)$, $\rho_2=(O_2,L\cup O_1,D)$, $FV(Q_i)\subseteq A_i$, for $i=1,2$, and $A_1 \cap mod(C_2)=A_2 \cap mod(C_1)=\emptyset$.

If $\mathit{Safe}_n(C_1, s, h_1, \rho_1, \Gamma, Q_1, A_1)$ and $\mathit{Safe}_n(C_2, s, h_2, \rho_2, \Gamma, Q_2, A_2)$ are valid , then $\mathit{Safe}_n(\park{C_1}{C_2}, s, h,\rho, \Gamma, Q_1 \ast Q_2, A_1\cup A_2)$ is valid.
\end{prop}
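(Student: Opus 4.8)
The plan is to argue by induction on $n$, the base case $n=0$ being immediate because $\mathit{Safe}_0$ always holds. In the inductive step $n=k+1$ I would check the four clauses of $\mathit{Safe}_{k+1}$ for $\park{C_1}{C_2}$. Clause (i) is vacuous, since $\park{C_1}{C_2}$ is never literally $\skipk$. For clause (ii) I would first apply safety monotonicity (Proposition~\ref{safetymono}) to pass from the local heaps $h_i$ to $h=h_1\uplus h_2$, and then Proposition~\ref{abortpar} to pass from the split configurations $\rho_1,\rho_2$ to $\rho$; this rules out both $(PA1)$ and $(PA2)$, so $\park{C_1}{C_2}$ does not abort. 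Clause (iii) follows from the two hypotheses, since $chng(\park{C_1}{C_2})=chng(C_1)\cup chng(C_2)$ and $\bigcup_{r\in L\cup D}PV(r)$ is contained both in $\bigcup_{r\in (L\cup O_2)\cup D}PV(r)$ and in $\bigcup_{r\in (L\cup O_1)\cup D}PV(r)$.

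The substance is clause (iv). Given a transition $\park{C_1}{C_2}, (s,h\uplus h_G,\rho)\xrightarrow{A_1\cup A_2,\Gamma} C',(s',\hat h,\rho')$, I would split on whether it is an environment transition or a program transition. If it is an environment transition, then $C'=\park{C_1}{C_2}$ and $\rho'=(O_1\cup O_2,L',D')$ with $L'\cup D'=L\cup D$; since the transformation respects $(A_1\cup A_2)\cup\bigcup_{r\in Locked(\park{C_1}{C_2})}PV(r)$, Proposition~\ref{amb} shows it also respects the smaller sets $A_i\cup\bigcup_{r\in Locked(C_i)}PV(r)$, so it induces environment transitions of $C_1$ and $C_2$ preserving their local heaps and carrying the configurations to $\rho_1'=(O_1,L'\cup O_2,D')$ and $\rho_2'=(O_2,L'\cup O_1,D')$. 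Feeding these into clause (iv) of the two hypotheses gives $\mathit{Safe}_k(C_i,s',h_i,\rho_i',\Gamma,Q_i,A_i)$ with a common companion heap $h'_G$, and the induction hypothesis of the proposition reassembles them into $\mathit{Safe}_k(\park{C_1}{C_2},s',h,\rho',\Gamma,Q_1\ast Q_2,A_1\cup A_2)$.

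The hard part — the one I would treat most carefully — is the program-transition case, say $(P1)$, where $C_1$ steps to $C_1'$ and $C'=\park{C_1'}{C_2}$; the point is that a single global step must be simulated at once as an honest step of $C_1$ in isolation and as an environment step for $C_2$. To recover $C_1$'s own step, from $C_1,(s,h_1,\rho_1)\not\rightarrow_p\abok$ and safety monotonicity I get non-abort on the full heap, so Proposition~\ref{p1} replays the global step $C_1,(s,h\uplus h_G,\rho)\rightarrow_p C_1',(s',\hat h,\rho')$ as $C_1,(s,h\uplus h_G,\rho_1)\rightarrow_p C_1',(s',\hat h,\rho_1')$ with $\rho_1'=(O'\setminus O_2,L\cup O_2,D')$ and $O_2\subseteq O'$. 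The frame property (Proposition~\ref{frameprop}) then removes the frame $h_2$, yielding a step on $h_1\uplus h_G$, and clause (iv) of $\mathit{Safe}_{k+1}(C_1)$ produces $h_1',h'_G$ with $s',h'_G\models\underset{r\in D'}{\circledast}\Gamma(r)$ and $\mathit{Safe}_k(C_1',s',h_1',\rho_1',\Gamma,Q_1,A_1)$.

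For $C_2$ I would show the same global step is a legal environment transition $C_2,(s,h_2\uplus h_G,\rho_2)\xrightarrow{A_2,\Gamma}_e C_2,(s',h_2\uplus h'_G,\rho_2')$ with $\rho_2'=(O_2,L\cup(O'\setminus O_2),D')$, reusing the very $h'_G$ obtained from $C_1$. This is exactly where the side condition (\ref{parsc}) and clause (iii) do the work: the variables altered by $C_1$'s step lie in $chng(C_1)\subseteq mod(C_1)$, which is disjoint from $A_2$ by (\ref{parsc}) and disjoint from $\bigcup_{r\in Locked(C_2)}PV(r)$ because $Locked(C_2)\subseteq O_2$ (Proposition~\ref{lockres}) and clause (iii) of $\mathit{Safe}_{k+1}(C_1)$ forbids $C_1$ from touching protected variables of resources it does not own; the bookkeeping $L'\cup D'=L\cup D$ with $O_2\subseteq O'$ yields the required $(L\cup O_1)\cup D=(L\cup(O'\setminus O_2))\cup D'$. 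Clause (iv) of $\mathit{Safe}_{k+1}(C_2)$ then gives $\mathit{Safe}_k(C_2,s',h_2,\rho_2',\Gamma,Q_2,A_2)$. Since $mod(C_1')\subseteq mod(C_1)$ preserves (\ref{parsc}), the induction hypothesis recombines the two component safeties into $\mathit{Safe}_k(\park{C_1'}{C_2},s',h_1'\uplus h_2,\rho',\Gamma,Q_1\ast Q_2,A_1\cup A_2)$, with $\hat h=(h_1'\uplus h_2)\uplus h'_G$ as required; case $(P2)$ is symmetric, and case $(P3)$ reduces, via clause (i) of the hypotheses and Proposition~\ref{safeskip}, to the safety of $\skipk$ for $Q_1\ast Q_2$. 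The delicate point throughout is keeping the two halves of each resource configuration synchronized so that what $C_1$ performs is precisely what $C_2$'s environment records, which is why Propositions~\ref{abortpar} and~\ref{p1}, the resource-configuration analogues of safety monotonicity and the frame property, are indispensable.
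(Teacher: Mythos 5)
Your proof is correct and follows essentially the same route as the paper's: induction on $n$; clauses (i)--(iii) via Propositions~\ref{safetymono} and~\ref{abortpar}; and clause (iv) by the same four-way case split, replaying $C_1$'s program step through Propositions~\ref{p1} and~\ref{frameprop}, feeding it to $C_2$ as an environment transition justified by the side condition~(\ref{parsc}) together with clause (iii) of $\mathit{Safe}_{k+1}(C_1)$, and closing the (P3) and (E) cases with Proposition~\ref{safeskip} and Proposition~\ref{amb} exactly as the paper does. Two citations deserve repair, though the claims themselves are fine: $Locked(C_2)\subseteq O_2$ does not follow from Proposition~\ref{lockres} (which concerns executions launched from the specific configuration $(\emptyset,\emptyset,Res(\Gamma))$, not the abstract $\rho_2$ hypothesized here) but from clause (ii) of $\mathit{Safe}_{k+1}(C_2)$ together with rule $(WA2)$ --- which is how the paper argues it --- and the equality $(L\cup O_1)\cup D=(L\cup(O'\setminus O_2))\cup D'$ is not generic bookkeeping but Proposition~\ref{proprec} applied to $C_1$'s isolated transition.
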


As before we prove this result by induction on $n$. The firsts three properties of safety are immediate from the safety of $C_1$ and $C_2$, and Propositions \ref{safetymono} and \ref{abortpar}.

In order to apply the induction step we use the environment transition. If the parallel execution transits by a program transition, then there are three cases. First case, a transition is done by $C_1$. We perform the same transition on $C_1$ (by Propositions \ref{frameprop} and \ref{p1}) and an environment transition on $C_2$, that replicates the changes performed by the program transition. This environment transition exists because the variables modified by the program $C_1$ are different from the rely-set $A_2$. In the second case, a transition is done by $C_2$, and we do analogous transitions. The third case is the joint of parallel commands. In this case, we do a reflexive environment transition on $C_1$ and $C_2$. If the program transits by an environment transition, then we perform the same environment transition on $C_1$ and $C_2$. This environment transition can be used because the rely-set of $\park{C_1}{C_2}$ includes the rely-set of each command. Therefore we can apply the inductive hypothesis and obtain the proposition.

\begin{proof}
Let $\park{C_1}{C_2}$ be a reachable command, let $(s,h)(s,h_1),(s,h_2)\in\mathcal{S}$, let $\rho,\rho_1,\rho_2\in\mathcal{O}$, let $Q_1,Q_2$ be assertions and $A_1,A_2\subseteq \textbf{Var}$ such that $h=h_1\uplus h_2$, $\rho=(O_1\cup O_2,L,D)$, $\rho_1=(O_1,L\cup O_2,D)$, $\rho_2=(O_2,L\cup O_1,D)$, $FV(Q_1)\subseteq A_1$, $FV(Q_2)\subseteq A_2$ and $A_1 \cap mod(C_2)=A_2 \cap mod(C_1)=\emptyset$.

We just prove the induction step for $n=k+1$. 

Suppose that the hypothesis is valid, i.e. $Safe_{k+1}(C_1,s,h_1,\rho_1, \Gamma, Q_1, A_1)$ and $Safe_{k+1}(C_2,s,h_2,\rho_2, \Gamma, Q_2, A_2)$ are valid.

The first property of $Safe_n(\park{C_1}{C_2}, s, h,\rho, \Gamma, Q_1 \ast Q_2, A_1\cup A_2)$ is trivial, because $\park{C_1}{C_2}\neq\skipk$.

Applying the safety monotonicity (Proposition $\ref{safetymono}$) to the hypothesis, we see that
$$C_1, (s, h_1\uplus h_2,\rho_1) \not\rightarrow_{p} \abok \quad C_2, (s, h_1\uplus h_2,\rho_2) \not\rightarrow_{p} \abok.$$

Using Proposition $\ref{abortpar}$, we obtain that $$C_1, (s, h,\rho) \not\rightarrow_{p} \abok \quad C_2, (s, h,\rho) \not\rightarrow \abok.$$ 
Hence, we respect the second property $\park{C_1}{C_2}, (s, h, \rho) \not\rightarrow_{p} \abok$.

Using the hypothesis, we derive that
\begin{align*}
chng(C)\cap \bigcup_{\substack{r\in \\L \cup D}} PV(r)= & (chng(C_1)\cup chng(C_2))\cap \bigcup_{\substack{r\in \\L \cup D}} PV(r)\\
=& (chng(C_1) \cap \bigcup_{\substack{r\in \\ L \cup D}} PV(r))\cup (chng(C_2) \cap \bigcup_{\substack{ r\in \\ L \cup D}} PV(r))\\
\subseteq &  (chng(C_1) \cap \bigcup_{\substack{ r\in L\cup \\  O_2 \cup D}} PV(r))\cup (chng(C_2) \cap \bigcup_{\substack{ r\in L\cup\\  O_1 \cup D}} PV(r))\\
\subseteq & \emptyset.
\end{align*}
And the third condition of $Safe_n(\park{C_1}{C_2}, s, h,\rho, \Gamma, Q_1 \ast Q_2, A_1\cup A_2)$ follows.

Now, we check the fourth condition. Let $h_G$, $C'$, $s'$, $h'$ and $\rho'$ such that $s,h_G\models \underset{r\in D}{\circledast} \Gamma(r)$ and $$ \park{C_1}{C_2},(s,h_1\uplus h_2\uplus h_G,\rho)\xrightarrow{A_1\cup A_2, \Gamma} C', (s', h',\rho').$$

There are four possible transitions.

Suppose that the transition is given by $(P1)$. We have that $C'=\park{C'_1}{C_2}$ and $$C_1,(s,h_1\uplus h_2\uplus h_G,\rho)\rightarrow_{p} C'_1, (s', h', \rho').$$


%

The validity of $Safe_{k+1}(C_1,s,h_1,\rho_1, \Gamma, Q_1, A_1)$ implies that the command $C_1$ does not abort. Using the frame property (Proposition $\ref{frameprop}$) and Proposition $\ref{p1}$ we have that $O_2\subseteq O'$, $h_2\subseteq h'$ and $$C_1,(s,h_1\uplus h_G,\rho_1)\rightarrow_{p} C'_1, (s', h'\setminus h_2, (O'\setminus O_2,L\cup O_2,D')).$$

Define $\rho'_1:=(O'\setminus O_2,L\cup O_2,D')$. From the hypothesis, we know that there are $h'_1$ and $h'_G$ such that $h'\setminus h_2=h'_1\uplus h'_G$, $Safe_{k}(C'_1,s',h'_1, \rho'_1,\Gamma, Q_1,A_1)$ is valid and $$s',h'_G\models \underset{r\in D'}{\circledast} \Gamma(r).$$

In order to apply the induction hypothesis and conclude the validity of $Safe_k(\park{C'_1}{C_2}, s', h'_1\uplus h_2,\rho', \Gamma, Q_1\ast Q_2, A_1\cup A_2)$, it remains to check that $Safe_k(C_2, s',  h_2,\rho_2', \Gamma,  Q_2,  A_2)$ is valid, where $\rho'_2=(O_2,L\cup (O'\setminus O_2),D')$.

We know that $s(x)=s'(x)$, for every $x\notin chng(C_1)$.

Note that $chng(C_1)\subset mod(C_1)$ and $Locked(C_2)\subset O_2$, otherwise $C_2$ will abort by $(WA2)$. Let $A'_2=A_2\cup \bigcup_{r\in Locked(C_2)}PV(r)$, using the hypothesis we have that 

$$chng(C_1)\cap A'_2\subseteq (mod(C_1)\cap A_2) \cup(chng(C_1) \cap \bigcup_{r\in O_2}PV(r))\subseteq \emptyset.$$

Hence $s(x)=s'(x)$, for every $x\in A'_2$.

From Proposition $\ref{proprec}$, we know that $O_1 \cup D=(O'\setminus O_2) \cup D'$ and $$L\cup O_1 \cup D=L\cup (O'\setminus O_2) \cup D'.$$

We have the environment transformation and environment transition  $$(s, h_2, \rho_2)\stackrel{A'_2}{\leftrightsquigarrow} (s', h_2, \rho'_2),$$ 
$$C_2,(s, h_2\uplus h_G, \rho_2)\xrightarrow{A_2, \Gamma}_{e}C_2,(s', h_2\uplus h'_G, \rho'_2),$$

By hypothesis we conclude that $Safe_{k}(C_2,s',h_2, \rho'_2, \Gamma, Q_2, A_2)$ is valid. 

Therefore $Safe_k(\park{C'_1}{C_2}, s', h'_1\uplus h_2,\rho', \Gamma, Q_1\ast Q_2, A_1\cup A_2)$ is valid.

The transition $(P2)$ is analogous to the transition $(P1)$.

Suppose that the transition is given by $(P3)$. We have $C'=\skipk$, $C_1=\skipk$, $C_2=\skipk$, $s'=s$, $h'=h\uplus h_G$ and $\rho'=\rho$. 

Taking $h'_G=h_G$. We know that $$s,h_G\models \underset{r\in D}{\circledast} \Gamma(r).$$

Because $Safe_{n}(\skipk,s, h_1,\rho_1, \Gamma, Q_1, A_1)$ and $Safe_{n}(\skipk,s, h_2,\rho_2 \Gamma, Q_2, A_2)$ are valid, we have that 
$$s,h_1\uplus h_2 \models Q_1 \ast Q_2.$$

From  Proposition $\ref{safeskip}$, $Safe_k(\skipk, s, h_1\uplus h_2,\rho, \Gamma, Q_1\ast Q_2, A_1\cup A_2)$ is valid.

Suppose that the transition is given by $(E)$. Let $$A'=A_1\cup A_2 \cup \bigcup_{r\in Locked(\park{C_1}{C_2})} PV(r).$$

 We have that $C'=\park{C_1}{C_2}$, $\rho'=(O_1\cup O_2,L',D')$, $(s,h,\rho)\stackrel{A'}{\leftrightsquigarrow} (s',h,\rho')$, and there exists $h'_G$ such that $h'=h_1\uplus h_2 \uplus h'_G$ and $$s', h'_G \models \underset{r\in D'}{\circledast} \Gamma(r).$$

We know that $s(x)=s'(x)$, for every $x\in A'$ and $L\cup D=L'\cup O_2\cup D'$. Then we have the following environment transformation $$(s,h_1,\rho_1)\stackrel{A'_1}{\leftrightsquigarrow} (s',h_1,\rho'_1),$$ where $\rho'_1=(O_1,L'\cup O_2,D')$ and $A'_1=A_1\cup \bigcup_{r\in Locked(C_1)} PV(r)\subseteq A'$.


By the hypothesis, we conclude that $Safe_k(C_1,s',h_1,\rho'_1,\Gamma,Q_1,A_1)$ is valid.

Analogous, we obtain that $Safe_k(C_2,s',h_2,\rho'_2,\Gamma,Q_2,A_2)$ is valid, where $\rho'_2=(O_2,L'\cup O_1,D')$.


Therefore $Safe_k(\park{C_1}{C_2},s',h_1\uplus h_2, \rho',\Gamma,Q_1\ast Q_2, A_1\cup A_2)$ is valid, by the induction hypothesis.
\end{proof}

The safety of the critical region rule follows from the safety inside the critical region. Because any environment transition performed before the critical region does not break the precondition's validity and when a program enters a critical region its invariant is valid. Therefore the next result establishes safety for the critical region rule.

\begin{prop}\label{safewith}
Let $C$ be a reachable command, let $(s,h)\in \mathcal{S}$, let $\Gamma$ be a resource context, let $\rho=(O,L,D)\in \mathcal{O}$, let $Q, R$ be assertions and $A\subseteq\textbf{Var}$. Suppose that $r\in O$, $\Gamma'=\Gamma,r(X):R$ is a resource context and $FV(Q)\subseteq A$.

If $\mathit{Safe}_n(C,s,h,\rho\setminus \{r\},\Gamma, Q\ast R,A\cup X)$ is valid, then $$\mathit{Safe}_n(\withink{r}{C},s,h,\rho,\Gamma',Q,A)\textrm{ is valid.}$$
\end{prop}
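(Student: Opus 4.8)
The plan is to prove the statement by induction on $n$, the base case $n=0$ being immediate since $\mathit{Safe}_0$ always holds. For the inductive step $n=k+1$ I would verify the four clauses of safety in turn, disposing of the first three quickly. Clause (i) is vacuous because $\withink{r}{C}\neq\skipk$. For clause (ii), the only abort rules matching $\withink{r}{C}$ are $(WA1)$ and $(WA2)$: the latter requires $r\notin O$, excluded by the hypothesis $r\in O$, while the former requires $C,(s,h,\rho\setminus\{r\})\rightarrow_{p}\abok$, excluded by clause (ii) of the assumed safety of $C$. Clause (iii) follows because $chng(\withink{r}{C})=chng(C)$ and, since $r\in O$ forces $r\notin L\cup D$, the set $\bigcup_{r'\in L\cup D}PV(r')$ is the same whether computed in $\Gamma$ or in $\Gamma'$, so the clause for $C$ transfers verbatim.

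The substance lies in clause (iv), handled by case analysis on the transition out of $\withink{r}{C}$, noting first that $r\notin D$ makes $\underset{r'\in D}{\circledast}\Gamma'(r')=\underset{r'\in D}{\circledast}\Gamma(r')$, so the frame heap $h_G$ feeds both the outer and the inner safety assertions. A step by rule $(W1)$ arises from an inner step $C,(s,h\uplus h_G,\rho\setminus\{r\})\rightarrow_{p} C',(s',\tilde h,(O^\ast,L',D'))$; since $\rightarrow_{p}\,\subseteq\,\xrightarrow{A\cup X,\Gamma}$, I would feed this into clause (iv) of the safety of $C$ to obtain a split $\tilde h=h'\uplus h'_G$ with $s',h'_G\models\underset{r'\in D'}{\circledast}\Gamma(r')$ and $\mathit{Safe}_k(C',s',h',(O^\ast,L',D'),\Gamma,Q\ast R,A\cup X)$ valid. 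Reattaching $r$, the resulting configuration is $\rho'=(O^\ast\cup\{r\},L',D')$ with $\rho'\setminus\{r\}=(O^\ast,L',D')$ and $r$ owned, so the induction hypothesis on $n$ (applicable since $C'$ stays reachable) yields $\mathit{Safe}_k(\withink{r}{C'},s',h',\rho',\Gamma',Q,A)$, as required.

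The release step $(W2)$ applies when $C=\skipk$ and moves $r$ from owned to available; here clause (i) of the assumed safety gives $s,h\models Q\ast R$. I would use preciseness of $R$ (Definition~\ref{defastpre}) to split $h=h_Q\uplus h_R$ with $s,h_Q\models Q$ and $s,h_R\models R$, the subheap $h_R$ being uniquely determined, then take the new local heap to be $h_Q$ and the new shared heap to be $h_G\uplus h_R$, which satisfies $\underset{r'\in D\cup\{r\}}{\circledast}\Gamma'(r')=\bigl(\underset{r'\in D}{\circledast}\Gamma(r')\bigr)\ast R$; Proposition~\ref{safeskip} then delivers safety of $\skipk$. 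For an environment step $(E)$ the command is unchanged; since $Locked(\withink{r}{C})=Locked(C)\cup\{r\}$ and $PV(r)=X$, the set $A\cup\bigcup_{r'\in Locked(\withink{r}{C})}PV(r')$ equals $(A\cup X)\cup\bigcup_{r'\in Locked(C)}PV(r')$, so the very same environment transformation is a legal environment step for $C$ under rely-set $A\cup X$ and context $\Gamma$ (rewriting the invariants via $r\notin D,D'$). Applying clause (iv) of the safety of $C$ and then the induction hypothesis on $n$ to the unchanged command closes this case.

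I expect the main obstacle to be the release case $(W2)$, as it is the only point carrying genuine logical content: the invariant $R$ must be handed back to the shared state, and it is precisely here that preciseness of $R$ is indispensable to pin down the returned subheap uniquely and thereby identify the local heap $h_Q$ on which $Q$ holds. The environment case is the most error-prone bookkeeping, the delicate point being to confirm that the rely-set augmented by the protected variables of the locked resources agrees on both sides, so that the mirrored environment step for $C$ is well-formed and the induction hypothesis may be invoked on the unchanged command.
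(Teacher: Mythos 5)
Your proof is correct and takes essentially the same route as the paper's: induction on $n$, the first three clauses dispatched directly from $r\in O$ and the assumed safety of $C$, and clause (iv) handled by the same case analysis on $(W1)$, $(W2)$ and $(E)$, each transition mirrored back onto the safety of $C$ and closed by the induction hypothesis. One correction to your commentary: in the $(W2)$ case the split $h=h_Q\uplus h_R$ with $s,h_Q\models Q$ and $s,h_R\models R$ exists by the mere semantics of the separating conjunction $Q\ast R$, so preciseness of $R$ is not \emph{indispensable} there --- the paper's proof uses only this existential split, and uniqueness of $h_R$ plays no role in the argument (preciseness of resource invariants is needed elsewhere, namely for the soundness of the conjunction rule).
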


\begin{proof}
Let $C$ be a reachable command, let $(s,h)\in \mathcal{S}$, let $\Gamma$ be a resource context, let $\rho=(O,L,D)\in \mathcal{O}$, let $Q, R$ be assertions and $A\subseteq\textbf{Var}$ such that $r\in O$, $\Gamma'=\Gamma,r(X):R$ is a resource context, $Safe_n(C,s,h,\rho\setminus \{r\},\Gamma, Q\ast R,A\cup X)$ is valid and $FV(Q)\subseteq A$.

We prove by induction on $n$ that $Safe_n(\withink{r}{C},s,h,\rho,\Gamma',Q,A)$ is valid. For $n=0$, it is trivial.

Let $n=k+1$.
The first property is immediate, because $\withink{r}{C} \neq \skipk$.

From $Safe_n(C,s,h,\rho\setminus \{r\},\Gamma, Q\ast R,A\cup X)$ be valid, we know that $$C,(s,h,\rho\setminus \{r\})\not\rightarrow_p \abok.$$

Together with $r\in O$, we have the second property $$\withink{r}{C},(s,h,\rho)\not\rightarrow_p \abok.$$

From $Safe_n(C,s,h,\rho\setminus \{r\},\Gamma, Q\ast R,A\cup X)$ be valid and $r\in O$, we know that $\withink{r}{C}$ does not change variable protected by resource in $L\cup D$.
Then, the third condition is respected.

Let $h_G$, $C'$, $s'$, $h'$ and $\rho'$ such that $s,h_G\models\underset{\hat{r}\in D}{\circledast} \Gamma'(\hat{r})$ and
$$ \withink{r}{C},(s,h\uplus h_G,\rho)\xrightarrow{A, \Gamma'} C', (s', h',\rho').$$

There are three possible transition: $(W1)$, $(W2)$ or $(E)$.

Suppose that it is $(W1)$.
We have $C'=\withink{r}{\tilde{C}}$, $r\in O\cap O'$ and $$C,(s,h\uplus h_G,\rho\setminus \{r\})\rightarrow_p \tilde{C}, (s', h',\rho'\setminus \{r\}).$$  

From $Safe_n(C,s,h,\rho\setminus \{r\},\Gamma, Q\ast R,A\cup X)$, we know that there are $h'_G$, $h'_L$ such that $h'=h'_L\uplus h'_G$, $Safe_{k}(\tilde{C},s',h'_L,\rho'\setminus \{r\},\Gamma, Q\ast R,A\cup X)$ is valid and  $$s',h'_G\models\underset{\hat{r}\in D'}{\circledast} \Gamma(\hat{r}).$$

From $r\in O'$, we conclude that $s',h'_G\models\underset{\hat{r}\in D'}{\circledast} \Gamma'(\hat{r})$.

From $Safe_{k}(\tilde{C},s',h'_L,\rho'\setminus \{r\},\Gamma, Q\ast R,A\cup X)$ and the induction hypothesis, we obtain that $Safe_{k}(\withink{r}{\tilde{C}},s',h'_L,\rho',\Gamma', Q,A)$ is valid.

Suppose that the transition is given by $(W2)$.

We have that $C'=\skipk$, $C=\skipk$, $s=s'$, $h'=h\uplus h_G$, $O'=O\setminus \{r\}$, $L'=L$ and $D'=D\cup \{r\}$. From $Safe_n(\skipk,s,h,\rho,\Gamma, Q\ast R,A\cup X)$, we have that $$s,h\models Q \ast R.$$

Then there exists $h_R\subseteq h$ such that $s,h_R\models R$ and $s,h\setminus h_R\models Q$. 
Then $$s,h_G\uplus h_R\models\underset{\hat{r}\in D'}{\circledast} \Gamma'(\hat{r}).$$

By Proposition $\ref{safeskip}$, $FV(Q)\subseteq A$ and $s,h\setminus h_R\models Q$, we conclude that $Safe_{k}(\skipk, s, h\setminus h_R, \rho',\Gamma', Q, A)$ is valid.

Suppose that the transition is given by $(E)$. Let $$A'=A\cup PV(r) \cup \bigcup_{\hat{r}\in Locked(C)}PV(\hat{r}).$$ 

We have $C'=\withink{r}{C}$, $(s,h,\rho)\stackrel{A'}{\leftrightsquigarrow} (s',h,\rho')$ and there exists $h'_G\subseteq h'$ such that $h'=h\uplus h'_G$ and $$s',h'_G\models \underset{\hat{r}\in D'}{\circledast} \Gamma'(\hat{r}).$$

From $r\in O\cap O'$, we get that $s',h'_G\models \underset{\hat{r}\in D'}{\circledast} \Gamma(\hat{r})$ and $s,h_G\models \underset{\hat{r}\in D}{\circledast} \Gamma(\hat{r})$.

Note that $A'=A\cup X \cup \bigcup_{\hat{r}\in Locked(C)}PV(\hat{r})$ and the environment transition 
$$C,(s,h\uplus h_G, \rho\setminus \{r\})\xrightarrow{A\cup X, \Gamma}_{e} C,(s',h\uplus h'_G, \rho'\setminus \{r\}).$$

From the validity of $Safe_n(C,s,h,\rho\setminus \{r\}, \Gamma, Q\ast R, A\cup X)$, we have that $Safe_{k}(C,s',h,\rho'\setminus \{r\}, \Gamma, Q\ast R, A\cup X)$ is valid. 

Therefore, by induction, $Safe_{k}(\withink{r}{C},s',h,\rho',\Gamma',Q,A)$ is valid.
\end{proof}


In the proposition below, we give properties for the local resource when the resource is available or locked, similar to \cite[Lemma 4.3]{vaf} in DCSL. The soundness of the local resource rule follows from the second property of the proposition.

\begin{prop}\label{saferes}
Let $C$ be a reachable command, let $(s,h)\in\mathcal{S}$, let $\Gamma$ be a resource context, $\rho=(O,L,D)\in\mathcal{O}$,  let $Q,R$ be assertions and $A, X\subseteq\textbf{Var}$. Suppose that $r\notin\rho$, $\Gamma'=\Gamma, r(X):R$ is a resource context and $FV(Q)\subseteq A$. We have the following statements:
\begin{itemize}
	\item Suppose that $r\in Locked(C)$.
	If $\mathit{Safe}_n(C,s,h,(O\cup \{r\}, L,D),\Gamma', Q, A)$ is valid, then $\mathit{Safe}_n(\resk{r}{C},s,h,\rho,\Gamma, Q\ast R, A\cup X)$ is valid.
	\item Suppose that $r\notin Locked(C)$ and that there exists $h_R$ such that $h_R\bot h$ and $s,h_R\models R$.
	If $\mathit{Safe}_n(C,s,h,(O, L,D\cup \{r\}),\Gamma', Q, A)$ is valid, then $\mathit{Safe}_n(\resk{r}{C},s,h\uplus h_R,\rho,\Gamma, Q\ast R, A\cup X)$ is valid.
\end{itemize}
\end{prop}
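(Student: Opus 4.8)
The plan is to prove both items \emph{simultaneously} by induction on $n$, because a single program step may acquire or release $r$ and thereby turn a command satisfying the first hypothesis into one satisfying the second, and conversely. The base case $n=0$ is immediate, so I concentrate on the step $n=k+1$ and establish the four clauses of $\mathit{Safe}_{k+1}(\resk{r}{C},\dots)$ from the assumed safety of $C$.

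Clauses (i)--(iii) are routine. Since $\resk{r}{C}\neq\skipk$, clause (i) is vacuous. For (ii) the hypothesis $r\notin\rho$ rules out $(RA)$, while $(RA1)$ (first item) and $(RA2)$ (second item) are blocked because $C$ does not abort for the configuration $(O\cup\{r\},L,D)$, resp.\ $(O,L,D\cup\{r\})$ --- in the second item after enlarging the heap by $h_R$, via safety monotonicity (Proposition~\ref{safetymono}). Clause (iii) holds because $chng(\resk{r}{C})=chng(C)$ and the protected-variable union over $\rho$ is contained in the one over the inner configuration.

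The real content is clause (iv). The key device is to translate frames between the outer $\Gamma$-view and the inner $\Gamma'$-view: when $r$ is available (second item) the invariant heap $h_R$ is merged with the ambient shared heap, so an outer frame $h_G\models\underset{\hat r\in D}{\circledast}\Gamma(\hat r)$ corresponds to the inner frame $h_R\uplus h_G\models\underset{\hat r\in D\cup\{r\}}{\circledast}\Gamma'(\hat r)$ --- this is precisely where $\Gamma'(r)=R$ enters; when $r$ is owned (first item) the two frames coincide since $r\notin D$. Given a transition of $\resk{r}{C}$, I strip the outer $\resk{r}{\cdot}$ via $(R0)$, $(R1)$ or $(R2)$ to obtain a step of $C$ under the enlarged configuration, feed it to clause (iv) of the hypothesis on $C$, and re-wrap with $\resk{r}{\cdot}$ using the induction hypothesis; the degenerate $(R0)$ step (second item, $C=\skipk$) is closed directly with Proposition~\ref{safeskip}, reading off $Q\ast R$ from $s,h\models Q$ and $s,h_R\models R$. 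For environment steps $(E)$ I convert the outer environment transition (rely-set $A\cup X$, resources $Locked(\resk{r}{C})=Locked(C)\setminus\{r\}$) into an inner one (rely-set $A$, resources $Locked(C)$): the two agree because in the first item $r\in Locked(C)$ contributes $PV(r)=X$ to the inner locked-variable union, absorbing the extra $X$ of the outer rely-set, whereas in the second item $r$ stays available and its invariant heap $h_R$, lying inside the preserved local heap, is left untouched (so $s',h_R\models R$ by Proposition~\ref{astneval}).

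The main obstacle is the branch inside $(R1)$/$(R2)$ where the single step of $C$ changes the status of $r$. By Propositions~\ref{proprec} and~\ref{lockres} the post-configuration $(O'',L,D'')$ satisfies $r\in O''$ iff $r\in Locked(\tilde C)$, so I split: if $r$ remains owned I apply the first item of the induction hypothesis to $\tilde C$; if $r$ has just been released then $r\in D''$ and the inner frame $h'_G\models\underset{\hat r\in D''}{\circledast}\Gamma'(\hat r)$ carries a copy of $R$, and here I use precision of $R$ (Definition~\ref{defastpre}) to peel off the unique subheap $h_R'\models R$, relocate it into the local heap, and apply the second item of the induction hypothesis. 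The symmetric acquisition branch inside $(R2)$ moves $R$ out of the separated local heap back into the working heap. Keeping the heap decompositions and the rely-set/protected-variable bookkeeping aligned across these flips is the delicate part; the remaining reasoning follows the pattern already used for Propositions~\ref{safeframe} and~\ref{safewith}.
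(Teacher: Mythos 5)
Your proposal is correct and follows essentially the same route as the paper's proof: a simultaneous (mutual) induction on $n$ over both items, stripping the outer $\resk{r}{\cdot}$ via $(R0)$/$(R1)$/$(R2)$ and re-wrapping with the induction hypothesis, splitting on whether $r$ ends up owned or available after the inner step to decide which item of the hypothesis to apply, and converting environment transitions using $Locked(\resk{r}{C})$, $PV(r)=X$ and Proposition~\ref{astneval}. The only cosmetic difference is your appeal to precision of $R$ when peeling $h'_R$ off the shared frame --- mere existence of such a subheap, which follows from the semantics of $\ast$, is all the paper needs there.
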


This proposition is proved by induction on both properties in the following way: first we prove that both properties are true when $n=0$; then we assume that both properties are true for $n\geq 0$ and prove that each property is true for $n+1$. 

The program transitions inside the local resource have an equivalent program transition for the command $C$, except for the transition $(R0)$. In those cases we apply one of the inductive step depending on resource's ownership. For the case $(R0)$, we note that the execution inside the local resource had terminated and the invariant $R$ is respected. If the local resource transits by an environment transition, then there is an equivalent environment transition in $C$.

\begin{proof}
Let $C$ be a reachable command, let $(s,h)\in\mathcal{S}$, let $\Gamma$ be a resource context, $\rho=(O,L,D)\in\mathcal{O}$,  let $Q,R$ be assertions and $A, X\subseteq\textbf{Var}$ such that $r\notin\rho$, $\Gamma'=\Gamma, r(X):R$ is a resource context and $FV(Q)\subseteq A$.

Consider the next statements:

\begin{tabular}{c p{0.82\linewidth}}
$P(n)$ & If $r\in Locked(C)$ and $Safe_n(C,s,h,(O\cup \{r\}, L,D),\Gamma', Q, A)$ is valid, then
$Safe_n(\resk{r}{C},s,h,\rho,\Gamma, Q\ast R, A\cup X)$ is valid.\\
$Q(n)$ & If $r\notin Locked(C)$, $Safe_n(C,s,h,(O, L,D\cup \{r\}),\Gamma', Q, A)$ is valid and there exist $h_R\bot h$ such that $s,h_R\models R$, then $Safe_n(\resk{r}{C},s,h\uplus h_R,\rho,\Gamma, Q\ast R, A\cup X)$ is valid.
\end{tabular}

We prove the result in three steps. First, we note that $P(0)$ and $Q(0)$ are true. Next, we see that $P(n)\wedge Q(n)\Rightarrow Q(n+1)$, for every $n\geq 0$. Last, we show that $P(n)\wedge Q(n)\Rightarrow P(n+1)$, for every $n\geq 0$.

Next, we suppose that $P(n) \wedge Q(n)$ is valid and show that $Q(n+1)$ is valid.

Suppose that $Safe_{n+1}(C,s,h,(O, L,D\cup \{r\}),\Gamma', Q, A)$ is valid, $s,h_R\models R$, $h_R\bot h$  and $r\notin Locked(C)$.

The first property of $Safe_{n+1}(\resk{r}{C},s,h\uplus h_R,\rho,\Gamma, Q\ast R, A\cup X)$ is immediate, because $\resk{r}{C}\neq \skipk$.

From $Safe_{n+1}(C,s,h,(O, L,D\cup \{r\}),\Gamma', Q, A)$ and Proposition \ref{safetymono}, we have $$C,(s,h\uplus h_R,(O,L,D\cup \{r\}))\not\rightarrow_{p} \abok.$$

Note that $r\notin\rho\cup Locked(C)$. Hence $\resk{r}{C}, (s,h\uplus h_R, \rho)\not\rightarrow_{p} \abok$.
And the second property is valid.

From $Safe_{n+1}(C,s,h,(O, L,D\cup \{r\}),\Gamma', Q, A)$, we know that $$chng(\resk{r}{C})\cap \bigcup_{\hat{r}\in L\cup D} PV(\hat{r})\subseteq chng(C)\cap \bigcup_{\hat{r}\in L\cup D\cup \{r\}} PV(\hat{r})=\emptyset.$$
 
Hence, the third property is respected.

 Let $h_G$, $C'$, $s'$, $h'$ and $\rho'$ such that $s,h_G\models\underset{\hat{r}\in D}{\circledast} \Gamma(\hat{r})$ and
$$ \resk{r}{C},(s,h\uplus h_R\uplus h_G,\rho)\xrightarrow{A\cup X, \Gamma} C', (s', h',\rho').$$

Next, we study the possible transitions: $(R0)$, $(R2)$ or $(E)$.

Suppose that the transition is given by $(R0)$.
We have that $C=C'=\skipk$, $s'=s$, $h'=h\uplus h_R\uplus h_G$ and $\rho'=\rho$. 
Consider $h'_G=h_G$. Then $h'_G\subseteq h'$ and $$s,h'_G\models\underset{\hat{r}\in D}{\circledast} \Gamma(\hat{r}).$$

From $Safe_{n+1}(\skipk,s,h,(O, L,D\cup \{r\}),\Gamma', Q, A)$, we have that $s,h\models Q$.
 And $s,h\uplus h_R \models Q\ast R$.

Hence $Safe_n(\skipk, s, h\uplus h_R, \Gamma, Q\ast R, A\cup X)$ is valid, by Proposition $\ref{safeskip}$.

Suppose that it is $(R2)$. We have that $C'=\resk{r}{\tilde{C}}$ and $$C,(s,h\uplus h_R\uplus h_G,(O,L,D\cup \{r\}))\rightarrow_p \tilde{C}, (s', h',\rho''),$$
where $\rho'=\rho''\setminus\{r\}$. 
Note that $s,h_R\uplus h_G\models \underset{\hat{r}\in D\cup \{r\}}{\circledast} \Gamma'(\hat{r})$. 

From $Safe_{n+1}(C,s,h,(O, L,D\cup \{r\}),\Gamma', Q, A)$ and the transition above, we know that there is $h'_G$ such that $h'_G\subseteq h'$, $Safe_n(\tilde{C},s', h'\setminus h'_G,\rho'', \Gamma', Q,A)$ is valid and $$s',h'_G\models \underset{\hat{r}\in D''}{\circledast} \Gamma'(\hat{r}).$$ 

In order to prove that $Safe_n(\resk{r}{\tilde{C}},s',h'\setminus h'_G,\rho',\Gamma, Q\ast R, A\cup X)$ is valid, we need to apply the hypothesis $P(n)$ or $Q(n)$, respectively, if $r\in O''$ or $r\in D''$. Note that $r\notin L''$, by Proposition \ref{proprec}.

We observe that $r\in O''$ if and only if the resource $r$ was acquired in the transition above. 

If $r\in O''$, then $r\in Locked(\tilde{C})$. From $Safe_n(\tilde{C},s', h'\setminus h'_G,\rho'', \Gamma', Q,A)$ and $P(n)$, we have $Safe_n(\resk{r}{\tilde{C}},s',h'\setminus h'_G,\rho',\Gamma, Q\ast R, A\cup X)$ is valid.

If $r\in D''$, then $r\notin Locked(\tilde{C})$. We remark that
$$s',h'_G\models \underset{\hat{r}\in D''}{\circledast} \Gamma'(\hat{r})=R \ast \left(\underset{\hat{r}\in D'}{\circledast} \Gamma(\hat{r})\right).$$ 

Then there exists $h'_R\subseteq h'_G$ such that $s',h'_R\models R$. 
Therefore, by $Q(n)$, $Safe_n(\resk{r}{\tilde{C}},s',h'\setminus h'_G\uplus h'_R,\rho',\Gamma, Q\ast R, A\cup X)$ is valid.


Suppose that the transition is given by $(E)$. Let $$A'=A\cup X \cup \bigcup_{\hat{r}\in Locked(\resk{r}{C})} PV(\hat{r}).$$

We have $C'= \resk{r}{C}$, $(s,h\uplus h_R,\rho)\stackrel{A'}{\leftrightsquigarrow} (s',h\uplus h_R,\rho')$ and there exists $h'_G\subseteq h'$ such that $h'=h\uplus h_R\uplus h'_G$ and $$s',h'_G\models \underset{\hat{r}\in D'}{\circledast} \Gamma(\hat{r}).$$

Let $A''= A\cup \bigcup_{\hat{r}\in Locked(C)} PV(\hat{r})$.
From $r\notin Locked(C)$, we have that $Locked(C)=Locked(\resk{r}{C})$ and $A''\subseteq A'$. Therefore
$$(s,h,(O,L,D\cup\{r\}))\stackrel{A''}{\leftrightsquigarrow} (s',h,(O',L',D'\cup\{r\})).$$

From $FV(R)\subseteq X\subseteq A'$ and Proposition \ref{astneval}, we have that $s',h_R\models R$. 
Moreover, we know that
$$s,h_G\uplus h_R\models  \underset{\hat{r}\in D\cup\{r\}}{\circledast} \Gamma'(\hat{r}), \ \ s',h'_G\uplus h_R\models  \underset{\hat{r}\in D'\cup\{r\} }{\circledast} \Gamma'(\hat{r}).$$

Then, we have the following environment transition $$ C,(s,h\uplus h_R\uplus h_G,(O,L,D\cup\{r\})) \xrightarrow{A, \Gamma'}_e C,(s',h\uplus h_R\uplus h'_G,(O',L',D'\cup\{r\})).$$

From $Safe_{n+1}(C,s,h,(O, L,D\cup \{r\}),\Gamma', Q, A)$ and the environment transition above, it follows that $Safe_{n}(C,s',h,(O', L',D'\cup \{r\}),\Gamma', Q, A)$ is valid.

By $Q(n)$, we have $Safe_{n}(\resk{r}{C},s',h\uplus h_R,\rho',\Gamma, Q\ast R, A\cup X)$.

To finish, we prove that $P(n)\wedge Q(n)$ implies $P(n+1)$.

Suppose that $r\in Locked(C)$ and $Safe_{n+1}(C,s,h,(O\cup \{r\}, L,D),\Gamma', Q, A)$ is valid. Analogous to the previous case, we prove the first three properties of $Safe_{n+1}(\resk{r}{C},s,h,(O, L,D),\Gamma, Q\ast R, A\cup X)$.

Let $h_G$, $C'$, $s'$, $h'$ and $\rho'$ such that $s,h_G\models\underset{\hat{r}\in D}{\circledast} \Gamma(\hat{r})$ and
$$ \resk{r}{C},(s,h\uplus h_G,\rho)\xrightarrow{A\cup X, \Gamma} C', (s', h',\rho').$$

There are two possible transitions: $(R1)$ or $(E)$.

Suppose that the transition is  $(R1)$.
We have $C'=\resk{r}{\tilde{C}}$ and
$$C,(s,h\uplus h_G,(O\cup \{r\},L,D))\rightarrow_p \tilde{C}, (s', h',\rho''),$$
where $\rho'=\rho''\setminus\{r\}$. Note that $\underset{\hat{r}\in D}{\circledast} \Gamma(\hat{r})=\underset{\hat{r}\in D}{\circledast} \Gamma'(\hat{r})$.

From the validity of $Safe_{n+1}(C,s,h,(O\cup \{r\}, L,D),\Gamma', Q, A)$ and the transition above, there is $h'_G$ such that $h'_G\subseteq h'$, $Safe_{n}(\tilde{C},s',h'\setminus h'_G,\rho'',\Gamma', Q, A)$ is valid and $$s',h'_G\models\underset{\hat{r}\in D''}{\circledast} \Gamma'(\hat{r}).$$

As before, we study two cases: $r\in O''$ or $r\in D''$.
In this case, we observe that $r\in D''$ if and only if the resource $r$ was released in the transition above.

If $r\in O''$, then $r\in Locked(\tilde{C})$ and $s',h'_G\models\underset{\hat{r}\in D'}{\circledast} \Gamma(\hat{r})$.

By $P(n)$, we obtain $Safe_{n}(\resk{r}{\tilde{C}},s',h'\setminus h'_G,\rho',\Gamma, Q\ast R, A\cup X)$.

If $r\in D''$, then $r\notin Locked(\tilde{C})$ and $\underset{\hat{r}\in D''}{\circledast} \Gamma'(\hat{r})=R\ast \underset{\hat{r}\in D'}{\circledast} \Gamma(\hat{r})$. Moreover there exists $h'_R\subseteq h'_G$ such that $s',h'_R\models R$.

By $Q(n)$, we have $Safe_{n}(\resk{r}{\tilde{C}},s',h'\setminus h'_G\uplus h'_R,\rho',\Gamma, Q\ast R, A\cup X)$.

Suppose that the transition is given by $(E)$. Let $$A'=A\cup X \cup \bigcup_{\hat{r}\in Locked(\resk{r}{C})} PV(\hat{r}).$$

We have that $C'= \resk{r}{C}$, $(s,h\uplus h_R,\rho)\stackrel{A'}{\leftrightsquigarrow} (s',h\uplus h_R,\rho')$ and there exists $h'_G\subseteq h'$ such that $h'=h\uplus h_R\uplus h'_G$ and $$s',h'_G\models \underset{\hat{r}\in D'}{\circledast} \Gamma(\hat{r}).$$

From $Locked(\resk{r}{C})\cup \{r\}=Locked(C)$ and $PV(r)=X$, $$A'= A\cup \bigcup_{\hat{r}\in Locked(C)} PV(\hat{r}).$$

We have the environment transformation
$$(s,h,(O\cup\{r\},L,D))\stackrel{A'}{\leftrightsquigarrow} (s',h,(O'\cup\{r\},L',D')).$$
And the environment transition $$ C,(s,h\uplus h_G,(O\cup\{r\},L,D)) \xrightarrow{A, \Gamma'}_e C,(s',h\uplus h'_G,(O'\cup\{r\},L',D')).$$

From $Safe_{n+1}(C,s,h,(O\cup \{r\}, L,D),\Gamma', Q, A)$ and the environment transition above, it follows that $Safe_{n}(C,s',h,(O'\cup \{r\}, L',D'),\Gamma', Q, A)$ is valid.

By $P(n)$, $Safe_{n}(\resk{r}{C},s',h,\rho',\Gamma, Q\ast R, A\cup X)$ is valid.
\end{proof}

The soundness of the renaming rule follows from the next proposition which is a consequence of Proposition \ref{abortrename}.

\begin{prop}\label{saferename}
Let $C$ be a reachable command, $(s,h,(O,L,D))$ be a state, $A\subseteq \textbf{Var}$, $\Gamma$ a well-formed resource context and $r,r'$ resource names such that $r'\notin Res(C)$, $r'\notin Res(\Gamma)$ and $O\cup L\cup D=Res(\Gamma)$.

If $Safe_n(C[r'/r], s,h,\rho[r'/r], \Gamma[r'/r],Q,A)$ is valid, then $$Safe_n(C, s,h,\rho, \Gamma,Q,A)\textrm{ is valid.}$$
\end{prop}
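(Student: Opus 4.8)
The plan is to prove the statement by induction on $n$, deriving each clause of $Safe_{n+1}(C,s,h,\rho,\Gamma,Q,A)$ from the corresponding clause of the hypothesis $Safe_{n+1}(C[r'/r],s,h,\rho[r'/r],\Gamma[r'/r],Q,A)$, with Proposition \ref{abortrename} as the workhorse for transporting transitions. The base case $n=0$ is immediate since $\mathit{Safe}_0$ always holds. Before the inductive step I would record some routine renaming invariances: $Locked(C[r'/r])=Locked(C)[r'/r]$, $chng(C[r'/r])=chng(C)$ (renaming a resource name touches no program variable), and the fact that the protected-variable set attached to a resource is unchanged by relabelling. Consequently both $\bigcup_{\hat r\in L\cup D}PV(\hat r)$ and the invariant assertion $\underset{\hat r\in D}{\circledast}\Gamma(\hat r)=\underset{\hat r\in D[r'/r]}{\circledast}\Gamma[r'/r](\hat r)$ are literally the same objects on the two sides. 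I would also note that $r'\notin\rho$ follows from $r'\notin Res(\Gamma)=O\cup L\cup D$, so the side conditions of Proposition \ref{abortrename} are in force.

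With these invariances, clauses (i)--(iii) of $\mathit{Safe}_{n+1}(C,\dots)$ transfer directly: clause (i) because $C=\skipk$ iff $C[r'/r]=\skipk$ and $Q$ is unchanged; clause (ii) by the abort equivalence of Proposition \ref{abortrename}(1); and clause (iii) because the sets $chng(C)$ and $\bigcup_{\hat r\in L\cup D}PV(\hat r)$ coincide with their renamed counterparts.

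The substantive work is clause (iv). Given a transition $C,(s,h\uplus h_G,\rho)\xrightarrow{A,\Gamma}C',(s',\hat h,\rho')$ with $s,h_G\models\underset{\hat r\in D}{\circledast}\Gamma(\hat r)$, I would split on whether it is a program transition or an environment transition. If it is a program transition, Proposition \ref{abortrename}(2) yields the matching $C[r'/r],(s,h\uplus h_G,\rho[r'/r])\rightarrow_p C'[r'/r],(s',\hat h,\rho'[r'/r])$; feeding this into the validity of $\mathit{Safe}_{n+1}(C[r'/r],\dots)$ produces $h',h'_G$ with $\hat h=h'\uplus h'_G$, $s',h'_G\models\underset{\hat r\in D'}{\circledast}\Gamma(\hat r)$, and $\mathit{Safe}_n(C'[r'/r],s',h',\rho'[r'/r],\Gamma[r'/r],Q,A)$ valid. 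The environment-transition case is \emph{not} covered by Proposition \ref{abortrename}, so I would reconstruct it by hand: since $Locked$, the protected variables, and the invariants all relabel consistently, the augmented rely-set $A\cup\bigcup_{r\in Locked(C)}PV(r)$ is invariant under $[r'/r]$, and the environment-transformation condition $L'\cup D'=L\cup D$ becomes $L'[r'/r]\cup D'[r'/r]=L[r'/r]\cup D[r'/r]$, so the rule $(E)$ fires for $C[r'/r]$ under $\Gamma[r'/r]$ with $C'=C$; the renamed safety again supplies the required data.

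In both subcases I close by invoking the induction hypothesis on $C'$, whose side conditions I must check: $r'\notin Res(C')$ holds because neither program nor environment transitions introduce fresh resource names, so $Res(C')\subseteq Res(C)$; and $O'\cup L'\cup D'=Res(\Gamma)$ follows from Proposition \ref{proprec}, which gives $L=L'$ and $O\cup D=O'\cup D'$. The induction hypothesis then upgrades $\mathit{Safe}_n(C'[r'/r],s',h',\rho'[r'/r],\Gamma[r'/r],Q,A)$ to $\mathit{Safe}_n(C',s',h',\rho',\Gamma,Q,A)$, finishing clause (iv). I expect the main obstacle to be exactly the environment-transition subcase together with the bookkeeping that keeps the induction hypothesis applicable: Proposition \ref{abortrename} dispatches the program transitions cleanly, but the environment step must be rebuilt directly, and the propagation of the freshness condition $r'\notin Res(C')$ and of $O'\cup L'\cup D'=Res(\Gamma)$ along the execution has to be argued explicitly.
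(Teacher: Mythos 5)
Your proof is correct and follows essentially the same route as the paper's: induction on $n$, transfer of clauses (i)--(iii) via the renaming invariances, and Proposition~\ref{abortrename} to transport the transition in clause (iv) before applying the renamed safety and the induction hypothesis. You are in fact more careful than the paper, which applies Proposition~\ref{abortrename} to the whole relation $\xrightarrow{A, \Gamma}$ (silently including environment transitions, which that proposition does not formally cover) and leaves the side conditions of the induction hypothesis implicit, whereas you rebuild the environment step by hand and explicitly verify $r'\notin Res(C')$ and $O'\cup L'\cup D'=Res(\Gamma)$ along the execution.
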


\begin{proof}
We prove the proposition by induction on $n$. 
Let $C$ be a reachable command, $(s,h,(O,L,D)$ be a state, $A\subseteq \textbf{Var}$, $\Gamma$ a well-formed resource context and $r,r'\in\textbf{Res}$ such that $r'\notin Res(C)$, $r'\notin Res(\Gamma)$, $O\cup L\cup D=Res(\Gamma)$ and $Safe_n(C[r'/r], s,h,\rho[r'/r], \Gamma[r'/r],Q,A)$ is valid.

First, note that $r\notin \rho$, because $r'\notin Res(\Gamma)$ and $O\cup L\cup D=Res(\Gamma)$.

For $n=0$, it is trivially true.
Let $n=k+1$.

Because $Safe_n(C[r'/r], s,h,\rho[r'/r], \Gamma[r'/r],Q,A)$ is valid. If $C=\skipk$, then $C[r'/r]=\skipk$ and $s,h\models Q$, .

So the property $(i)$ of $Safe_n(C, s,h,\rho, \Gamma,Q,A)$ is verified.

From $Safe_n(C[r'/r], s,h,\rho[r'/r], \Gamma[r'/r],Q,A)$ and Proposition $\ref{abortrename}$, $$C[r'/r],(s,h,\rho[r'/r])\not\rightarrow_{p} \abok.$$ 

Hence, we have the property $(ii)$ of $Safe_n(C, s,h,\rho, \Gamma,Q,A)$.

From $Safe_n(C[r'/r], s,h,\rho[r'/r], \Gamma[r'/r],Q,A)$ be valid, we have $$chng(C)\cap \bigcup_{\hat{r}\in L\cup D} PV(\hat{r})=chng(C[r'/r])\cap \bigcup_{\hat{r}\in (L\cup D)[r'/r]} PV(\hat{r})=\emptyset.$$

Then the property $(iii)$ of $Safe_n(C, s,h,\rho, \Gamma,Q,A)$ is respected.

Let $h_G$, $C'$, $s'$, $h'$ and $\rho'$ such that $h_G\bot h$, $s,h_G\models \underset{\hat{r}\in D}{\circledast} \Gamma(\hat{r})$ and  
$$C,(s,h\uplus h_G,\rho)\xrightarrow{A, \Gamma} C', (s', h',\rho').$$

Note that $C'$ is a reachable command. By Proposition $\ref{abortrename}$, we have that $$C[r'/r],(s,h\uplus h_G,\rho[r'/r])\xrightarrow{A, \Gamma[r'/r]} C'[r'/r], (s', h',\rho'[r'/r]).$$

Moreover, we know that $$s,h_G\models \underset{\hat{r}\in D[r'/r]}{\circledast} \Gamma[r'/r](\hat{r}).$$ 

By $Safe_n(C[r'/r], s,h,\rho[r'/r], \Gamma[r'/r],Q,A)$ and the transition above, there exists $h'_G$ such that $h'_G\subseteq h'$,  $Safe_k(C'[r'/r], s',h'\setminus h'_G,\rho'[r'/r], \Gamma[r'/r],Q,A)$ is valid and $$s',h'_G\models \underset{\hat{r}\in D'[r'/r]}{\circledast} \Gamma[r'/r](\hat{r}).$$

It is easy to see that $s',h'_G\models \underset{\hat{r}\in D'}{\circledast} \Gamma(\hat{r})$.

By induction hypothesis, $Safe_k(C', s',h'\setminus h'_G,\rho', \Gamma,Q,A)$ is valid.
\end{proof}

To prove the soundness of the rule for auxiliary variables, we have the following result.

\begin{prop}\label{safeaux}
Let $C$ be a reachable command, $(s,h),(s',h)\in\mathcal{S}$, $\rho\in\mathcal{O}$, $Q$ an assertion, $A,X\subseteq\textbf{Var}$, $\Gamma$ a resource context and $l\in\mathbb{N}_0$ such that $X$ is a set of auxiliary variables for $C$, $l$ is the number of assignments to auxiliary variables in $C$, $FV(Q)\subseteq A$ and $X\cap (PV(\Gamma)\cup FV(Q))=\emptyset$.

If $Safe_{n+l}(C,s,h,\rho, \Gamma, Q, A\cup X)$ is valid and $s(x)=s'(x)$, for every $x\notin X$, then $Safe_{n}(C\setminus X,s',h,\rho, \Gamma, Q, A)$ is valid.
\end{prop}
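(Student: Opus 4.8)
The plan is to prove the statement by induction on $n$, exactly as in the preceding safety propositions. The base case $n=0$ is immediate since $\mathit{Safe}_0$ always holds, so the whole argument lives in the inductive step $n=k+1$, where I must establish the four clauses of $\mathit{Safe}_{k+1}(C\setminus X, s', h, \rho, \Gamma, Q, A)$ from $\mathit{Safe}_{n+l}(C, s, h, \rho, \Gamma, Q, A\cup X)$ together with the hypothesis that $s$ and $s'$ agree off $X$. The conceptual heart is a \emph{correspondence} between the transitions of $C\setminus X$ and those of $C$: since $X$ is auxiliary for $C$, every occurrence of an $X$-variable sits inside an assignment to an $X$-variable, so the removed assignments never appear in guards $B$, in heap-manipulating commands, or in resource brackets. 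Consequently $C$ and $C\setminus X$ share the same syntax tree up to replacing each auxiliary assignment $\assigk{x}{e}$ (with $x\in X$) by $\skipk$, and in particular $Locked(C)=Locked(C\setminus X)$, $chng(C\setminus X)=chng(C)\setminus X$, and every non-auxiliary basic command has its free variables outside $X$.

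The three static clauses are then straightforward. For clause (i), $C\setminus X=\skipk$ forces $C$ to be either $\skipk$ or a single auxiliary assignment; in both cases $\mathit{Safe}_{n+l}(C,\ldots)$ yields a store satisfying $Q$ that agrees with $s'$ on $FV(Q)$ (using $X\cap FV(Q)=\emptyset$), so $s',h\models Q$ follows from Proposition~\ref{astneval}. For clause (ii) I would argue the contrapositive, lifting an abort of $C\setminus X$ at $(s',h,\rho)$ to an abort of $C$ at $(s,h,\rho)$: the offending basic or resource command is identical in both and evaluates the same, since $s,s'$ agree off $X$ and that command does not mention $X$, contradicting clause (ii) for $C$. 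Clause (iii) is immediate from $chng(C\setminus X)=chng(C)\setminus X$.

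The real work is clause (iv). Given a frame $h_G$ with $s',h_G\models\underset{r\in D}{\circledast}\Gamma(r)$ and a transition $C\setminus X,(s',h\uplus h_G,\rho)\xrightarrow{A,\Gamma}C'',(s'',\hat h,\rho'')$, I split on its kind. If it is an environment transition, then $C''=C\setminus X$ and I build a matching environment transition for $C$: defining $\tilde s$ to agree with $s$ on $X$ and with $s''$ off $X$, one checks that $\tilde s$ agrees with $s$ on the larger rely-set $A\cup X\cup\bigcup_{r\in Locked(C)}PV(r)$ (using $X\cap PV(\Gamma)=\emptyset$ and $Locked(C)=Locked(C\setminus X)$), and that the invariant frames still hold because each invariant's free variables lie in its protected set and are thus disjoint from $X$; clause (iv) for $C$ then gives $\mathit{Safe}_{n+l-1}(C,\tilde s,h,\rho'',\ldots)$, and the induction hypothesis applied to $\tilde s,s''$ yields $\mathit{Safe}_{k}(C\setminus X,s'',h,\rho'',\ldots)$. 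If it is a program transition, I use the correspondence: a program step of $C\setminus X$ is matched in $C$ by firing a (possibly empty) block of enabled auxiliary assignments followed by one \emph{real} step, producing $\hat C$ with $\hat C\setminus X=C''$; the real step has the same heap and resource effect and takes the same $\mathrm{IF}$ branch, because its guards and expressions avoid $X$ and $s,s'$ agree off $X$. Iterating clause (iv) for $C$ along these steps and then invoking the induction hypothesis on $\hat C$ closes the case.

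The step I expect to be the main obstacle is precisely the \emph{transition-budget bookkeeping} hidden in the ``$+l$''. Each auxiliary assignment of $C$ is a genuine $C$-step with no counterpart in $C\setminus X$, and the constant $l$ is meant to pay for exactly these phantom steps so that $n$ steps of $C\setminus X$ are covered by $n+l$ steps of $C$. This accounting is clean for straight-line and branching code, where unfolding a conditional only discards auxiliary assignments, but it is delicate at loop unfoldings, where the rule $(LP)$ duplicates the body and hence its auxiliary assignments, so the static count can grow across a single real step. I would handle this with a two-level induction — an inner induction that first fires every currently enabled auxiliary assignment (decreasing both the budget and the count by one while leaving $C\setminus X$ and $n$ fixed), and the outer induction on $n$ for the remaining real steps — reconciling the loop case by appealing to safety of $C$ at all transition depths, which is exactly what the ambient Theorem~\ref{CSLtosafe} supplies. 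Getting this measure right, rather than the essentially mechanical verification of clauses (i)--(iii) and of the environment case, is where the argument must be made carefully.
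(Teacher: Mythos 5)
Your proposal reproduces the paper's own argument almost clause by clause: the same induction on $n$, the same treatment of clauses (i)--(iii) via $chng(C\setminus X)=chng(C)\setminus X$ and the skip/auxiliary-assignment case analysis, the same patched-store construction for the environment case, and the same simulation of one program step of $C\setminus X$ by a block of auxiliary assignments of $C$ followed by one real step. The obstacle you single out is exactly where the two proofs part ways, and you are right that it is the crux: the paper simply asserts that one step of $C\setminus X$ is matched by $j+1$ steps of $C$ with $j=l(C)-l(\hat{C}')$, and this identity is false at the loop rule $(LP)$, since unfolding $\whik{B}{C_b}$ into $\ifk{B}{\seqk{C_b}{\whik{B}{C_b}}}{\skipk}$ duplicates every auxiliary assignment of $C_b$, so $l(\hat{C}')>l(C)$ while the actual number of auxiliary steps fired is $j=0$. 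So what you call the ``main obstacle'' is in fact a defect in the paper's proof, not merely a delicate point.

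However, your proposed repair --- ``appealing to safety of $C$ at all transition depths, which is exactly what the ambient Theorem~\ref{CSLtosafe} supplies'' --- is not available under the hypotheses of the proposition, which grant $\mathit{Safe}_{n+l}(C,\ldots)$ for the single budget $n+l$ and nothing more. And no repair within those hypotheses can exist, because the statement with a fixed static count $l$ is false. Take $X=\{\textsf{x}\}$, $\Gamma$ empty, $Q\equiv\texttt{true}$, $A=\{\textsf{y},\textsf{z}\}$, $\rho=(\emptyset,\emptyset,\emptyset)$, $h=\emptyset$, $s(\textsf{y})=0$, and
$$C\ \equiv\ \seqk{\whik{\textsf{y}<2}{(\seqk{\assigk{x}{1}}{\assigk{y}{y+1}})}}{\loadk{z}{0}},$$
so that $l=1$. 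Each loop iteration of $C$ costs six transitions ($(LP)$, $(IF1)$, $(BCT)$ for $\assigk{x}{1}$, $(S1)$, $(BCT)$, $(S1)$) while the corresponding iteration of $C\setminus X$ costs five, so the deficit grows by one per iteration; counting through two iterations and the final faulting load, $C$ first aborts on its 16th transition while $C\setminus X$ aborts on its 14th. Environment transitions cannot accelerate either abort, since they preserve the local heap and cannot modify $\textsf{x},\textsf{y},\textsf{z}$. Hence $\mathit{Safe}_{15}(C,s,h,\rho,\Gamma,Q,A\cup X)$ is valid while $\mathit{Safe}_{14}(C\setminus X,s,h,\rho,\Gamma,Q,A)$ is not, refuting the proposition at $n=14$, $l=1$. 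What is true --- and all that the soundness of rule $(\mathrm{AUX})$ needs, since Theorem~\ref{CSLtosafe} supplies safety at every depth --- is the quantified reformulation: if $\mathit{Safe}_m(C,s,h,\rho,\Gamma,Q,A\cup X)$ holds for every $m\geq 0$, then $\mathit{Safe}_n(C\setminus X,s',h,\rho,\Gamma,Q,A)$ holds for every $n\geq 0$. Your two-level induction proves precisely this (with the small additional remark that the heap split produced by clause (iv) is independent of the depth $m$, which precision of the invariants guarantees). So the correct move is not to make the fixed-$l$ bookkeeping work --- it cannot be made to work --- but to restate the proposition in the quantified form and present your argument as a proof of that.
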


\begin{proof}

Let $C$ be a reachable command, $(s,h),(s',h)\in\mathcal{S}$, $\rho\in\mathcal{O}$, $Q$ an assertion, $A,X\subseteq\textbf{Var}$, $\Gamma$ a resource context and $l\in\mathbb{N}_0$ such that $X$ is a set of auxiliary variables for $C$, $l$ is the number of assignments to auxiliary variables in $C$, $FV(Q)\subseteq A$, $X\cap (PV(\Gamma)\cup FV(Q))=\emptyset$, $s'(x)=s(x)$, for every $x\notin X$ and $Safe_{n+l}(C,s,h,\rho, \Gamma, Q, A\cup X)$ is valid.

The proof is done by induction on $n$. If $n=0$, the conclusion is valid.

Let $n=k+1$.

Suppose that $C\setminus X = \skipk$. Then $C=\skipk$ or $C=\assigk{x}{e}$, where $\textsf{x}\in X$. 

First, we assume $C=\skipk$. From $Safe_{n+l}(\skipk,s,h,\rho, \Gamma, Q, A\cup X)$, we have $s,h\models Q$.
If $s(y)=s'(y)$, for every $y\in FV(Q)\subset \textbf{Var}\setminus X$, then $s',h\models Q$.

Now, we assume $C=\assigk{x}{e}$, $\textsf{x}\in X$.
Consider the transition given by $(BCT)$ $$C,(s,h,\rho)\rightarrow_p \skipk, (s'',h,\rho),$$ where $s(y)=s''(y)$, for every $y\notin X$. 

Then $Safe_{n+l-1}(\skipk,s'',h,\rho, \Gamma, Q, A\cup X)$ is valid and $n+l-1>0$. Hence $s'',h\models Q$ and  $s',h\models Q$.

The above leads to the first property of $Safe_{n}(C\setminus X,s',h,\rho, \Gamma, Q, A)$.

From $Safe_{n+l}(C,s,h,\rho, \Gamma, Q, A\cup X)$, we know that the execution of $C$ does not abort, for $(s,h,\rho)$. The command $C\setminus X$ substitutes assignments by $\skipk$, so the execution of $C\setminus X$ does not abort, for $(s,h,\rho)$.
 Changing the value of auxiliary variables $X$ does not introduce aborts in the execution of $C\setminus X$. Hence $$C\setminus X, (s',h,\rho)\not\rightarrow_{p} \abok.$$
The second property of $Safe_{n}(C\setminus X,s',h,\rho, \Gamma, Q, A)$ is verified.

Note that $chng(C\setminus X)=chng(C)\setminus X$. From $Safe_{n+l}(C,s,h,\rho, \Gamma, Q, A\cup X)$, $$chng(C\setminus X)\cap \bigcup_{r\in L\cup D} PV(r)\subseteq chng(C)\cap \bigcup_{r\in L\cup D} PV(r)=\emptyset.$$
This establishes the third property of $Safe_{n}(C\setminus X,s',h,\rho, \Gamma, Q, A)$.

Let $h_G$, $\hat{C}$, $\hat{s}$, $\hat{h}$ and $\hat{\rho}$ such that $h_G\bot h$, $s',h_G\models \underset{r\in D}{\circledast} \Gamma(r)$ and  
$$C\setminus X,(s',h\uplus h_G,\rho)\xrightarrow{A, \Gamma} \hat{C}, (\hat{s}, \hat{h},\hat{\rho}).$$

Suppose that it is a program transition. 

We observe that there exists $\hat{C}'$, $\hat{s}'$ and $j$ such that $\hat{C}'\setminus X=\hat{C}$, $\hat{s}'(x)=\hat{s}(x)$, for every $x\notin X$, $j=l(C)-l(\hat{C}')$ and
$$C,(s,h\uplus h_G,\rho)\rightarrow_{p}^{j+1} \hat{C}', (\hat{s}', \hat{h},\hat{\rho}).$$


From $Safe_{n+ l}(C,s,h,\rho, \Gamma, Q, A\cup X)$, we know that there is $h'_G\subseteq \hat{h}$ such that $Safe_{n+l -j-1}(\hat{C}',\hat{s}', \hat{h}\setminus h'_G,\hat{\rho},\Gamma, Q, A\cup X)$ is valid and $\hat{s}',h'_G\models \underset{r\in \hat{D}}{\circledast} \Gamma(r)$.
Then $\hat{s},h'_G\models \underset{r\in \hat{D}}{\circledast} \Gamma(r)$, because $\hat{s}'(x)=\hat{s}(x)$, for every $x\in PV(\Gamma)\subseteq\textbf{Var}\setminus X$.

Note that $j$ is the number of assignments to auxiliary variables in $\hat{C}'$.

By the induction hypothesis, we conclude $Safe_{k}(\hat{C}'\setminus X, \hat{s}, \hat{h}\setminus h'_G,\hat{\rho}, \Gamma, Q, A)$.

Last, suppose that it is an environment transition.
Let $$A'=A\cup \bigcup_{r\in Locked(C\setminus X)} PV(r).$$

We have $\hat{C}=C\setminus X$,  $(s',h,\rho)\stackrel{A'}{\leftrightsquigarrow} (\hat{s}, h,\hat{\rho})$, and there exists $h'_G\subseteq \hat{h}$ such that $\hat{h}=h\uplus h'_G$ and $$\hat{s},h'_G\models \underset{r\in \hat{D}}{\circledast} \Gamma(r).$$

Note that $Locked(C)=Locked(C\setminus X)$. And consider the storage $\hat{s}'$ such that $\hat{s}'(x)=\hat{s}(x)$, if $x\notin X$, and $\hat{s}'(x)=s(x)$, if $x\in X$ and $$A''=A\cup X\cup \bigcup_{r\in Locked(C)} PV(r).$$

We have the following environment transformation $(s,h,\rho)\stackrel{A''}{\leftrightsquigarrow} (\hat{s}', h,\hat{\rho})$,
and the environment transition
$$C, (s,h,\rho) \xrightarrow{A\cup X, \Gamma}_e C,(\hat{s}', h,\hat{\rho}).$$

By the previous transition, we obtain $Safe_{k+l}(C,\hat{s}',h,\hat{\rho}, \Gamma, Q, A\cup X)$.

Note that $\hat{s}'(x)=\hat{s}(x)$, for every $x\notin X$.  Therefore by the induction hypothesis, we have that  $Safe_{k}(C\setminus X, \hat{s}, h,\hat{\rho}, \Gamma, Q, A)$ is valid.
\end{proof}

The last proposition exhausts the inferences rules of CSL, completing the proof of Theorem~\ref{CSLtosafe}. 

\section{Conclusion}
This work presents a proof of correctness of CSL based on SOS, the first we are
aware of. We build on two previous proofs, one for
the full logic, using a denotational semantics based on traces, and
another for a fragment of CSL, the DCSL. An immediate extension to this work is its formalization in a theorem prover.


A proof based on SOS is important, as this form of
semantics closer mimics the execution of an imperative
program. Therefore, it paves the way to the development of more
expressive proving tools that are able to deal with truly concurrent
programs manipulating shared resources. 
Our work may also provide insight on how to develop provably correct compilers able of detecting data-races.

Our aim was lifting the (severe) restriction of forcing concurrent
threads to manipulate only disjoint sets of variables, since it does
not allow proving correct many interesting and useful programs. To
attain this goal, we re-used the notion of ``rely-set'' 
, a notion crucial
to obtain the soundness result of CSL with respect to the denotational
semantics. The adaptation was not trivial and required developing
several auxiliary notions, but established a proof technique that may
now be used in other contexts.



\section*{Appendix}\label{Appendix}

\begin{proof}[Proof of Proposition~\ref{safetymono}]
Let $C\in \mathcal{C}$, $(s,h)\in \mathcal{S}$, and $\rho\in\mathcal{O}$. Suppose $h_F$ is a heap such that $h\bot h_F$ and $$C,(s,h\uplus h_F,\rho)\rightarrow_{p} \abok.$$

We'll prove that $C,(s,h,\rho)\rightarrow_{p} \abok$ by induction on the relation, $\rightarrow_{p}$.

Suppose that the transition to $\abok$ is given by $(RA)$, $(WA)$ or $(WA2)$.  
Then the transitions does not depended on the heap and  $C,(s,h,\rho)\rightarrow_{p} \abok
$.

Suppose that it is given by $(BCA)$. Then there is a faulty memory access of $h\uplus h_F$ and, consequently, of $h$. Hence $C,(s,h,\rho)\rightarrow_{p} \abok$.

Suppose that it is given by $(SA)$.
Then $C=\seqk{C_1}{C_2}$ and $C_1, (s,h\uplus h_F,\rho)\rightarrow_p \abok$.

By induction, we know that $C_1, (s,h,\rho)\rightarrow_p \abok$. Hence $C, (s,h,\rho)\rightarrow_p \abok$.

The remaining cases are similar to the previous case.
\end{proof}

\begin{proof}[Proof of Proposition~\ref{frameprop}]

Let  $C,C'\in \mathcal{C}$, $(s,h), (s',h')\in\mathcal{S}$, $\rho,\rho'\in\mathcal{O}$ and $h_F$ such that $h\bot h_F$, $C,(s,h,\rho')\not\rightarrow_{p} \abok$ and $$C,(s,h\uplus h_F,\rho)\rightarrow_{p} C',(s',h',\rho').$$

We prove by induction on the program rules that $h_F$ is a subheap of $h'$ and $C,(s,h,\rho)\rightarrow_{p} C',(s',h'\setminus h_F,\rho')$.

Suppose that the transition is given by $(BCT)$. Because the Separation Logic respects the frame property, we know that $h_F$ is a subheap of $h'$ and $$C,(s,h,\rho)\rightarrow_{p} C',(s',h'\setminus h_F,\rho').$$

Suppose that the transition is given by $(S1)$, $(LP)$, $(IF1)$, $(IF2)$, $(R0)$, $(P3)$, $(W0)$ or $(W2)$. Then the transition neither depends nor changes the heap function. So we obtain that $h'$ and $$C,(s,h,\rho)\rightarrow_{p} C',(s',h'\setminus h_F,\rho').$$

Suppose that the transition is given by $(S2)$. Then $C=\seqk{C_{1}}{C_{2}}$ and $C'=\seqk{C'_{1}}{C_{2}}$ such that $$C_1,(s,h\uplus h_F,\rho)\rightarrow_{p} C'_1,(s',h',\rho').$$ 

If $C_1,(s,h,\rho)\rightarrow \abok$, then $C,(s,h,\rho)\rightarrow \abok$. Hence $$C_1,(s,h,\rho)\not\rightarrow \abok.$$

From the induction hypothesis, we conclude that $h_F\subseteq h'$ and $$C_1,(s,h,\rho)\rightarrow_{p} C'_1,(s',h'\setminus h_F,\rho').$$

Therefore $$C,(s,h,\rho)\rightarrow_{p} C',(s',h'\setminus h_F,\rho').$$

The cases $(P1)$, $(P2)$, $(R1)$, $(R2)$ and $(W1)$ are similar to the previous case.
\end{proof}

\begin{thebibliography}{10}

\bibitem{ben}
M.~Ben-Ari.
\newblock {\em Principles of concurrent and distributed programming(Second
  Edition)}.
\newblock Addison-Wesley, 2006.

\bibitem{boy}
J.~Boyland.
\newblock Checking interference with fractional permissions.
\newblock In {\em SAS}, volume 2694 of {\em Lecture Notes in Computer Science},
  pages 55--72. Springer, 2003.

\bibitem{bro1}
S.~Brookes.
\newblock A semantics for concurrent separation logic.
\newblock {\em Theoretical Computer Science}, 375(1-3):227--270, 2007.

\bibitem{bro2}
S.~Brookes.
\newblock A revisionist history of concurrent separation logic.
\newblock {\em ENTCS}, 276:5--28, 2011.

\bibitem{CDDGHLOP15}
C.~Calcagno, D.~Distefano, J.~Dubreil, D.~Gabi, P.~Hooimeijer, M.~Luca, P.~W.
  O'Hearn, I.~Papakonstantinou, J.~Purbrick, and D.~Rodriguez.
\newblock Moving fast with software verification.
\newblock In {\em {NASA} Formal Methods - 7th International Symposium, {NFM}
  2015, Pasadena, CA, USA, April 27-29, 2015, Proceedings}, pages 3--11, 2015.

\bibitem{hoar}
C.~A.~R. Hoare.
\newblock An axiomatic basis for computer programming.
\newblock {\em Communications of the ACM}, 12(10):576--580, 1969.

\bibitem{hearn}
P.~W. O'Hearn.
\newblock Resources, concurrency, and local reasoning.
\newblock {\em Theoretical Computer Science}, 375(1-3):271--307, 2007.

\bibitem{rey2}
P.~W. O'Hearn, J.~C. Reynolds, and H.~Yang.
\newblock Local reasoning about programs that alter data structures.
\newblock In {\em CSL}, volume 2142 of {\em Lecture Notes in Computer Science},
  pages 1--19. Springer, 2001.

\bibitem{owi}
S.~S. Owicki.
\newblock A consistent and complete deductive system for the verification of
  parallel programs.
\newblock In {\em STOC}, pages 73--86. ACM, 1976.

\bibitem{owigrie}
S.~S. Owicki and D.~Gries.
\newblock Verifying properties of parallel programs: An axiomatic approach.
\newblock {\em Communications of the ACM}, 19(5):279--285, 1976.

\bibitem{plot}
G.~D. Plotkin.
\newblock A structural approach to operational semantics.
\newblock {\em Journal of Logic and Algebraic Programming}, 60--61:17--139,
  2004.

\bibitem{red}
U.~S. Reddy and J.~C. Reynolds.
\newblock Syntactic control of interference for separation logic.
\newblock In {\em POPL}, pages 323--336. ACM, 2012.

\bibitem{rey}
J.~C. Reynolds.
\newblock Separation logic: A logic for shared mutable data structures.
\newblock In {\em LICS}, pages 55--74. IEEE Computer Society, 2002.

\bibitem{RR-DCC-11-2014}
P.~Soares, A.~Ravara, and S.~Melo de~Sousa.
\newblock An operational semantics for concurrent separation logic.
\newblock Technical Report RR-DCC-2014-11, Department of Computer Science,
  Faculty of Science, University of Porto, 2014.

\bibitem{pdp-4pad15}
P.~Soares, A.~Ravara, and S.~Melo de~Sousa.
\newblock Revisiting concurrent separation logic and operational semantics.
\newblock In {\em PDP}, pages 484--491. IEEE, 2015.

\bibitem{vaf}
V.~Vafeiadis.
\newblock Concurrent separation logic and operational semantics.
\newblock {\em ENTCS}, 276:335--351, 2011.

\bibitem{vaf2}
V.~Vafeiadis and M.~J. Parkinson.
\newblock A marriage of rely/guarantee and separation logic.
\newblock In {\em CONCUR}, volume 4703 of {\em Lecture Notes in Computer
  Science}, pages 256--271. Springer, 2007.

\end{thebibliography}
\end{document}